\theoremstyle{definition}
\newtheorem{proposition}{Proposition}
\newtheorem{corollary}{Corollary}
\newtheorem{definition}{Definition}
\newtheorem{lemma}{Lemma}
\newtheorem{theorem}{Theorem}
\newtheorem{eg}{Example}
\begin{document}

\title{On stability of k-local quantum phases of matter}

\author{Ali Lavasani}
\affiliation{Joint Quantum Institute, NIST/University of Maryland, College Park, MD 20742, USA}
\affiliation{Condensed Matter Theory Center, University of Maryland, College Park, MD 20742, USA}
\affiliation{Kavli Institute for Theoretical Physics, University of California, Santa Barbara, CA 93106, USA}

\author{Michael J. Gullans}
\affiliation{Joint Center for Quantum Information and Computer Science,
NIST/University of Maryland, College Park, MD 20742, USA}

\author{Victor V. Albert}
\affiliation{Joint Center for Quantum Information and Computer Science,
NIST/University of Maryland, College Park, MD 20742, USA}

\author{Maissam Barkeshli}
\affiliation{Joint Quantum Institute, NIST/University of Maryland, College Park, MD 20742, USA}
\affiliation{Condensed Matter Theory Center, University of Maryland, College Park, MD 20742, USA}
\affiliation{Department of Physics, University of Maryland, College Park, MD 20742, USA }

\maketitle

\begin{abstract}
The current theoretical framework for topological phases of matter is based on the thermodynamic limit of a system with geometrically local interactions. 
A natural question is to what extent the notion of a phase of matter remains well-defined if we relax the constraint of geometric locality, and replace it with a weaker graph-theoretic notion of $k$-locality. 
As a step towards answering this question, we analyze the stability of the energy gap to perturbations for Hamiltonians corresponding to general quantum low-density parity-check codes, extending work of Bravyi and Hastings~\href{https://doi.org/10.1007/s00220-011-1346-2}{[Commun.\@ Math.\@ Phys.\@ \textbf{307}, 609 (2011)]}.
A corollary of our main result is that if there exist constants $\varepsilon_1,\varepsilon_2>0$ such that the size $\Gamma(r)$ of balls of radius $r$ on the interaction graph satisfy $\Gamma(r) = O(\exp(r^{1-\varepsilon_1}))$ and the local ground states of balls of radius $r\le\rho^\ast = O(\log(n)^{1+\varepsilon_2})$ are locally indistinguishable, then the energy gap of the associated Hamiltonian is stable against local perturbations. 
This gives an almost exponential improvement over the \(D\)-dimensional Euclidean case, which requires $\Gamma(r) = O(r^D)$ and $\rho^\ast = O(n^\alpha)$ for some \(\alpha\).
The approach we follow falls just short of proving stability of finite-rate qLDPC codes, which have $\varepsilon_1 = 0$; we discuss some strategies to extend the result to these cases. 
We discuss implications for the third law of thermodynamics, as $k$-local Hamiltonians can have extensive zero-temperature entropy.  
\end{abstract}

\section{Introduction}

The last several decades in physics have yielded significant advances in our understanding of quantum phases of matter, in particular topological phases of matter \cite{wen04,nayak2008,wang2008,hasan2010,qi2011,bernevig2013topological,senthil2015,zeng2019quantum,barkeshli2019,sachdev2023quantum}. A fundamental aspect of the theory is an underlying assumption of geometric locality. That is, it is assumed that the microscopic degrees of freedom exist on a well-behaved discretization or in the continuum of a fixed Riemannian manifold, and the Hamiltonian is a sum of geometrically local operators. Phases of matter are then defined in the thermodynamic limit of infinite volume. Understanding how to characterize distinct phases is an ongoing research direction.

A natural question is to what extent we can generalize our theories by relaxing the requirement of geometric locality. Clearly we cannot do away completely with locality, as the problem then reduces to quantum mechanics in $0$ spatial dimensions, and the notion of a phase of matter breaks down. Instead we can consider relaxing geometric locality while still keeping a graph-theoretic notion of locality. For example, suppose that the Hilbert space breaks up into a tensor product of local Hilbert spaces, $\mathcal{H} = \otimes_i \mathcal{H}_i$, and the Hamiltonian is a sum of terms $H = \sum_i h_i$, where each $h_i$ has support on at most $k$ sites, irrespective of any notion of geometry.\footnote{We may also choose to require that each site appears in at most a constant $k'$ number of terms} We refer to such Hamiltonians as $k$-local. Can we now develop a theory of $k$-local phases of matter?
How does it relate to the more familiar case where we require geometric locality? For a given finite-size system, we can embed the interaction graph in a manifold such that the $k$-locality translates to geometric locality. Crucially, what makes the problem of $k$-local phases of matter distinct is that the thermodynamic limit does not correspond to an infinite volume limit of a fixed manifold with well-behaved geometry.  

There are two motivations for this question, one theoretical and one experimental. On the experimental side, there have been an increasing number of platforms, such as ultracold atoms in optical cavities, circuit QED, and ion traps, which can realize designer Hamiltonians with long-range interactions \cite{kollar2019hyperbolic,periwal2021programmable,bluvstein2022quantum,iqbal2023topological,iqbal2024non,xu2024constant,bluvstein2024logical}. Arbitrarily long-range interactions allow us to relax the geometric locality requirement and to consider more general $k$-local Hamiltonians. 

On the theoretical side, it is well-known that topological phases of matter are deeply related to \eczoohref[quantum error-correcting codes]{topological} \cite{kitaev2003fault}. Families of quantum codes can be mapped to families of Hamiltonians. In particular it is possible to have families of finite-rate codes, where the number of logical qubits scales linearly with the number of physical qubits $n$. Moreover, in recent years there have been important breakthroughs such as the discovery of \eczoohref[asymptotically good quantum low-density parity-check (qLDPC) codes]{good_qldpc}, which not only have finite rate, but also code distance $d$ scaling linearly with $n$ \cite{panteleev2022asymptotically,dinur2023good,lin2022good,leverrier2022quantum}. A natural question to ask is whether and when distinct families of quantum codes correspond to distinct $k$-local phases of matter. We note that recently, Refs. \cite{rakovszky2023physics,rakovszky2024physics} also raised a similar question in viewing general LDPC codes from a physics perspective.

Finite rate quantum codes by definition have corresponding Hamiltonians whose ground state degeneracy increases exponentially with the system size $n$. Thus there is an \it extensive \rm zero-temperature entropy $S$. Can such an extensive zero-temperature entropy be robust to perturbations?\footnote{The SYK model \cite{sachdev1993gapless,sachdev2015bekenstein,kitaev2015simple,maldacena2016remarks} is an example of a $k$-local system with an extensive zero-temperature entropy, however it is unstable to perturbations by $2$-fermion terms.} Such a possibility violates Planck's formulation of the third law of thermodynamics in the strongest way possible \cite{wilks1961third,masanes2017general}, raising an intriguing question of whether any formulation of the third law can only be true for geometrically local systems.

A phase is generally characterized by a set of universal properties which are independent of the microscopic details of a typical system inside that phase.
A distinctive characteristic of geometrically local Hamiltonians which describe a topological phase of matter is that local perturbations cannot lift the ground state degeneracy or close the spectral gap,  up to corrections that vanish exponentially in the thermodynamic limit. This is remarkable, because for a generic local perturbation $V=\sum_i V_i$ where $V_i$ has some finite norm $J$ and acts on the vicinity of the $i$'th qubit,  $\norm{V}$ scales with the system size $N$ while the gap is a constant $O(1)$, but nonetheless, the perturbation $V$ cannot close the gap. This observation, which is known as the stability of topological Hamiltonians \cite{bravyi2010topological,bravyi2011short}, can be understood as a consequence of the fact that there exists a stable quantum field theory at long wavelengths that gives a well-defined description of the low-energy physics of geometrically local topological systems in the thermodynamic limit \cite{witten1989quantum,wen1990ground,wen04}. In this work, we investigate whether a similar notion of stability could exist for $k$-local Hamiltonians, where we apparently can no longer use quantum field theory as a guide.  

The robustness of the ground state degeneracy in a topological phase of matter can be understood heuristically as a consequence of local indistinguishability of the corresponding ground states \cite{kitaev2003fault}. This means that for an operator to be able to distinguish between different ground states it has to have a non-trivial support on $\mathcal{O}(L)$ qubits, where $L$ is the linear system size. As such, any local perturbation $V$ has zero matrix elements between different ground states up to $\mathcal{O}(L)$-th order in perturbation theory.
Hence, such a perturbation can only lift the ground state degeneracy by at most $\mathcal{O}(J^L)$, which goes to zero in the thermodynamic limit $L\to \infty $ for small $J$. This heuristic argument was made rigorous  in Refs. \cite{bravyi2010topological,bravyi2011short} for exactly solvable topological Hamiltonians on Euclidean lattices and subsequently generalized to include more generic topological Hamitonians \cite{michalakis2013stability,nachtergaele2019quasi,nachtergaele2022quasi}. 

Returning to the question of the stability of $k$-local Hamiltonians, similar reasoning as the heuristic argument above suggests the possibility of stable $k$-local phases of matter. First we note that in the absence of an underlying geometry, the notion of \textit{local} indistinguishability can be naturally replaced by the notion of low-weight indistinguishability, meaning that any operator that is supported on a constant number of qubits should not be able to distinguish between different ground states. Interestingly, the $k$-local Hamiltonians associated to qLDPC codes naturally satisfy low-weight indistinguishability. This is because the low-weight indistinguishability is nothing but the Knill-Laflamme condition for quantum error correction \cite{PhysRevLett.84.2525}. The same line of reasoning then seems to suggest that low-weight perturbations cannot lift the ground state degeneracy of qLDPC Hamiltonians in the thermodynamic limit. Nonetheless, there are subtleties to consider which are unique to the $k$-local problem. For example, the ground state degeneracy of $k$-local Hamiltonians is exponentially large if the Hamiltonians are associated to constant rate qLDPC codes. 
Therefore, even if the matrix elements of the effective perturbation in the ground-state subspace is exponentially suppressed, the ground-state degeneracy might still be lifted in the thermodynamic limit. As such, a more rigorous treatment of the stability question in the $k$-local case is called for.

In this work, we investigate to what degree the rigorous treatment of topological stability in  Ref. \cite{bravyi2011short} can be generalized to the $k$-local case. To this end we follow the same proof strategy as \cite{bravyi2011short}, but we relax the assumption of an underlying Euclidean geometry. Moreover, we keep track of the scaling of different parameters in order to be able to have quantitative estimates on the perturbation strength thresholds for stability in the thermodynamic limit. We show that the stability of a $k$-local Hamiltonian $H$ is closely related to the properties of its interaction graph $G$. In particular, we find that if there exist constants $\varepsilon_1, \varepsilon_2>0 $ such that the size of balls of radius $r$ on the interaction graph is upper bounded by $O(\exp(r^{1-\varepsilon_1}))$ and balls of radius $O(\log(n)^{1+\varepsilon_2})$ are locally indistinguishable (see Section \ref{sec_notation} for precise definitions), then the associated Hamiltonian is stable against local perturbations on the same graph. The more precise statement of our main result is presented in Theorem \ref{th_main}.

As a non-trivial example of a non-euclidean qLDPC code whose stability can be inferred from this theorem, we analyze the families of semi-Hyperbolic surface codes with code parameters $\llbracket N, \Theta(N^{1-\varepsilon}), \Theta(N^{\frac{\varepsilon}{2}}\,\log N)\rrbracket$ for a fixed $0<\varepsilon<1$. By tuning $\varepsilon$, This code can be tuned to be anywhere between the hyperbolic code (with $\Gamma(r)\sim \exp(r)$) and the toric code (with $\Gamma(r)\sim r^2$). Moreover, the thermodynamic limit $N \rightarrow \infty$ does not correspond to the infinite volume limit of a fixed manifold, so it cannot be understood through the lens of geometrically local topological phases. We show that all these Hamiltonians (excluding the hyperbolic code at $\varepsilon = 0$) have robust ground state degeneracy and robust gap above the ground state. Interestingly, we find that the perturbation strength threshold for the stability of the Hamiltonians in the thermodynamic limit approaches a constant independent of $\varepsilon$ as $\varepsilon \to 0$. 

The rest of the paper is structured as follows. In Section \ref{sec_notation} we introduce the notation and basic definitions that we are going to use throughout the paper. We also mention few general Lemmas which will be used in the course of the proof. Section \ref{sec_main_result} contains the rigorous statement of the theorem about the stability of the gap and robustness of the ground state degeneracy in \(k\)-local non-trivial Hamiltonians. The technical proof of this theorem is laid out in Section \ref{sec_proof}. In Section \ref{sec_example_semihyperbolic} we show how one can apply our theorem to infer the stability of semi-Hyperbolic surface code Hamiltonian. Section \ref{thirdlawsec} discusses our results in light of the third law of thermodynamics. Appendix \ref{apx_filterfunction} reviews the construction of filter functions which are used to define the quasi-adiabatic continuation operator. And lastly, in Appendix \ref{apx_LR} we review and prove many lemmas which are related to Lieb-Robinson type bounds and quasi-adiabatic continuation for \(k\)-local Hamiltonians.  


\section{Notation, definitions and basic lemmas}\label{sec_notation}
For a given graph $G$, let $B_u(r)$ denote the ball of radius $r$ around a site $u$,
\begin{align}
    B_u(r)=\{v\in G| d(v,u)\le r \},
\end{align}
where $d(u,v)$ is the graph distance. In an abuse of notation, here and throughout this work, we use $G$ to denote the graph as well as its set of vertices. 
We use $\Gamma(r)$ to denote an upper bound on the size of all balls of radius $r$, i.e.
\begin{align}
    |B_u(r)| \le \Gamma(r)\qquad \text{for all }u
\end{align}
For example, in a $d$ dimensional Euclidean lattice, $\Gamma(r)=c~r^d$, while in an expander graph, such as a uniform tiling of the hyperbolic plane,  $\Gamma(r)=c~\exp(\alpha~r)$ (see \cite[Fig.~3]{breuckmann2017} for a visualization).

The diameter of a subset $A$, denoted by $\text{diam}(A)$, is the maximum graph distance between any two points in $A$. We use $D_G$ to denote the diameter of the whole graph. For a given subset $A$ we define its $r$-neighborhood, denoted by $b_r(A)$, to be,
\begin{align}
    b_r(A)=\bigcup_{u\in A}B_u(r).
\end{align}

A local commuting projector Hamiltonian $H_0$ is specified by a graph $G$ and a set of commuting projectors $Q_u$ such that
\begin{align}\label{eq_local_projector_hamiltonian_form}
    H_0=\sum_{u \in G} Q_u,
\end{align}
and each $Q_u$ has only support on $B_u(\xi)$ for some constant range $\xi$. For simplicity, in the rest of this work we assume $\xi=1$, but it is straightforward to see all the results hold for any constant $\xi$ with minimal modifications. 
For later reference, we define the projections into the global and local ground state subspaces of $H_0$ as 
\begin{align}
    &P=\prod_{u\in G} (I-Q_u), \quad Q:=I-P\\
    &P_A=\prod_{\substack{u: B_u(1)\subseteq A}}
         (I-Q_u), \quad Q_A:=I-P_A,
\end{align}
where $I$ denotes the identity. 
$A$ could be any subset of the vertices but we mostly consider cases where $A=B_u(r)$ for some $u$ and $r$. 

We are interested in how the spectral gap of a local commuting projector Hamiltonian $H_0=\sum_u Q_u$ behaves upon perturbing $H_0$ by a perturbation $V=\sum_u V_u$. The \textbf{spectral gap} of $H=H_0+V$ is defined to be the energy difference between $(M+1)$-th and $M$-th lowest eigenstates of $H$, where $M$ is the ground state degeneracy of $H_0$. We need to impose some notion of locality on $V$, since without restricting the locality or the strength of $V_u$, the gap is clearly unstable. One may restrict $V_u$ to be local with respect to the interaction graph defined by $H_0$. While this is a reasonable choice, it makes the results inapplicable to interesting cases such as toric code under $k$-local perturbations.  Another choice then would be to just limit $V_u$ to be $k$-local. It turns out that the stability result we are able to prove is highly dependent on the interaction graph induced by $V$, and only assuming $k$-locality of $V$ is insufficient for proving our result. Therefore,  in this work we assume that the input to the problem includes a graph $G$ and we state the stability result with respect to perturbations which are local with respect to $G$. Locality with respect to a graph is defined as the following,

\begin{definition}\label{def_local_prtb_on_graph}\textbf{Local perturbation on a graph:} Given a graph $G$, we say the operator $V_u$ is \textit{localized} around $u$, has strength $J$ and tail $f(r)$ if one can write $V_u$ as 
\begin{equation}
V_u=\sum_{r\ge 1} V_{u,r}    
\end{equation}
such that $V_{u,r}$ is only supported on $B_{u}(r)$ -- with $B_u(r)$ denoting the ball of radius $r$ around $u$ on the graph $G$ -- and $\norm{V_{u,r}}\le J~f(r)$. We say the operator $V$ has strength $J$ and tail $f(r)$ if $V=\sum_{u\in G}V_u$ where $V_u$ is localized around $u$, has strength $J$ and tail $f(r)$. 
\end{definition}

In Definition \ref{def_local_prtb_on_graph} and throughout the text, $\norm{O}$ denotes the spectral norm of operator $O$. We require $H_0$ to be local with respect to $G$, but $G$ can have extra edges compared to the interaction graph of $H_0$, to allow for more general types of perturbations $V$, e.g. toric code with $k$-local perturbations. As we will show, the stability of the spectral gap of $H_0$ with respect to local perturbations on $G$ depends on how fast the size of neighborhoods in $G$ grow as a function of the distance.

Lastly, the proof utilizes the notion of globally block diagonal perturbations and locally block diagonal perturbations, which are defined as follows. 
\begin{definition}
\textbf{Globally block diagonal and locally block diagonal perturbations:}
Let $V=\sum_u V_u$ and $V_u=\sum_r V_{u,r}$, where $V_{u,r}$ is only supported on $B_u(r)$. We say $V$ is \textit{globally block diagonal} with respect to $H_0$ if $[V_u,P]=0$ for all $u$. We say $V$ is \textit{locally block diagonal} with respect to $H_0$ if $[V_{u,r},P]=0$ for all $u$ and $r$. 
\end{definition}

\subsection{Locality bounds for unitary evolution in general graphs}

The proof uses exact quasi-adiabatic continuation \cite{hastings2005quasiadiabatic,osborne2007simulating} ---  a type of unitary evolution --- to transform generic Hamiltonians to simpler forms. The following Lemmas are about the locality properties of the transformed operators. The proofs can be found in the Appendix \ref{apx_LR}. 
\begin{lemma}\label{lm_loc_decom_filter}
Let $H$ be a Hamiltonian of strength $J_1$ and tail $ e^{-\mu r}$. Let $F(t)$ be a real function defined on $\mathbb{R}$ and assume that for any $t\ge 0$,  $\int_{|t'|>t}\dd t' \,|F(t')|\le c_1\,\eta_a(t)$, for constants $c_1$ and $a$, where
\begin{align}\label{eq_eta_a_definition}
    \eta_a(x)=\exp \qty[-a \frac{x}{\ln^2(e^2+x)}].
\end{align}
Let $V_u$ be an operator localized around $u$, with strength $J$ and tail $f(r)$. Define $\widetilde{V}_u$  as,
\begin{align}
    \widetilde{V}_u=\int_{-\infty}^{\infty}\dd t\, F(t)\, e^{i\,H\,t}~V_u~e^{-i\,H\,t}.
\end{align}
Then $\widetilde{V}_u$ is localized around $u$, has strength $J$ and has tail $\tilde{f}(r)$ given as
\begin{align}
    \tilde{f}(r)=c\, r\, \Gamma(r) \max\{f(r/2),\eta_{a}(r/(4v))\},
\end{align}
for a constant $c$ which depends on $c_1$, $a$, $\mu$ and $v$, where $v$ is the following
\begin{align}
    v=\frac{8 J_1}{\mu}\sum_{r\ge 1}\Gamma(r)^2 e^{-\frac{\mu}{2}r}.
\end{align}
\end{lemma}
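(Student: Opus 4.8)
The plan is to run the quasi-adiabatic (exact) continuation localization argument of Bravyi--Hastings on the general graph $G$, using the $k$-local Lieb--Robinson bounds of Appendix~\ref{apx_LR} to supply the velocity $v$. First I would invoke the hypothesis on $V_u$ to write $V_u=\sum_{s\ge1}V_{u,s}$ with $V_{u,s}$ supported on $B_u(s)$ and $\norm{V_{u,s}}\le Jf(s)$. For a region $X$ let $\Pi_X$ denote the canonical local approximation onto $X$, i.e.\ $\Pi_X(O)=\int\!\dd U\,(I_X\otimes U)\,O\,(I_X\otimes U^\dagger)$ with $U$ Haar-distributed on $\mathcal{H}_{X^c}$ (equivalently a normalized partial trace over $X^c$); this map is completely positive and unital, so $\norm{\Pi_X(O)}\le\norm{O}$, and $\Pi_X(O)=O$ whenever $\mathrm{supp}(O)\subseteq X$. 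The Lieb--Robinson bound for $H$ (strength $J_1$, tail $e^{-\mu r}$) then gives, for $\ell\ge0$, $\norm{e^{iHt}V_{u,s}e^{-iHt}-\Pi_{B_u(s+\ell)}(e^{iHt}V_{u,s}e^{-iHt})}\le c'\,\Gamma(s)\,\norm{V_{u,s}}\,e^{-\mu'(\ell-v|t|)}$, with $v=\tfrac{8J_1}{\mu}\sum_{r\ge1}\Gamma(r)^2e^{-\mu r/2}$ precisely because this is the series that governs convergence of the LR expansion. Setting $\widetilde V_u(\rho)=\int_{-\infty}^{\infty}F(t)\,\Pi_{B_u(\rho)}(e^{iHt}V_ue^{-iHt})\,\dd t$ --- a bounded operator supported on $B_u(\rho)$, with $\widetilde V_u(\rho)\to\widetilde V_u$ as $\rho\to\infty$ by dominated convergence --- I would put $\widetilde V_{u,1}=\widetilde V_u(1)$ and $\widetilde V_{u,r}=\widetilde V_u(r)-\widetilde V_u(r-1)$ for $r\ge2$, so $\widetilde V_u=\sum_{r\ge1}\widetilde V_{u,r}$ with $\widetilde V_{u,r}$ supported on $B_u(r)$; it remains to bound $\norm{\widetilde V_{u,r}}$.

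To estimate $\norm{\widetilde V_{u,r}}$ I would split the time integral at $|t|=T_r:=r/(4v)$ and split the perturbation as $V_u=V_u^{\le r/2}+V_u^{>r/2}$ with $V_u^{\le r/2}=\sum_{s\le r/2}V_{u,s}$ supported on $B_u(r/2)$. For $|t|>T_r$ the filter hypothesis gives $\int_{|t|>T_r}|F(t)|\,\dd t\le c_1\,\eta_a(T_r)=c_1\,\eta_a(r/(4v))$, and bounding $\Pi_{B_u(r)}(\cdot)-\Pi_{B_u(r-1)}(\cdot)$ in norm by $2\norm{V_u}$ yields a contribution $\lesssim J\,\eta_a(r/(4v))$. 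For $|t|\le T_r$ the evolved $V_u^{\le r/2}$ sits on $B_u(r/2+v|t|)\subseteq B_u(3r/4)$ up to a Lieb--Robinson error that at the scales $r-1,r$ is suppressed by $e^{-\mu'(r/4)}$ and is dominated by the other terms, while the piece $V_u^{>r/2}$ contributes at most $\norm{F}_1\sum_{s>r/2}\norm{V_{u,s}}\lesssim J\sum_{s>r/2}f(s)\lesssim J\,f(r/2)$ (the last step using the decay of $f$, or else folding $\sum_{s>r/2}f(s)$ into the $O(r)$ prefactor below). Collecting these, and observing that forming the telescoping difference at radius $r$ costs a $\Gamma(r)$ volume factor from the LR bound together with an $O(r)$ factor from summing over the scales $s\le r$ (or over dyadic shells), I obtain $\norm{\widetilde V_{u,r}}\le J\cdot c\,r\,\Gamma(r)\max\{f(r/2),\eta_a(r/(4v))\}$; hence $\widetilde V_u$ is localized around $u$, has strength $J$, and has tail $\tilde f(r)$ as stated, with $c$ absorbing $c_1$, $\norm{F}_1\le c_1\,\eta_a(0)=c_1$, the LR constants, $\mu$ and $v$.

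The main obstacle is the Lieb--Robinson step on a general graph: one must prove that the series defining $v$ converges and that the local-approximation error of $\Pi_{B_u(s+\ell)}$ decays fast enough in $\ell-v|t|$ while paying only a $\Gamma$-type volume prefactor --- not one growing with $n$ --- uniformly in the starting radius $s$. This is exactly what the lemmas of Appendix~\ref{apx_LR} deliver, and I would simply cite them; given those, the remainder is organizational: choosing the cutoff $T_r\propto r/v$ and the split at radius $r/2$ so that both the LR error and the filter tail are bounded by $\max\{f(r/2),\eta_a(r/(4v))\}$, and verifying that the telescoping decomposition incurs only the stated $r\,\Gamma(r)$ prefactor rather than an uncontrolled sum over scales.
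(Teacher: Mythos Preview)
Your approach is essentially the paper's: the same Lieb--Robinson bound with velocity $v$ (Lemma~\ref{lm_LR_bound_for_exp_tail}), the same split of the time integral at $|t|\sim r/v$ to convert the filter tail into the $\eta_a$ decay (this is Lemma~\ref{lm_LR_bound_for_filters}), and the same Haar-average telescoping combined with the support split at radius $r/2$ (Lemma~\ref{lm_LR_local_decom}). The paper packages these three steps as modular lemmas and simply composes them; you run the argument in one pass.

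One small technical point: bounding the far piece $V_u^{>r/2}$ by $\|F\|_1\sum_{s>r/2}f(s)$ and then ``folding $\sum_{s>r/2}f(s)$ into the $O(r)$ prefactor'' is not valid for arbitrary decreasing $f$ --- take $f(s)=1/(s\ln^2 s)$, where $\sum_{s>r/2}f(s)\sim 1/\ln r$ while $r\,f(r/2)\sim 1/\ln^2 r$. The paper sidesteps this by telescoping each $\mathcal{L}(V_{u,\rho})$ separately, \emph{starting at radius $\rho$} rather than at radius $1$; then only the scales $\rho\le r$ contribute to $\widetilde V_{u,r}$, the sum has at most $r$ terms, and the factor $r\,f(r/2)$ falls out directly. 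This is a cosmetic rearrangement of your argument and irrelevant in the applications (where $f$ is exponential or $\eta$-type), but it is the reason the paper delivers exactly the stated tail $c\,r\,\Gamma(r)\max\{f(r/2),\eta_a(r/(4v))\}$ without side hypotheses on $f$.
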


\begin{lemma}\label{lm_loc_decom_eta_unitary}
Let $V_u$ be an operator localized around $u$ with strength $J$ and tail $f(r)$. Let $H_s$ be a Hamiltonian of strength $J_1$ and tail $c_1\, r\, \Gamma(r)\, \eta_a(r/b)$ for all $s\in[0,1]$ for constants $a$, $b$ and $c_1$. Let $U_t$ be the evolution operator under $H_s$ from $s=0$ to $s=t\le 1$. Then $U_t^\dagger V_u U_t$ is localized around $u$, has strength $J$ and has tail $\tilde{f}$ given as,
\begin{align}
    \tilde{f}(r)=c\, r\, \Gamma(r)\max\{f(r/2),\eta_{a/2}(r/(4b))\},
\end{align}
for a constant $c$ which depends on $a$, $b$, $c_1$ and $v$, where $v$ is the following Lieb-Robinson \cite{lieb1972finite} velocity:
\begin{align}
    v=c_1\,J_1\,\sum_r r\Gamma(r)^{3}~ \eta_{a/2}(r/b).
\end{align}
\end{lemma}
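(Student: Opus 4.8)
The plan is to follow the standard Lieb-Robinson strategy for quasi-adiabatic evolution, adapted to a general graph with ball-growth function $\Gamma(r)$. First I would set up the Heisenberg evolution $V_u(t) = U_t^\dagger V_u U_t$ and write its defining differential equation $\frac{d}{dt} V_u(t) = i\,U_t^\dagger [H_t, V_u] U_t$. To control locality, I would decompose $V_u = \sum_r V_{u,r}$ using the assumed tail, and for each piece track how far its support can spread under the evolution. The key object is the commutator $[H_t, V_{u,r}]$: only the terms of $H_t$ whose support overlaps $B_u(r)$ contribute, and since $H_t$ has strength $J_1$ and tail $c_1 r \Gamma(r)\eta_a(r/b)$, the contribution from terms reaching out to radius $r'$ is bounded by a sum over sites in $B_u(r)$ of the tail at the relevant scale, picking up a factor of $|B_u(r)| \le \Gamma(r)$ (or $\Gamma(r)$ times a polynomial, absorbed into the prefactor) for the number of such terms. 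This is where $\Gamma$ enters multiplicatively.

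Next I would integrate this differential inequality. Define $\epsilon_{u,r}(t)$ to be the norm of the part of $V_u(t)$ that has leaked outside $B_u(r)$ (more precisely, the best approximation error of $V_u(t)$ by an operator supported on $B_u(r)$). A Duhamel/triangle-inequality argument gives a recursive bound: the growth rate of $\epsilon_{u,r}$ at time $t$ is controlled by $\epsilon_{u,r'}(t)$ for $r' < r$ together with the source term coming from the tail of $H_t$ itself. Summing the geometric-type series that results — the combinatorial factors $r\Gamma(r)^3$ appearing in the velocity $v$ come from counting nested overlapping supports — and using that $\eta_a$ satisfies convenient composition/convolution inequalities (of the type $\eta_a \ast \eta_a \lesssim \eta_{a/2}$, which is why the exponent halves from $a$ to $a/2$ and the velocity rescales the argument), yields a bound of the form $\epsilon_{u,r}(t) \le c\, r\,\Gamma(r)\, \eta_{a/2}(r/(4b))$ up to the Lieb-Robinson velocity $v$. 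Combining the leaked part with the original tail $f$ evaluated at half the radius (the factor-of-two split between "how far $V_u$ was already spread" and "how far it then leaked") produces the claimed $\tilde f(r) = c\, r\, \Gamma(r)\max\{f(r/2), \eta_{a/2}(r/(4b))\}$.

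I expect the main obstacle to be making the iteration converge with explicit, graph-independent constants: unlike the Euclidean case where $\Gamma(r)$ is polynomial and sums converge trivially, here $\Gamma$ can be sub-exponential, so one must verify that the series defining $v$ — namely $c_1 J_1 \sum_r r\Gamma(r)^3 \eta_{a/2}(r/b)$ — actually converges, which constrains how $\eta$ must dominate $\Gamma$ and is precisely why the $\eta_a$ function with its $\ln^2$-corrected exponent was chosen in \eqref{eq_eta_a_definition}. The bookkeeping of which $\eta$-exponent and which argument-rescaling survives each step of the recursion is delicate but routine once the right decay function is in hand; I would lean on the properties of $\eta_a$ established in Appendix \ref{apx_filterfunction} and the Lieb-Robinson machinery of Appendix \ref{apx_LR}, and on Lemma \ref{lm_loc_decom_filter} as a structural template, since the present lemma is its analogue for evolution under a time-dependent Hamiltonian rather than integration against a filter function.
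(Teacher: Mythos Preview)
Your plan is correct in its broad strokes---a Lieb--Robinson bound for the evolved operator followed by conversion to a local decomposition---and this is precisely the content of the paper's proof. However, the paper organizes the argument much more modularly than you propose. Rather than running the Duhamel iteration and the partial-trace decomposition by hand for this specific lemma, the paper factors the work into two standalone results proven earlier in Appendix~\ref{apx_LR}: Lemma~\ref{lm_LR_bound_for_eta_tail}, which gives the Lieb--Robinson bound $\norm{[U_t^\dagger A U_t, B]} \le 2\norm{A}\norm{B}|X|\,e^{2v|t|}\,\eta_{a/2}(d(X,Y)/(2b))$ for any local $A$ (itself an instance of the general Proposition~\ref{prp_LR_general} with $F(r)=\eta_{a/2}(r/(2b))$), and Lemma~\ref{lm_LR_local_decom}, which converts any such commutator bound into a tail estimate via the usual partial-trace telescoping. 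With these in hand, the proof of the present lemma is three lines. Your approach would reproduce exactly this content, just without the intermediate abstractions; the paper's packaging pays off because the same two lemmas are reused for Lemma~\ref{lm_loc_decom_filter} and throughout Section~\ref{sec_proof}.

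One small technical correction: the halving $a\to a/2$ does not come from a convolution estimate of the type $\eta_a\ast\eta_a\lesssim\eta_{a/2}$. The relevant property is that $\eta_a$ is \emph{submultiplicative}, $\eta_a(x)\eta_a(y)\le\eta_a(x+y)$, which is what allows $\eta_{a/2}(\cdot/(2b))$ to serve as the decay function $F$ in Proposition~\ref{prp_LR_general}. The exponent is halved so that the ratio $\norm{h_Z}/F(\text{diam}(Z))$---with the numerator carrying an $\eta_a$ and the denominator an $\eta_{a/2}$---still decays like $\eta_{a/2}$ and hence remains summable against the $\Gamma(r)^{2+n}$ factors that define $v$. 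The extra factor of $2$ in the argument ($r/(4b)$ rather than $r/(2b)$) then enters in Lemma~\ref{lm_LR_local_decom} from splitting the radius between the original support and the spread, exactly the ``half-radius'' split you anticipated.
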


\subsection{Indistinguishability of local ground states}

Local indistinguishability of ground states ensures that local perturbations can only lift the ground state degeneracy up to small corrections that vanish in the thermodynamic limit. However, to have a stable gap, one needs to also make sure that the energy shift of low-lying excited states in response to a local perturbation does not differ significantly from that of the ground states, since otherwise the energy of a low-lying excited state could drop below the energy of the perturbed ground states, thus closing the gap. One way to ensure this is to require that local ground states look the same as the global ground states. Note that the low-lying excited states of a commuting projector Hamiltonian satisfy $Q_u\ket{\psi}=0$ almost everywhere except at a few points where the projectors are violated, and hence they are local ground states of any patch of the Hamiltonian that does not contain those points. Local indistinguishability of local ground states from the global ground states then ensures that a local perturbation, for the most part, shifts their energy as much as it shifts the ground state energy and thus keeps the gap open. The following definition make this notion precise (see also Definition 4 of Ref.\cite{michalakis2013stability}):
\begin{definition}
\textbf{Locally indistinguishable region:}
Let $H_0=\sum_{u} Q_u$ be a local commuting projector Hamiltonian. A region $A$ is said to be \textit{locally indistinguishable} if for any operator $O$ solely supported on $A$ we have 
\begin{align}\label{eq_def_loc_indist}
    P_B O P_B=c P_B,
\end{align}
where $B=b_1(A)$ and $c$ is a complex number.
\end{definition}
Note that Eq.\eqref{eq_def_loc_indist} implies that the local ground states on $A$, i.e. states $\ket{\psi}$ that satisfy $Q_u\ket{\psi}=0$ for any $Q_u$ with non-trivial support on $A$, have the same reduced density matrix on $A$ as that of the global ground states, and hence are locally indistinguishable from them. This implies the following lemma, which has been proved in Ref.\cite{bravyi2011short} for Euclidean lattices. The same proof works on general graphs, and is included in Appendix \ref{apx_proof_of_lm_lcgc} for completeness. 
\begin{lemma}\label{lm_lcgc}
Let $A$ be a locally indistinguishable region, and let $B$ be its 1-neighborhood, i.e. $B=b_1(A)$. Let $O$ be an operator supported on $A$. Then for any region $C$ enclosing $B$, i.e. $B\subseteq C$, we have
    \begin{align}
        \norm{O P}=\norm{O P_C}
    \end{align}
\end{lemma}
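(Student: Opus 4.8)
The plan is to reduce the claim $\norm{OP} = \norm{OP_C}$ to the local indistinguishability condition $P_B O P_B = c\,P_B$ by carefully relating the global ground-state projector $P$ to the local one $P_C$. First I would record the elementary operator-algebra facts: since all the $Q_u$ commute, $P$ and $P_C$ are commuting projectors with $P \le P_C$ (i.e. $P P_C = P_C P = P$), because the product defining $P$ includes all the factors $(I-Q_u)$ appearing in $P_C$ plus possibly more. Hence we can factor $P = P_C P' = P' P_C$ where $P' = \prod_{u:\,B_u(1)\not\subseteq C}(I-Q_u)$ is the projector onto the remaining constraints; $P'$ commutes with $P_C$ and also with $O$ and with $P_B$, since $O$ is supported on $A$, $P_B$ is supported on $B = b_1(A) \subseteq C$, and each factor of $P'$ is supported outside $A$ (indeed outside $B$, using $\xi=1$ and $B \subseteq C$ so that any $Q_u$ with $B_u(1)\not\subseteq C$ has support disjoint from $A$).

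Next I would establish the two norm inequalities whose combination gives equality. One direction, $\norm{OP} \le \norm{OP_C}$, is immediate from $P = P_C P$ and submultiplicativity: $\norm{OP} = \norm{OP_C P} \le \norm{OP_C}\,\norm{P} = \norm{OP_C}$. The other direction, $\norm{OP_C} \le \norm{OP}$, is where local indistinguishability enters. The standard trick from Bravyi--Hastings is to bound $\norm{OP_C}^2 = \norm{P_C O^\dagger O P_C}$ and use that $O^\dagger O$ is again supported on $A$, so by the indistinguishability hypothesis $P_B O^\dagger O P_B = c' P_B$ for some scalar $c' \ge 0$ (nonnegativity because the left side is a positive operator). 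The key observation is that $P_C = P_B P_C = P_C P_B$ — because $P_B$ collects a subset of the constraints that appear in $P_C$, as $B = b_1(A) \subseteq C$ guarantees every $Q_u$ with $B_u(1)\subseteq B$ also has $B_u(1)\subseteq C$ — so that $\norm{OP_C}^2 = \norm{P_C P_B O^\dagger O P_B P_C} = \norm{P_C (c' P_B) P_C} = c'\,\norm{P_C} = c'$. Then I would compute $c'$ intrinsically: $c' = c'\,\norm{P} = \norm{P c' P}$... more carefully, apply the same identity with $P$ in place of $P_C$ (noting $P = P_B P$ as well since $P$ contains all constraints), giving $\norm{OP}^2 = \norm{P_B O^\dagger O P_B \text{ conjugated by } P} = c'\,\norm{P} = c'$, provided $P \ne 0$. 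Hence $\norm{OP_C}^2 = c' = \norm{OP}^2$.

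I expect the main obstacle — really the main bookkeeping point rather than a deep difficulty — is verifying the support/commutation claims precisely: that every projector $(I-Q_u)$ appearing in $P$ but not in $P_C$ acts trivially on $A$ (so it commutes with $O$ and can be freely moved through), and dually that $P_B$, $P_C$, and $P$ all absorb each other in the right order via $B \subseteq C \subseteq G$. This requires being careful with the $\xi=1$ support convention: $Q_u$ is supported on $B_u(1)$, and $(I-Q_u)$ is a factor of $P_A$ precisely when $B_u(1) \subseteq A$; one must check that the complementary factors in $P$ relative to $P_C$ have support disjoint from $A$, which follows because if $B_u(1)$ meets $A$ then $B_u(1) \subseteq b_1(A) = B \subseteq C$, so $(I-Q_u)$ is already a factor of $P_C$. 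Once these inclusions are nailed down, the norm manipulations are routine. A secondary subtlety is the degenerate case $P = 0$ (no global ground state), which should be excluded or handled separately; under the standing assumptions $H_0$ is frustration-free with $P \ne 0$, so I would simply note this.
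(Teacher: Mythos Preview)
Your proposal is correct and follows essentially the same route as the paper: both arguments reduce to the identity $P_B\,O^\dagger O\,P_B = c'\,P_B$ (applied to $O^\dagger O$, which is supported on $A$) together with $P = P_B P$ and $P_C = P_B P_C$, concluding that $\norm{OP}^2 = c' = \norm{OP_C}^2$. The paper phrases this by picking norm-attaining unit vectors in the ranges of $P$ and $P_C$ rather than using $\norm{M}^2 = \norm{M^\dagger M}$ as you do, but the content is identical; your preliminary inequality $\norm{OP}\le\norm{OP_C}$ and the factorization $P = P_C P'$ are correct but end up unnecessary once you compute both sides to be $\sqrt{c'}$.
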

Lastly, we define the local indistinguishability radius to be a measure of how large a region should be so it can support operators which could distinguish different local ground states. 
\begin{definition}
\textbf{Local indistinguishability radius:}
For a given local commuting projector Hamiltonian $H_0$, the \textit{local indistinguishability radius} $\rho^\ast$ is the maximum radius such that $B_u(r)$ is locally indistinguishable for all $u$ and all $r<\rho^\ast$. 
\end{definition}

\begin{eg}\label{ex_stabilizer_code_hams}
\textbf{Stabilizer code Hamiltonians and their interaction graph:}
Consider a Pauli stabilizer code $\mathcal{C}$ on $N$ qubits, specified by a set of code stabilizers $\{g_1,\cdots,g_m\}$ that generate the stabilizer group of the code space $\mathcal{G}=\langle g_1, g_2, \cdots, g_m \rangle$, with $m\le N$. We use ``code stabilizers'' to specifically reference the generators $g_i$ of the corresponding stabilizer group.  The interaction graph $G_\mathcal{C}$ of the code $\mathcal{C}$ given the code stabilisers $\{g_1,\cdots,g_m\}$ is defined as follows: The vertices correspond to the set of qubits and there is an edge between two qubits if there is a code stabilizer $g_i$ that acts non-trivially on both of them. Note that $G_\mathcal{C}$ depends on the choice of the generating set $\{g_1,\cdots,g_m\}$, and is not a feature of the code $\mathcal{C}$ alone.
By definition all code stabilizers are strictly local on the interaction graph with range $\xi=1$. The next step is to assign each code stabilizer to a vertex of the graph $G_\mathcal{C}$. It is straightforward to show using induction that it is always possible to assign each code stabilizer $g_i$ to one of the vertices on its support such that no more than one stabilizer is assigned to each vertex. Thus we may view the code space as the ground state subspace of the the local commuting projector Hamiltonian as expressed in Eq.\eqref{eq_local_projector_hamiltonian_form} with $Q_u=(1-g_i)/2$. One may take the graph $G$ -- with respect to which the perturbations are local -- to be the same as $G_\mathcal{C}$, or let $G$  be $G_\mathcal{C}$ but with extra edges to consider more general perturbations. In general the stability of the spectral gap depends on the choice of $G$.  A region $A$ is locally indistinguishable if and only if a) it does not contain a logical operator, and b) the elements of the stabilizer group $\mathcal{G}$ which are solely supported on $A$ can all be generated by stabilizers $g_i$ that are solely supported on $B=b_1(A)$ (see Lemma 2.1 in Ref.\cite{bravyi2010topological}). It is clear that a) and b) are necessary conditions for $A$ to be a locally indistinguishable region. To show they are sufficient, first suppose $O$ is a Pauli operator supported on $A$. If $O$ anti-commutes with any $g_i$ that is contained in $B$, then $P_B O P_B=0=0\, P_B$. If $O$ commutes with every $g_i$ contained in $B$, then $O$ commutes with all the stabilizers in $\mathcal{G}$, because it commutes trivially with the rest of code stabilizers $g_i$ due to their disconnected support. Since $O$ cannot be a logical operator, it should be a stabilizer up to a sign which means $P_B\,O P_B \propto P_B$, since all stabilizers on $A$ can be generated by code stabilizers $g_i$ on $B$. The claim follows by noting that any operator can be written as a sum over Pauli operators. 
Note that the local indistinguishability radius $\rho^\ast$ depends on $G$. In toric code and its variants and when $G$ is taken to be $G_\mathcal{C}$, it is easy to see that b) is always satisfied for simply connected regions, and hence $\rho^\ast \sim d/2$, where $d$ is the code distance. 
\end{eg}

\subsection{Energy Spectrum for relatively bounded perturbations}

The following lemma is the Lemma 4 in Ref.\cite{bravyi2011short}, which is valid for any Hamiltonian irrespective of the underlying geometry. We will make use of it in the last part of the proof. 
\begin{lemma}\label{lm_rlt_bnd_pert}
Let $W$ be a relatively bounded perturbation to the Hamiltonian $H_0$, i.e. there exists a $b>0$ such that,
\begin{align}
    \norm{W \psi}\le b\norm{H_0 \psi}, \quad\text{for any }\ket{\psi}.
\end{align}
Then the spectrum of $H_0+W$ is contained in the union of intervals $[(1-b)\lambda_0,(1+b)\lambda_0]$ where $\lambda_0$ runs over eigenvalues of $H_0$. 
\end{lemma}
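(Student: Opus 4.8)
The plan is to reduce the statement to an elementary spectral estimate in the eigenbasis of $H_0$. Since $H_0=\sum_u Q_u$ is a sum of commuting projectors it is positive semidefinite, so I would fix an orthonormal eigenbasis $\{\ket{\phi_j}\}$ of $H_0$ with $H_0\ket{\phi_j}=\lambda_j\ket{\phi_j}$ and $\lambda_j\ge 0$. Let $\lambda$ be any eigenvalue of $H_0+W$ with normalized eigenvector $\ket{\psi}=\sum_j c_j\ket{\phi_j}$. From $(H_0+W)\ket{\psi}=\lambda\ket{\psi}$ one gets $(H_0-\lambda)\ket{\psi}=-W\ket{\psi}$, and the relative bound then gives $\norm{(H_0-\lambda)\psi}=\norm{W\psi}\le b\,\norm{H_0\psi}$.

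Next I would expand both norms in the eigenbasis, $\norm{(H_0-\lambda)\psi}^2=\sum_j |c_j|^2(\lambda_j-\lambda)^2$ and $\norm{H_0\psi}^2=\sum_j|c_j|^2\lambda_j^2$, so that the previous inequality becomes $\sum_j|c_j|^2(\lambda_j-\lambda)^2\le b^2\sum_j|c_j|^2\lambda_j^2$. Now suppose for contradiction that $\lambda$ does not lie in $\bigcup_j[(1-b)\lambda_j,(1+b)\lambda_j]$. Then for every index $j$ (in particular every $j$ with $c_j\ne 0$) either $\lambda<(1-b)\lambda_j$ or $\lambda>(1+b)\lambda_j$, and in either case $|\lambda-\lambda_j|>b\lambda_j=b|\lambda_j|$, hence $(\lambda_j-\lambda)^2>b^2\lambda_j^2$. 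Multiplying by $|c_j|^2$ and summing over the nonempty set of $j$ with $c_j\ne 0$ yields the strict inequality $\sum_j|c_j|^2(\lambda_j-\lambda)^2> b^2\sum_j|c_j|^2\lambda_j^2$, contradicting the inequality just derived. Therefore $\lambda$ must belong to one of the intervals $[(1-b)\lambda_0,(1+b)\lambda_0]$, and since $\lambda$ was an arbitrary eigenvalue of $H_0+W$ this proves the claim.

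There is no substantive obstacle here; the argument is a few lines of linear algebra once the relative bound is rewritten as $\norm{(H_0-\lambda)\psi}\le b\norm{H_0\psi}$. The only points needing care are bookkeeping: that $\ket{\psi}\ne 0$ guarantees at least one $c_j\ne 0$ so that the summed inequality is genuinely strict, and that the chain $|\lambda-\lambda_j|>b\lambda_j$ carries the correct sign regardless of whether $b<1$ and regardless of whether $\lambda_j=0$ (in which case the interval degenerates to $\{0\}$ and $\lambda\ne 0$ gives $(\lambda-\lambda_j)^2>0=b^2\lambda_j^2$ directly).
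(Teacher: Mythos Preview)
Your argument is correct. The paper does not actually give its own proof of this lemma; it simply cites it as Lemma~4 of Bravyi--Hastings \cite{bravyi2011short} and uses the result as a black box, so there is nothing to compare against directly. Your eigenbasis computation is the standard elementary route: rewriting the relative bound as $\norm{(H_0-\lambda)\psi}\le b\norm{H_0\psi}$ for an eigenvector $\ket{\psi}$ of $H_0+W$, expanding both sides, and deriving a contradiction from $|\lambda-\lambda_j|>b\lambda_j$ for all $j$. The usual phrasing in the perturbation-theory literature goes through the resolvent (showing $\norm{W(H_0-\lambda)^{-1}}<1$ so that $H_0+W-\lambda$ is invertible), but in finite dimensions this is exactly the same computation dressed up differently. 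Your bookkeeping remarks about $\lambda_j=0$ and about needing at least one $c_j\ne 0$ are the only places requiring care, and you handle them; the implicit use of $\lambda_j\ge 0$ (to make the intervals $[(1-b)\lambda_j,(1+b)\lambda_j]$ well-ordered and to get $|\lambda-\lambda_j|>b\lambda_j$ from $\lambda\notin[(1-b)\lambda_j,(1+b)\lambda_j]$) is legitimate here since $H_0$ is a sum of projectors.
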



\section{Main Result}\label{sec_main_result}

Our goal is to prove the stability of the spectral gap and the robustness of the ground state degeneracy for a certain class of interaction graphs $G$. The proof strategy relies on 1) transforming a generic perturbation first to a globally block diagonal perturbation via exact quasi-adiabatic continuation, and 2) transforming it further into a locally block diagonal perturbation by re-summing the terms that are supported on small balls (compared to the indistinguishability radius) and treating the rest (terms that are supported on large balls) as errors. 
Steps 1-2 above will not change the energy spectrum because the  quasi-adiabatic continuation in the first step is a unitary operation and the total norm of the errors in the second step vanishes in the thermodynamic limit. However the tail of the perturbation will change under these transformations. 
If one starts with a perturbation of strength $J$ and tail $e^{-\mu r}$, the first step turns it into a globally block diagonal perturbation of strength $O(J)$ and tail
\begin{align}\label{eq_modified_tail_f}
f(r)=r^5~\Gamma(r)^7~\eta_{\frac{1}{28}}\left(\frac{r}{256~v_1}\right),
\end{align}
where $\eta_a(x)$ is defined in Eq.\eqref{eq_eta_a_definition} and $v_1$ is the Lieb-Robinson velocity which bounds the following sum,
\begin{align}\label{eq_th_v1}
    \frac{8(e^\mu+1)}{\mu} \sum_{r\ge 1}^{D} \Gamma(r)^2 e^{-\frac{\mu}{2}r} \le v_1<\infty.
\end{align}
The resummation in the second step will result in a locally block diagonal perturbation of strength $O(J)$ and tail $\bar{f}(r)$ defined as,
\begin{align}\label{eq_modified_tail_fbar}
    \bar{f}(r)=\sum_{r'\ge r}^{D} f(r').
\end{align}
The gap and ground state degeneracy is robust if this tail function decays rapidly enough, the precise meaning of which is outlined in the following.
\begin{theorem}\label{th_main}
Consider a family of local commuting projector Hamiltonians $\{H_n \}_{n=1}^\infty$ of increasing size $N_n$. $H_n$ is defined on the graph $G_n$ with Gamma function $\Gamma_n(r)$ and diameter $D_n$ and has local indistinguishability radius $\rho^\ast_n$. Consider perturbing the Hamiltonian $H_n$ by local perturbation $V$ of strength $J$ and tail $\exp(-\mu r)$. Let $f_n(r)$ and $\bar{f}_n(r)$ denote the modified tail functions as defined in Eq.\eqref{eq_modified_tail_f} and Eq.\eqref{eq_modified_tail_fbar}, using $D=D_n$ and $\Gamma=\Gamma_n$, and assume there exists a constant (independent of $n$) $v_1$ that bounds the sum in Eq.\eqref{eq_th_v1} for large enough $n$. If there exists a constant $b_0$ such that for large enough $n$,
\begin{align}\label{eq_th_assumption}
    \sum_{r\ge 1}^{D_n} \Gamma_n(r+1)^{\frac{1}{2}}\bar{f}_n(r)\le b_0<\infty,
\end{align}
and if
\begin{align}\label{eq_th_error}
    \lim_{n\to\infty} N_n~D_n~\bar{f}_n(\rho^\ast_n)=0,
\end{align}
then, there exist constants $n_0$ and $b$ (with $b$ finite and depending only on $b_0$, $\mu$ and $v_1$) such that for $n>n_0$ and $J<J_0=1/(4b)$, the spectrum of $H_n+s\,V_n$ for $s\in[0,1]$ is contained in the union of
\begin{align}\label{eq_I_k}
    I_k=[k(1-bJ)-\delta_n,k(1+bJ)+\delta_n ], \quad k \in \mathbb{Z}_{\geq 0}
\end{align}
up to a constant, where $\delta_n=O(N_n D_n \bar{f}_n(\rho^\ast_n))\to 0$ as $n\to\infty$. In particular, the ground states are in $I_0$ so the ground state degeneracy is restored in the thermodynamic limit, and the spectral gap $\Delta \geq 1 - bJ > 3/4$ 
in the limit $n \rightarrow \infty$.
\end{theorem}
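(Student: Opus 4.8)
The plan is to follow the Bravyi--Hastings strategy of Ref.~\cite{bravyi2011short} step by step, but carrying the graph-dependent factors $\Gamma_n(r)$ and $D_n$ explicitly through each estimate so that the hypotheses \eqref{eq_th_assumption} and \eqref{eq_th_error} are exactly what is needed to close the argument. The three stages are: (i) use exact quasi-adiabatic continuation to conjugate $H_n+sV$ by a unitary that makes the perturbation globally block diagonal with respect to $P$; (ii) split the resulting block-diagonal perturbation into a piece supported on balls of radius $\le\rho_n^\ast$ (which, by local indistinguishability and Lemma~\ref{lm_lcgc}, can be made \emph{locally} block diagonal) plus an error term supported on larger balls whose total norm is $O(N_n D_n \bar f_n(\rho_n^\ast))$; (iii) absorb the locally block-diagonal part into a redefined unperturbed Hamiltonian and bound the leftover using Lemma~\ref{lm_rlt_bnd_pert} on relatively bounded perturbations, so that the spectrum lands in the intervals $I_k$ of \eqref{eq_I_k}.

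First I would make stage (i) precise. Writing $H(s)=H_0+sV$, I want a family of unitaries $U_s$ generated by the quasi-adiabatic continuation operator $\mathcal{D}(s)$ built from a filter function $F(t)$ with the tail property required in Lemma~\ref{lm_loc_decom_filter} (the explicit construction being deferred to Appendix~\ref{apx_filterfunction}). Conjugating $V$ by $U_s$ and applying Lemma~\ref{lm_loc_decom_filter} and then Lemma~\ref{lm_loc_decom_eta_unitary} to track how the tail degrades, the new perturbation $V'$ has strength $O(J)$ and tail bounded by $f_n(r)$ as in \eqref{eq_modified_tail_f}, with the Lieb--Robinson velocity $v_1$ controlled by \eqref{eq_th_v1}; crucially the construction arranges $[V'_u,P]=0$. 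The key point to verify is that the standard identity $\partial_s U_s^\dagger H(s) U_s = U_s^\dagger(V - [\mathcal{D}(s),H(s)])U_s$ combined with the spectral-gap assumption on $H_0$ (gap $1$, since the $Q_u$ are projectors and $H_0$ is frustration-free) yields exact block-diagonality, so that only the locality bookkeeping is new here.

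For stage (ii) I would write $V' = V'^{<} + V'^{>}$ where $V'^{<}_u = \sum_{r<\rho_n^\ast} V'_{u,r}$ and $V'^{>}_u = \sum_{r\ge\rho_n^\ast} V'_{u,r}$. The tail term obeys $\|V'^{>}\| \le \sum_u \sum_{r\ge\rho_n^\ast} J f_n(r) \le N_n J \bar f_n(\rho_n^\ast)$, and a slightly more careful count involving the number of distinct radii gives the extra $D_n$ factor, matching $\delta_n = O(N_n D_n \bar f_n(\rho_n^\ast))$. For $V'^{<}$, each $V'_{u,r}$ with $r<\rho_n^\ast$ is supported on a locally indistinguishable ball, so by Lemma~\ref{lm_lcgc} one can replace it (without changing $\|V'_{u,r}P\|$) by a locally block-diagonal operator $\bar V_{u,r}$ with $[\bar V_{u,r},P]=0$; the difference is again pushed into the error. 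Then stage (iii): set $\widetilde H_0 = H_0 + \sum_u \bar V^{<}_u$, observe it is still a frustration-free commuting-projector-like Hamiltonian whose spectrum sits in $\bigcup_k [k(1-bJ),k(1+bJ)]$ by a direct norm estimate using \eqref{eq_th_assumption} (this is where $\sum_r \Gamma_n(r+1)^{1/2}\bar f_n(r)\le b_0$ enters, bounding $\|\sum_u\bar V^{<}_u\|$ locally), and treat the remaining error $W = s(V'^{>} + \text{(indistinguishability remainder)})$ as a perturbation that is relatively bounded with constant $\propto \delta_n$, invoking Lemma~\ref{lm_rlt_bnd_pert} to get the $\pm\delta_n$ broadening of the intervals. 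Reading off $I_0$ gives the restored ground-state degeneracy and $\Delta \ge 1-bJ > 3/4$ for $J<1/(4b)$.

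The main obstacle I expect is stage (i)--(ii) bookkeeping: ensuring that conjugation by the quasi-adiabatic unitary really does produce a \emph{globally} block-diagonal $V'$ with the claimed tail, and that the subsequent truncation-and-symmetrization to a locally block-diagonal operator is consistent (one must check that the various ``remainder'' pieces thrown away at each substep all have norm controlled by $N_n D_n \bar f_n(\rho_n^\ast)$, not something larger, and that no step requires a bound on $V'$ itself rather than on $V'P$). A secondary subtlety is that $\widetilde H_0$ is no longer literally a sum of commuting projectors, so Lemma~\ref{lm_rlt_bnd_pert} must be applied with $\widetilde H_0$ in place of $H_0$ and one needs its spectrum to be controlled purely by \eqref{eq_th_assumption}; verifying the relative-boundedness constant stays below $1/2$ (so the intervals $I_k$ remain disjoint) for $J<J_0$ is the quantitative crux. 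The geometric hypotheses $\Gamma_n(r)=O(\exp r^{1-\varepsilon_1})$ and $\rho_n^\ast = O(\log(n)^{1+\varepsilon_2})$ then enter only at the end, to check that $N_n D_n \bar f_n(\rho_n^\ast)\to 0$ and that the sum in \eqref{eq_th_assumption} converges uniformly in $n$, which follows because $\eta_a$ beats any stretched exponential and $\bar f_n(\rho_n^\ast)$ decays faster than $1/\mathrm{poly}(n)$ once $\rho_n^\ast$ is polylogarithmically large.
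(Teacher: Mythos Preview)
Your three-stage outline matches the paper's strategy, but stage~(iii) contains a genuine conceptual error that would make the argument fail. You propose to set $\widetilde H_0 = H_0 + \sum_u \bar V^{<}_u$ and claim its spectrum lies in $\bigcup_k[k(1-bJ),k(1+bJ)]$ ``by a direct norm estimate.'' This cannot work: the operator $\sum_u \bar V^{<}_u$ has norm of order $JN_n$, not $J$, so no norm bound places the spectrum in those intervals, and $\widetilde H_0$ is certainly not ``frustration-free commuting-projector-like.'' The paper instead keeps $H_0$ as the reference and shows that the locally block-diagonal piece $\widetilde W$ is \emph{relatively bounded} by $H_0$ with constant $bJ$ (i.e.\ $\|\widetilde W\psi\|\le bJ\|H_0\psi\|$). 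This is the nontrivial step where \eqref{eq_th_assumption} enters: one decomposes $\psi$ over defect configurations via $P_{\bm\lambda}$, uses that $\widetilde W_{u,r}P_{\bm\lambda}=0$ unless $u$ lies within distance $r{+}1$ of a defect of $\bm\lambda$ (at most $|\bm\lambda|\,\Gamma(r{+}1)$ such sites), and Cauchy--Schwarz produces the $\Gamma(r{+}1)^{1/2}$ weight. Lemma~\ref{lm_rlt_bnd_pert} is then applied to $H_0$ with perturbation $\widetilde W$. The leftover error is bounded in \emph{norm} by $\delta_n$, which is what shifts the intervals by $\pm\delta_n$; it is not ``relatively bounded with constant $\propto\delta_n$'' as you write.

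Stage~(ii) is also too vague to go through. You say that by Lemma~\ref{lm_lcgc} one can replace each $V'_{u,r}$ by a locally block-diagonal $\bar V_{u,r}$ ``without changing $\|V'_{u,r}P\|$,'' but Lemma~\ref{lm_lcgc} only relates $\|OP\|$ to $\|OP_C\|$; it does not manufacture a locally block-diagonal replacement. The paper's construction is more delicate: after arranging $[W_u,P]=0$ and (up to constants) $PW_{u,r}P=0$ for $r<\rho^\ast$, one writes $W'_u=(1-P)W'_u(1-P)$ and expands $1-P=\sum_i E_{u,i}$, where $E_{u,i}=P_{B_u(i-1)}-P_{B_u(i)}$ projects onto states whose nearest defect to $u$ lies at distance $i$. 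Each summand $E_{u,i}W'_{u,r}E_{u,j}$ then commutes with $P$ and has explicit support; Lemma~\ref{lm_lcgc} is used to bound the norms of the resulting $Y_{u,i}$ and $Z_{u,i}$ pieces. The extra $D_n$ factor in $\delta_n$ arises from the sum over $i>\rho^\ast$ in this decomposition, not from ``the number of distinct radii'' in a naive truncation.

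Finally, you omit the continuity bootstrap of Section~\ref{subsec_gap}: exact quasi-adiabatic continuation (Lemma~\ref{lm_projection_mapping}) requires the gap of $H_0+sV$ to stay at least $1/2$ along the path, which is part of what you are proving. The paper resolves this circularity by a contradiction argument using continuity of $\Delta_n(s)$ in $s$.
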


\begin{corollary}
If $\Gamma_n(r)=O(\exp(r^{1-\varepsilon_1}))$ and $\rho^\ast_n=\Omega(\log(N_n)^{1+\varepsilon_2})$ for constants $\varepsilon_1,  \varepsilon_2 > 0$, then the gap and the ground state degeneracy is robust against weak local perturbations. 
\end{corollary}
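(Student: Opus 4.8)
The plan is to verify that the two hypotheses of Theorem~\ref{th_main}, namely the summability condition \eqref{eq_th_assumption} and the vanishing-error condition \eqref{eq_th_error}, follow from the stated growth bounds $\Gamma_n(r) = O(\exp(r^{1-\varepsilon_1}))$ and $\rho^\ast_n = \Omega(\log(N_n)^{1+\varepsilon_2})$. First I would observe that the assumed bound on $\Gamma_n$ makes the Lieb--Robinson sum \eqref{eq_th_v1} convergent uniformly in $n$: since $\Gamma_n(r)^2 e^{-\mu r/2} = O(\exp(2r^{1-\varepsilon_1} - \mu r/2))$ and the exponent is eventually dominated by $-\mu r/4$, the series $\sum_r \Gamma_n(r)^2 e^{-\mu r/2}$ is bounded by a geometric tail, so a constant $v_1$ bounding \eqref{eq_th_v1} exists. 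This is the input that lets us freeze $v_1$ once and for all when analyzing $f_n$ and $\bar f_n$.

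Next I would estimate the modified tail. From \eqref{eq_modified_tail_f}, $f_n(r) = r^5 \Gamma_n(r)^7 \eta_{1/28}(r/(256 v_1))$, and by definition of $\eta_a$ in \eqref{eq_eta_a_definition} we have $\eta_a(x) = \exp[-a x/\ln^2(e^2+x)]$. The key point is that the stretched-exponential decay of $\eta_a$ \emph{beats} any power of $\Gamma_n(r) = O(\exp(r^{1-\varepsilon_1}))$: the logarithm of $\Gamma_n(r)^7$ grows like $r^{1-\varepsilon_1}$, whereas $-\log \eta_{1/28}(r/(256 v_1))$ grows like $r/\ln^2(r)$, which for any $\varepsilon_1 > 0$ eventually dominates $r^{1-\varepsilon_1}$. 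Hence $f_n(r) \le \exp(-c\, r/\ln^2 r)$ for some constant $c>0$ and all $r$ large (uniformly in $n$, since $v_1$ is fixed), and the tail sum $\bar f_n(r) = \sum_{r'\ge r} f_n(r')$ inherits a bound of the same stretched-exponential type, say $\bar f_n(r) \le \exp(-c'\, r/\ln^2 r)$ for large $r$. Then in \eqref{eq_th_assumption} the summand $\Gamma_n(r+1)^{1/2}\bar f_n(r)$ is again bounded by a decaying stretched exponential, so the sum is bounded by a constant $b_0$ independent of $n$; condition \eqref{eq_th_assumption} holds.

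Finally I would check the error condition \eqref{eq_th_error}. We need $N_n D_n \bar f_n(\rho^\ast_n) \to 0$. Using $\bar f_n(\rho^\ast_n) \le \exp(-c'\, \rho^\ast_n / \ln^2 \rho^\ast_n)$ and $\rho^\ast_n = \Omega(\log(N_n)^{1+\varepsilon_2})$, the exponent is of order $-c'' \log(N_n)^{1+\varepsilon_2} / (\log\log N_n)^2$, which for any $\varepsilon_2 > 0$ grows faster than $\log N_n$ (and certainly faster than $\log(N_n D_n)$, since $D_n \le N_n$). Therefore $N_n D_n \bar f_n(\rho^\ast_n) \le \exp(2\log N_n - c'' \log(N_n)^{1+\varepsilon_2}/(\log\log N_n)^2) \to 0$, and Theorem~\ref{th_main} applies, giving the stated robustness of the gap and of the ground state degeneracy for $J < J_0$.

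The main obstacle is bookkeeping rather than conceptual: one must make sure all the implicit constants (in the $O(\cdot)$ for $\Gamma_n$, in $v_1$, in the powers $r^5\Gamma^7$ and $\Gamma^{1/2}$) are genuinely $n$-independent, so that the ``eventually dominates'' comparisons hold uniformly in $n$ and not just pointwise; the quantitative form of $\eta_a$ with its $\ln^2(e^2+x)$ denominator is exactly what is needed to absorb the $\exp(r^{1-\varepsilon_1})$ factors, and the failure of this comparison precisely at $\varepsilon_1 = 0$ is why the finite-rate qLDPC case is not covered.
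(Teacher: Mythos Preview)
Your proposal is correct and follows essentially the same route as the paper: verify that \eqref{eq_th_v1} is bounded uniformly in $n$, then show that the stretched-exponential decay of $\eta_a$ dominates any factor $\Gamma_n(r)^k=O(\exp(kr^{1-\varepsilon_1}))$, so that both \eqref{eq_th_assumption} and \eqref{eq_th_error} hold. The only cosmetic difference is that the paper replaces $\eta_a(r)$ by the cruder bound $\exp(-r^\alpha)$ for a single exponent $\alpha$ chosen with $\max(1-\varepsilon_1,\tfrac{1}{1+\varepsilon_2})<\alpha<1$, which streamlines the tail-sum and the $\rho^\ast_n$ estimates into one line each; your direct use of the $r/\ln^2 r$ form is equally valid but leaves a bit more bookkeeping in the summation step.
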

\begin{proof}
Clearly there exists a constant $v_1$ bounding the sum in Eq.\eqref{eq_th_v1} independent of $n$. Moreover, it is easy to see from Eq.\eqref{eq_eta_a_definition} that for large $r$, $\eta_a(r)$ decays faster than $\exp(-r^\alpha)$ for any $\alpha<1$. Choose $\alpha$ such that $\max(1-\varepsilon_1,\frac{1}{1+\varepsilon_2})<\alpha<1$, and choose a large enough constant $C$ such that $\eta_{\frac{1}{28}}(r)\le C \exp(-r^{\alpha})$ for all $r$. Given that $\Gamma_n(r)=O(\exp(r^{1-\varepsilon_1}))$ and that $\alpha>1-\varepsilon_1$, one can see from Eq.\eqref{eq_modified_tail_f} that for the modified tail function $f_n$ we have $f_n(r)=O(\exp(-c\, r^\alpha))$ for some constant $c>0$. This in turn ensures that $\bar{f}_n(r)=O(r\exp(-c\, r^{\alpha}))=O(\exp(-c'\,r^\alpha))$ for a constant $c'>0$. Since $\alpha>1-\varepsilon_1$, the sum in Eq.\eqref{eq_th_assumption} is clearly bounded by a constant $b_0<\infty$. Moreover, since $\rho_n^\ast=\Omega(\log(N_n)^{1+\varepsilon_2})$ and $\alpha > \frac{1}{1+\varepsilon_2}$, $\bar{f}_n(\rho^\ast_n)$ decays faster than any polynomial in $N_n$, thus the limit in Eq.\eqref{eq_th_error} vanishes. The statement of the Corollary then follows from Theorem \ref{th_main}.
\end{proof}

\section{Proof of Theorem \ref{th_main}}\label{sec_proof}
\subsection{Assuming a finite gap}\label{subsec_gap}

First, we prove that it suffices to prove Theorem \ref{th_main} under the assumption that the spectral gap of $H_n + s\, V_n$, denoted by $\Delta_n(s)$, is lower bounded by $1/2$ for large enough $n$. The argument follows exactly Ref.\cite{bravyi2011short}, which we repeat here for completeness. The proof is by contradiction. Assume we have proved the theorem under the assumption $\Delta_n(s)\ge 1/2$. Now assume there is a family of Hamiltonian $H_n$ and a series of perturbations $V_n$, which satisfy the conditions of the theorem but for any finite $J$ the gap drops below $1/2$ at some point for arbitrary large $n$. Continuity of $\Delta_n(s)$ and the fact that $\Delta_n(0)=1$ implies that there should exists $s^\ast\in[0,1)$ where $\Delta_n(s^\ast)=1/2$, and that  $\Delta_n(s)\ge 1/2$ for $s\in [0,s^\ast]$. $s^\ast$ could be a function of $n$ and $J$. Let us choose $J=J_0/2$,  where $J_0$ is specified in the statement of the theorem. Then, the statement of the theorem should be valid for $s\in[0,s^\ast]$ given the fact that we have proved the theorem for $\Delta\ge 1/2$. However, this means that $\Delta_n(s^\ast)\ge 3/4$ for large enough $n$, which contradicts $\Delta_n(s^\ast)=1/2$.  This shows that we may assume  $\Delta_n(s)\ge 1/2$ without loss of generality. 

\subsection{Reduction of general local perturbations to globally block diagonal perturbations}

In this section we prove that as far as the spectrum of  $H_s=H_0+s\, \sum_u V_u=H_0 + s\, V$ is concerned, one may assume $[P,V_u]=0$, at the cost of $V$ having a slightly different tail. The intuition behind this is as follows. 

Note that $Q_u$ stabilizes the ground space of $H_0$, but not that of $H_s=H_0+sV$. However, assuming $H_s$ is in the same phase as $H_0$, we expect there exists slightly different dressed operators $\widetilde{Q}_u$ which stabilize the ground space of $H_s$. In particular $[P(s),\widetilde{Q}_u]=0$, where $P(s)$ is the projection onto the ground space of $H_s$. 

Now, let us rewrite $H_s$ as $H_s=\sum_u \widetilde{Q}_u +\widetilde{V}$ where $\widetilde{V}=H_s-\sum_u \widetilde{Q}_u=\sum_u(Q_u-\widetilde{Q}_u)+s\,V$, and note that $[P(s),\widetilde{V}]=[P(s),H_s-\sum_u \widetilde{Q}_u]=0$. This shows that $H_s$ is a sum of commuting projectors $\widetilde{Q}_u$ perturbed by perturbation $\widetilde{V}$ which commutes with the projection onto its ground space $P(s)$. 

Let $U_s$ denote the quasi adiabatic continuation unitary, which relates the dressed operators to bare operators. Using $U_s$, we can undress $\widetilde{Q}_u$ back to $Q_u$ and $P(s)$ back to $P$. Doing so turns $H_s$ into $H'_s=U^\dagger_s H_s U_s=\sum_u U^\dagger_s \widetilde{Q}_u U_s + U^\dagger_s \widetilde{V} U_s=H_0 + V'$, where now we have $[P,V']=0$. Since $U_s$ is unitary, the spectrum of $H'_s$ is the same as $H_s$. 

The last step is to write $V'$ in terms of local operators $V'=\sum_u V'_u$ such that each local term commutes with $P$, i.e. $[P,V'_u]=0$ for all $u$. This step follows from the standard procedure used to rewrite a gapped local Hamiltonian as a frustration free local Hamiltonian where the ground state is the eigen state of each term in the Hamiltonian\cite{hastings2006solving,kitaev2006anyons}. 

Putting it all together, this shows that as far as the spectrum of $H_s$ is concerned, we may instead study the spectrum of $H'_s$ which is $H_0$ perturbed by a globally block diagonal perturbation $V'$. It remains to relate the locality properties of $V'$ to that of $V$, which constitutes the main part of the rest of this section.

Consider the following class of Hamiltonians parameterized by $s\in [0,1]$
\begin{align}\label{eq_THE_hamiltonian}
    H_s=H_0+s\, V,
\end{align}
where $V$ is a local perturbation with strength $J$ and tail $e^{-\mu r}$, i.e.
\begin{equation}
    V=\sum_{u\in G}V_u=\sum_{u\in G}\sum_{r\ge 1} V_{u,r},\quad \norm{V_{u,r}}\le J e^{-\mu r}
\end{equation}
In general, the ground state subspace of $H_s$ is different from the ground state subspace of $H_0$ and the perturbation $V$ preserves neither one. The first step is to use  quasi-adiabatic continuation to transform $V$ into $V'$ such that $[V',P]=0$. Since quasi-adiabatic continuation is unitary, the spectrum remains the same and we can analyze the transformed Hamiltonian. The exact quasi-adiabatic evolution Hamiltonian $\mathcal{D}_s$ is defined as
\begin{align}\label{eq_exact_qac_definition}
    \mathcal{D}_s=\int_{-\infty}^{+\infty} \dd t ~W_{1/2}(t) ~\exp(i H_s t)(\partial_s H_s)\exp(-i H_s t),
\end{align}
where $W_\gamma(t)$ is a filter function defined in Appendix \ref{apx_filterfunction}, and the quasi adiabatic continuation operator $U_s$ is defined as the unitary evolution generated by $\mathcal{D}_s$,
\begin{align}\label{eq_qu_adiabatic_ev_U_def}
    U_s=\mathcal{S}'\qty[\exp(-i\int_0^{s}\dd s' \mathcal{D}_{s'})].
\end{align}
For simplicity, we drop the $1/2$ subscript of the filter function $W_{1/2}(t)$ and denote it simply by $W(t)$.  $W(t)$ is real so $\mathcal{D}_s$ is Hermitian. Moreover $\int_{|t'|>t} \dd t' |W(t')| \le c~\eta_{\frac{1}{14}}(t)$ for a constant $c$ for any $t \ge 0$ which in turn ensures $\mathcal{D}_s$ to be a quasi-local Hamiltonian, i.e. comprised of local terms with $\sim\eta(r)$ tail that decays faster than any polynomial but slower than a exponential, according to Lemma \ref{lm_loc_decom_filter}. Most importantly, $W(t)$ is defined such that $\widetilde{W}(\omega)=-\frac{i}{\omega}$ for $|\omega|>1/2$, where $\widetilde{W}$ is the Fourier transform of $W$; this ensures that $U_s$ would map ground states to exact ground states along the adiabatic path\cite{osborne2007simulating}. This is made precise in the following lemma (See Lemma 7.1 in Ref.\cite{bravyi2010topological} for its proof)
\begin{lemma}\label{lm_projection_mapping}
Let $H_s$ be a differentiable family of Hamiltonians. Let $\ket{\psi_i(s)}$ denote the eigen-states of $H_s$ with energy $E_i$. Let $E_\text{min}(s)<E_\text{max}(s)$ be two continuous functions of $s$. Define $P(s)$ as,
\begin{align}\label{eq_def_P_s}
    P(s)=\sum_{i: E_\text{min}(s)<E_i<E_\text{max}(s)} \ketbra{\psi_i(s)}.
\end{align}
assume that $[E_\text{min}(s),E_\text{max}(s)]$ is separated from the rest of the spectrum by a gap of at least $1/2$ for all $0\le s\le 1$. Then we have
\begin{align}
   U_s^\dagger\, P(s)\,U_s =P(0)
\end{align}
\end{lemma}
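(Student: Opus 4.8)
The plan is to prove the stronger statement that $s\mapsto U_s^\dagger\, P(s)\, U_s$ is constant on $[0,1]$; since $U_0=I$ (the path-ordered exponential in Eq.\eqref{eq_qu_adiabatic_ev_U_def} over an empty interval), this constant must equal $P(0)$. Differentiating and using that $U_s$ is generated by the Hermitian operator $\mathcal{D}_s$, so that $i\,\partial_s U_s=\mathcal{D}_s\,U_s$ with the ordering convention implicit in Eq.\eqref{eq_qu_adiabatic_ev_U_def}, one gets
\begin{align}
    \partial_s\!\left(U_s^\dagger\, P(s)\, U_s\right)=U_s^\dagger\left(\partial_s P(s)+i\,[\mathcal{D}_s,P(s)]\right)U_s ,
\end{align}
so the entire claim reduces to the operator identity $\partial_s P(s)=-i\,[\mathcal{D}_s,P(s)]$, i.e.\ that $\mathcal{D}_s$ implements exactly the parallel transport of the spectral projector $P(s)$. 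This is where the defining property of the filter function enters.

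I would establish this identity by comparing matrix elements in the eigenbasis $\{\ket{\psi_i(s)}\}$ of $H_s$. Because $E_{\min}(s)$ and $E_{\max}(s)$ are separated from the spectrum by a gap $\ge 1/2$, they are never eigenvalues, so the index set $\{i: E_{\min}(s)<E_i(s)<E_{\max}(s)\}$ defining $P(s)$ has locally constant cardinality and no level crosses the band edge (internal crossings are harmless). Differentiating $P^2=P$ shows that $\partial_s P$, and trivially $[\mathcal{D}_s,P]$, have no ``in--in'' or ``out--out'' blocks, so only the off-diagonal block matters. For $i$ inside the band and $j$ outside, so $|E_i-E_j|>1/2$, differentiating $P(s)\ket{\psi_j(s)}=0$ together with the first-order identity $\langle\psi_i|\partial_s\psi_j\rangle=\langle\psi_i|\partial_s H_s|\psi_j\rangle/(E_j-E_i)$ gives $\langle\psi_i|\partial_s P|\psi_j\rangle=\langle\psi_i|\partial_s H_s|\psi_j\rangle/(E_i-E_j)$. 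On the other hand, inserting eigenstates into Eq.\eqref{eq_exact_qac_definition} and performing the $t$-integral yields $\langle\psi_i|\mathcal{D}_s|\psi_j\rangle=\widetilde{W}(E_i-E_j)\,\langle\psi_i|\partial_s H_s|\psi_j\rangle$; since $|E_i-E_j|>1/2$ we may use $\widetilde{W}(\omega)=-i/\omega$, so that $\langle\psi_i|[\mathcal{D}_s,P]|\psi_j\rangle=-\langle\psi_i|\mathcal{D}_s|\psi_j\rangle=\tfrac{i}{E_i-E_j}\langle\psi_i|\partial_s H_s|\psi_j\rangle$. Adding, the off-diagonal matrix element of $\partial_s P+i[\mathcal{D}_s,P]$ equals $(1+i\cdot i)\,\langle\psi_i|\partial_s H_s|\psi_j\rangle/(E_i-E_j)=0$, and Hermiticity disposes of the conjugate block; hence $\partial_s P=-i[\mathcal{D}_s,P]$ and the lemma follows.

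The step that needs real care, and which I regard as the main obstacle, is the smoothness of $P(s)$ and the validity of the eigenbasis manipulations in the presence of degeneracies or level crossings within the band, where the individual $\ket{\psi_i(s)}$ need not be differentiable. I would resolve this using the Riesz representation $P(s)=\frac{1}{2\pi i}\oint_\gamma (z-H_s)^{-1}\,dz$ for a fixed contour $\gamma$ encircling the band and lying in the gap: this shows $P(s)$ is as differentiable as $H_s$ irrespective of internal structure, and the matrix-element computation above is valid on the open dense set of $s$ where a local differentiable eigenframe exists; since $\partial_s(U_s^\dagger P(s) U_s)$ is continuous, vanishing there forces it to vanish on all of $[0,1]$. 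Alternatively one can run the whole argument basis-free, writing both $\partial_s P$ and $[\mathcal{D}_s,P]$ as contour integrals of $(z-H_s)^{-1}(\partial_s H_s)(z-H_s)^{-1}$ paired against the appropriate kernel and again invoking $\widetilde{W}(\omega)=-i/\omega$ on the relevant frequency window; I would flag this as the clean route but carry out the eigenbasis version explicitly. The remaining ingredients --- absolute convergence of the $t$-integral defining $\mathcal{D}_s$, Hermiticity of $\mathcal{D}_s$, and $U_0=I$ --- are immediate from the properties of $W(t)$ recalled just before the lemma.
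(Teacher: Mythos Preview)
The paper does not actually supply a proof of this lemma; it simply points to Lemma~7.1 of Ref.~\cite{bravyi2010topological}. Your argument---differentiating $U_s^\dagger P(s) U_s$, reducing the claim to the parallel-transport identity $\partial_s P(s)=-i[\mathcal{D}_s,P(s)]$, and verifying that identity blockwise using $\widetilde{W}(\omega)=-i/\omega$ for $|\omega|>1/2$---is precisely the standard argument that reference contains, so there is nothing to compare: you have reproduced the intended proof. Your care about the smoothness of $P(s)$ via the Riesz contour representation is the right way to handle internal degeneracies and is the only genuine subtlety. One cosmetic point: the paper later uses the convention $\partial_s U_s=i\,\mathcal{D}_s U_s$ (see the line after Eq.~\eqref{eq_uhu_h}), which flips the sign in your derivative formula to $\partial_s P(s)-i[\mathcal{D}_s,P(s)]$; this is a harmless convention issue and the argument goes through either way.
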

We choose $E_\text{min}(s)$ to be slightly below the lowest energy of of $H(s)$ and $E_\text{max}(s)$ to be slightly above the $M$-th lowest energy of $H(s)$ where $M$ is the ground state degeneracy of $H_0$. Therefore, $P(s)$ as defined in Eq.\eqref{eq_def_P_s} would be the projection onto the low-energy sector of $H(s)$, i.e. the perturbed ground space. In particular, since $P(0)=P$ we find,
\begin{align}\label{eq_projections_under_qac}
     U_s^\dagger\, P(s)\,U_s =P.
\end{align}

An important point to note is that Lemma \ref{lm_projection_mapping} requires the gap to remain open, which is why we had to argue in Section \ref{subsec_gap} that we may assume $\Delta_n(s)\ge 1/2$. Since $[H_s,P(s)]=0$, Lemma \ref{lm_projection_mapping} implies that  $[U_s^\dagger H_s U_s ,P]=0$, where $P$ is the projection on the ground state subspace of the unperturbed Hamiltonian $H_0$. Next, we write $U_s^\dagger H_s U_s$ as
\begin{align}\label{eq_Hps}
    H'_s=U_s^\dagger H_s U_s=H_0 + V',
\end{align}
where 
\begin{align}\label{eq_V_prime_def}
    V'=U_s^\dagger H_0 U_s - H_0 + s U_s^\dagger V U_s.
\end{align}
Importantly we have, 
\begin{align}
    [V',P]=0.
\end{align}
In what follows, we show that $V'$ is a quasi-local perturbation with strength $J$ and we compute its tail function. First note that by definition,
\begin{align}
    \mathcal{D}_s=\sum_{u} \int_{-\infty}^{+\infty} \dd t ~W(t) ~\exp(i H_s t)V_{u}\exp(-i H_s t).
\end{align}
Also note that $H_s=H_0+sV$ has strength $(e^\mu+sJ)$ and tail $e^{-\mu r}$. It is convenient to assume $J<1$, so we may upper bound the strength of the Hamiltonian $H_s$ to be $(e^\mu+1)$, which is independent of $s$ and $J$. Therefore, without loss of generality, in the rest of the proof we assume $J<1$. Based on Lemma \ref{lm_loc_decom_filter},  $\mathcal{D}_s$  has strength $J$ and tail $c\, r\, \Gamma(r) \max\{e^{-\mu r}, \eta_{1/14}(r/(4v_1))\}$, where $v_1$ is given as,
\begin{align}\label{eq_LR_velocity_1}
      v_1=\frac{8(e^\mu+1)}{\mu} \sum_r \Gamma(r)^2 e^{-\frac{\mu}{2}r},
\end{align}
which will be finite in the thermodynamic limit if the assumptions of Theorem \ref{th_main} hold.
Moreover, one can choose a constant $c'$ such that $c'\,\eta_{1/14}(r/(4v_1))\ge e^{-\mu r}$ for all $r\ge 0$. So we can take the tail of $\mathcal{D}_s$ to be,
\begin{align}\label{eq_Ds_tail}
    g(r)=c_\mathcal{D}\, r\, \Gamma(r) \,\eta_{1/14}(r/(4v_1)),
\end{align}
where $c_\mathcal{D}$ is a constant that depends on $\mu$ and $v_1$. In the rest of this section, we will use the same reasoning to simplify the tail expressions in Lemma \ref{lm_loc_decom_filter} and Lemma \ref{lm_loc_decom_eta_unitary} whenever possible. 

Now we can use quasi-locality of $\mathcal{D}_s$ to infer the  locality properties of  $V'$ defined in Eq.\eqref{eq_V_prime_def}. Based on Lemma \ref{lm_loc_decom_eta_unitary}, $U_s^\dagger V U_s$ has strength $J$ and tail 
\begin{align}
    h_1(r)=c_1~r~\Gamma(r)~\eta_{1/28}(r/(16v_1)),
\end{align}
where $c_1$ is a constant that depends on $\mu$, $v_1$ and $v_2$, where $v_2$ is given as,
\begin{align}\label{eq_LR_velocity_2}
    v_2= c_\mathcal{D}\,J\sum_r\,r\,\Gamma^3(r)\, \eta_{\frac{1}{28}}(r/(4v_1)),
\end{align}
which remains finite in the thermodynamic limit as well if the assumptions of Theorem \ref{th_main} hold. As for the first two terms $U_s^\dagger H_0 U_s - H_0$, we note that
\begin{align}\label{eq_uhu_h}
    U_s^\dagger H_0 U_s - H_0=\int_0^s\dd s' \partial_{s'}\qty(U_{s'}^\dagger H_0 U_{s'})=-i \int_0^s \dd s' U_{s'}^\dagger [\mathcal{D}_{s'},H_0] U_{s'},
\end{align}
where we have used $\partial_s U_s=i \mathcal{D}_s U_s$ to get to the third expression. Since  $\mathcal{D}_s$ has strength $J$ and tail $g(r)$ (given by Eq.\eqref{eq_Ds_tail}), one can write it as $\mathcal{D}_s=\sum_r D_{u,r}$ with $\norm{D_{u,r}}\le J g(r)$. 
Therefore, we may write the commutator above as $[\mathcal{D}_s,H_0]=\sum_{u,r} \sum_{u'}[D_{u,r},Q_{u'}]=\sum_{u,r\ge 3}A_{u,r}$, where
\begin{align}
    A_{u,r}=\sum_{u':~d(u,u')\le r}[D_{u,r-1},Q_{u'}].
\end{align}
Note that $A_{u,r}$ is supported on $B_u(r)$ and, 
\begin{align}
    \norm{A_{u,r}}\le& \sum_{u':~d(u,u')\le r}2\norm{D_{u,r-1}}\norm{Q_{u'}}\le 2J~ \Gamma(r)\,g(r-1)\\
    &\le c_Q~J~r~\Gamma(r)^2~ \eta_{\frac{1}{14}}(r/(4v_1)),\label{eq_DH0_loc_decom}
\end{align}
for a constant $c_Q$ that depends on $\mu$ and $v_1$. This shows $[\mathcal{D}_s,H_0]$ has strength $J$ and tail $ c_Q~J~r~\Gamma(r)^2~ \eta_{\frac{1}{14}}(r/(4v_1))$. Then, by using Lemma \ref{lm_loc_decom_eta_unitary}, one finds that $ U_{s'}^\dagger [\mathcal{D}_{s'},H_0] U_{s'}$ has strength $J$ and tail 
\begin{align}
    h_2(r)=c_2~r^2~\Gamma(r)^3~ \eta_{1/28}(r/(16v_1)).
\end{align}
$c_2$ is a constant that depends on $\mu$, $v_1$ and $v_2$. Since this bound is uniform in $s'$, it remains valid for $ \int_0^1 \dd s' U_{s'}^\dagger [\mathcal{D}_{s'},H_0] U_{s'}$.

Therefore, by combining the local decompositions for $U_s^\dagger V U_s$ and $U_s^\dagger H_0 U_s-H_0$, we see that $V'$ (given by Eq.\eqref{eq_V_prime_def}) has strength $J$ and tail
\begin{align}\label{eq_vprime_tail}
    h(r)=c_{V'} \, r^2\, \Gamma(r)^3\,\eta_{\frac{1}{28}}(r/(16 v_1)),
\end{align}
for some constant $c_{V'}$ which depends on $\mu$, $v_1$ and $v_2$.

Finally, we need another transformation so each term $V'_u=\sum_r V'_{u,r}$ commutes with $P$ (So far it is only $V'=\sum_u V'_u$ which commutes with $P$).
To this end, we use the filter function $w_{\frac{1}{2}}(t)$. The family of $w_\gamma(t)$ filter functions is briefly reviewed in Appendix \ref{apx_filterfunction}. For simplicity, for the rest of the paper we denote $w_{1/2}(t)$ simply by $w(t)$. The filter function $w(t)$ is related to the previous filter function $W(t)$ via the following,
\begin{align}\label{eq_W_from_w}
    W(t_1)=
    \begin{cases}
      \int_{t_1}^\infty \dd t~w(t),& t_1\ge 0,\\
      -\int_{-t_1}^\infty \dd t~w(t),& t_1< 0.
    \end{cases}
\end{align}
In particular, $w(t)$ has the following properties (see Appendix \ref{apx_filterfunction} for details): $w(t)$ is even and non-negative. Also, $\int_{t}^\infty w(t)\le c\, \eta_{\frac{1}{14}}(t)$ for a constant $c$ for any $t\ge 0$, and hence it satisfies conditions of Lemma \ref{lm_loc_decom_filter}.
Moreover, it is chosen such that  $\widetilde{w}(0)=1$ and $\widetilde{w}(\omega)=0$ for $\omega> 1/2$, where $\widetilde{w}$  is the Fourier transform of $w(t)$.

Since $\widetilde{w}(0)=1$, we may rewrite $H'_s$ (given in Eq.\eqref{eq_Hps}) as,
\begin{align}
    H'_s=\int_{-\infty}^{+\infty} \dd t~w(t)~\exp(i H_s' t) H'_s \exp(-i H_s' t)=H_0 + \sum_u (\widetilde{Q_u}-Q_u)+\sum_u \widetilde{V'_u}.
\end{align}
where 
\begin{align}\label{eq_V_tilde}
    \widetilde{V'_u}=\int_{-\infty}^{+\infty} \dd t~w(t) \exp(i H_s' t) V'_u  \exp(-i H_s' t).
\end{align}
and $V'_u=\sum_r V'_{u,r}$. $\widetilde{Q_u}$ is defined similarly. On the other hand, since $\widetilde{w}(\omega)=0$ for $\omega >1/2$, it is easy to eigendecompose the above and see that $\mel{E_j}{\widetilde{V'_u}}{E_i}=0$ for $|E_i-E_j|> 1/2$, where $\ket{E_i}$ denotes the eigen states of $H_s'$ with energy $E_i$. Since $H_s'$ is related to $H_s$ by a unitary transformation, its spectral gap is also lower bounded by $1/2$. More importantly, due to Eq.\eqref{eq_projections_under_qac}, the projection $P$ is the projection on the ground state subspace of $H'_s$ as well. Hence, $(1-P)\widetilde{V'_u} P=P\widetilde{V'_u}(1-P)=0 $. The same is true about $\widetilde{Q_u}-Q_u$. This means that $[\widetilde{V'_u},P]=[\widetilde{Q_u}-Q_u,P]=0$. It remains to show that they have strength $J$ and rapidly decaying tails.

 To argue about locality of $\widetilde{V'_u}$, first we rewrite Eq.\eqref{eq_V_tilde} as,
 \begin{align}\label{eq_u_vtilde_udagger}
     U_s \widetilde{V'_u} U_s^\dagger=\int_{-\infty}^{+\infty} \dd t~w(t)~\exp(i H_s t) U_s V'_u U_s^\dagger \exp(-i H_s t).
 \end{align}
where now the time evolution inside the integral is generated by $H_s$ (without the prime) which has exponentially vanishing tails. 
Based on Lemma \ref{lm_loc_decom_eta_unitary}, $U_s V'_u U_s^\dagger$ is localized around $u$, has strength $J$ and decay function $c_3~r^3~\Gamma(r)^4~\eta_{\frac{1}{28}}(r/(32 v_1))$ for a constant $c_3$ which depends on $\mu$, $v_1$ and $v_2$. Then, based on Lemma \ref{lm_loc_decom_filter},  $U_s \widetilde{V'_u} U_s^\dagger$ is localized around $u$, has strength $J$ and tail $c'_3~ r^4 \Gamma(r)^5~\eta_{1/28}(r/(64 v_1)$ for a constant $c'_3$ which depends on $\mu$, $v_1$ and $v_2$. Finally, since  $\widetilde{V'_u}$ is the evolution of  $U_s \widetilde{V'_u} U_s^\dagger$ under $U_s$, based on Lemma \ref{lm_loc_decom_eta_unitary} it is localized around $u$, has strength $J$ and tail $c''_3~r^5 ~\Gamma(r)^6~\eta_{1/28}(r/(128 v_1))$ for a constant $c''_3$ which depends on $\mu$, $v_1$ and $v_2$.

As for $\widetilde{Q_u}-Q_u$, note that, 
\begin{align}
    \widetilde{Q_u}&=\int_{-\infty}^{+\infty} \dd t~w(t)~\exp(i H_s' t) Q_u  \exp(-i H_s' t)\\
    &=\int_{-\infty}^{+\infty} \dd t~w(t)\Bigg[Q_u+\int_0^t\dd t_1~ \frac{\partial}{\partial t_1}\Bigg(\exp(i H_s' t_1) Q_u  \exp(-i H_s' t_1)\Bigg)\Bigg]\\
                    &= Q_u +i \int_{-\infty}^{+\infty} \dd t~ w(t)~\int_0^t \dd t_1 \exp(i H_s' t_1) [V',Q_u]  \exp(-i H_s' t_1)\\
                    &= Q_u +i \int_{-\infty}^{+\infty} \dd t_1~W(t_1)~\exp(i H_s' t_1) [V',Q_u]  \exp(- i H_s' t_1),\label{eq_qtilde_minus_q}
\end{align}
where in the last line we have used Eq.\eqref{eq_W_from_w}. We may write $[V',Q_u]$ as the following,
\begin{align}
    [V',Q_u]=\sum_{u',r}[V'_{u',r},Q_u]=\sum_{r} \sum_{u':~d(u,u')\le r+1} [V'_{u',r},Q_u]=\sum_{r} Q'_{u,2r+1},
\end{align}
with $Q'_{u,2r+1}$ defined as $\sum_{u':~d(u,u')\le r+1} [V'_{u',r},Q_u]$. Note that $Q'_{u,2r+1}$ is supported on $B_u(2r+1)$ and its norm is upper bounded by,
\begin{align}
\norm{Q'_{u,2r+1}}\le \sum_{u':~d(u,u')\le r+1} \norm{[V'_{u',r},Q_u]}\le 2\Gamma(r+1) h(r),
\end{align}
with $h(r)$  given by Eq.\eqref{eq_vprime_tail}. This shows that $[V',Q_u]$ has strength $J$ and tail $c_4r^2\Gamma(r)^4\eta_{1/28}(r/(32 v_1))$ with $c_4$ a constant that depends on $\mu$, $v_1$ and $v_2$. Sandwiching both sides of \eqref{eq_qtilde_minus_q} by $U_s$ yields, 
\begin{align}
    U_s~(\widetilde{Q_u}- Q_u)~U_s^\dagger&= i \int_{-\infty}^{+\infty} \dd t_1~W(t_1)~\exp(i H_s t_1)~U_s~[V',Q_u]~U_s^\dagger~\exp(- H_s t_1).
\end{align}

The same line of reasoning as what followed Eq.\eqref{eq_u_vtilde_udagger} shows that $\widetilde{Q_u}-Q_u$ has strength $J$ and tail $c'_4~r^5 ~\Gamma(r)^7~\eta_{1/28}(r/(256\,v_1))$ with $c'_4$ a constant which depends on $\mu$, $v_1$ and $v_2$. By combining all contributions, we see that $W_u\equiv \widetilde{Q_u}-Q_u+\widetilde{V'_u} $ is localized around $u$, has strength $J$ and has tail
\begin{align}\label{eq_W_tail}
    f(r)=c_W~r^5 ~\Gamma(r)^7~\eta_{\frac{1}{28}}(\frac{r}{256 v_1}),
\end{align}
 where $c_W$ is a constant which depends on $\mu$, $v_1$ and $v_2$. Importantly, $[W_u,P]=0$.


\subsection{Reduction of globally diagonal perturbation to a locally diagonal one.}
So far we have shown that the energy spectrum of the Hamiltonian $H_0+V$ where $V$ has strength $J$ and tail $e^{-\mu r}$ is equivalent to energy spectrum of $H_0+W$,  where $W=\sum_u W_u$ is globally block diagonal, i.e. $[W_u,P]=0$ for all $u$, has strength $J$ and has tail $f(r)$ given in Eq.\eqref{eq_W_tail}.  In this section we show that one can rewrite $W$ as  $W=\widetilde{W}+\delta_1 +\text{const.}$, where $\widetilde{W}$ is locally block diagonal, has strength $J$ and has a vanishing tail which we will compute, and $\delta_1$ is an error term whose norm, $\norm{\delta_1}$, goes to zero in the thermodynamic limit.

Consider the interaction $W_u=\sum_r W_{u,r}$. Let $\rho^\ast$ denote the local indistinguishability radius of $H_0$. By adding the appropriate constants to $W_{u,r}$, we may assume that $P W_{u,r} P=0$ for $r< \rho^\ast$. Define $\delta_W$ as 
\begin{align}
    \delta_W=PW_u=PW_u P=\sum_{r \ge \rho^\ast}PW_{u,r}P.
\end{align}
Note that,
\begin{align}
    \norm{\delta_W}\le J\sum_{r \ge \rho^\ast}^D f(r)=J \bar{f}(\rho^\ast),
\end{align}
where $\bar{f}(r)$ is defined as,
\begin{equation}
    \bar{f}(r)\equiv \sum_{r'\ge r}^D f(r').
\end{equation}
Since $f(r)$ decays faster than any polynomial, $\bar{f}(r)$ is also a decaying function of $r$, vanishing faster than any polynomial. Define $W'_{u}=W_u-\delta_W$. Furthermore, we define  $W'_{u,r}=W_{u,r}$ for $r<D$ and $W'_{u,D}=W_{u,D}-\delta_W$, so $W'_u=\sum_r W'_{u,r}$, with $\norm{W'_{u,r}}< J \tilde{f}(r)$, with
\begin{equation}
\tilde{f}(r)=
    \begin{cases} 
      f(r) & r< D \\
      f(D)+\bar{f}(\rho^\ast) & r=D
   \end{cases}.
\end{equation}
Note that $PW'_u=W'_u P=0$, and therefore $W'_u$ has non-zero elements only between excited states, i.e. 
\begin{align}\label{eq_Wp_excited_prj}
     W'_u=(1-P)W'_u(1-P).
\end{align}
Let us define $E_{u,i}$ as follows,
\begin{align}\label{eq_E_expr}
    E_{u,i}= (1-P_{B_u(i)})P_{B_u(i-1)}=P_{B_u(i-1)}-P_{B_u(i)},
\end{align}
with $P_{B_u(0)}=I$. $E_{u,i}$ is the projection into the subspace spanned by the eigenstates of $H_0$ with at least a defect on $B_u(i)$ but no defect on $B_u(i-1)$, with having a \textit{defect} on some region \(A\) meaning that there is a site $v \in A$ for which $Q_v\ket{\psi}=\ket{\psi}$. 
In other words, it is the projection onto the subspace where the first defect appears in distance $i$ from $u$. 
 It is clear from Eq.\eqref{eq_E_expr} that $E_{u,i}$ is supported on $B_u(i)$. Furthermore, the \(E\)'s telescopically sum up to the projection outside of the codespace,
\begin{align}
    1-P=\sum_{i=1}^D E_{u,i}~.
\end{align}
By using Eq.\eqref{eq_Wp_excited_prj}, we may write $W'_u$ as,
\begin{align}
    W'_u =\sum_r W'_{u,r}=\sum_{i,j,r} E_{u,i} W'_{u,r} E_{u,j}.
\end{align}

Now each term in the sum commutes with $P$ and is supported on $B_u(R)$ with $R=\max(i,j,r)$. It remains to show that the norm decays faster than any power of $R$. To this end, we break the sum into three parts,
\begin{align}
    \sum_{i,j,r} E_{u,i} W'_{u,r} E_{u,j} =&\sum_r \sum_{i,j \le r} E_{u,i} W'_{u,r} E_{u,j} \nonumber \\
                                &+\sum_i \sum_{j,r< i} E_{u,i} W'_{u,r} E_{u,j} + E_{u,j} W'_{u,r} E_{u,i} \nonumber\\
                                &+\sum_i \sum_{r < i} E_{u,i} W'_{u,r} E_{u,i} \nonumber \\
                                =&\sum_r X_{u,r}  
                                +\sum_i Y_{u,i}  
                                +\sum_i Z_{u,i},\label{eq_XYZ}
\end{align}
with,
\begin{align}
    X_{u,r}=&\qty(\sum_{i\le r} E_{u,i}) W'_{u,r}\qty(\sum_{j\le r} E_{u,j})=(1-P_{B_u(r)})W'_{u,r}(1-P_{B_u(r)}),\\
    Y_{u,i}=&P_{B_u(i-1)}(1-P_{B_u(i)})\qty(\sum_{r<i} W'_{u,r}) (1-P_{B_u(i-1)})\nonumber\\
            &+(1-P_{B_u(i-1)})\qty(\sum_{r<i} W'_{u,r}) (1-P_{B_u(i)})P_{B_u(i-1)}\\
    Z_{u,i}=&P_{B_u(i-1)}(1-P_{B_u(i)})\qty(\sum_{r<i} W'_{u,r})(1-P_{B_u(i)})P_{B_u(i-1)}.
\end{align}
Note that $X_{u,r}$ is supported on $B_u(r)$ and $Y_{u,i}$ and $Z_{u,i}$ are supported on $B_u(i)$. Moreover, $X_{u,r}$, $Y_{u,i}$ and $Z_{u,i}$ are all Hermitian.

For $r<D$, we have $\norm{X_{u,r}}\le J f(r)$ because $\norm{W'_{u,r}}\le J f(r)$. We will treat $X_{u,D}$ as error $\delta_X$, with $\norm{\delta_X}\le J~(f(D)+\bar{f}(\rho^\ast))\le 2J~\bar{f}(\rho^\ast)$. As for $Y_{u,i}$, we show that the norm of the terms with $i\le \rho^\ast$ decays with $i$ and  we treat the terms with $i>\rho^\ast$ as errors with vanishing norm. First, note that for $i \le\rho^\ast$, we have
\begin{align}
    \frac{1}{2}\norm{Y_{u,i}}\le \norm{P_{B_{u}(i-1)} \sum_{r<i}W'_{u,r}}&\le \norm{P_{B_{u}(i-1)} \sum_{r\le i-2}W'_{u,r}}+\norm{W'_{u,i-1}}\\
    &\le \norm{P \sum_{r\le i-2}W'_{u,r}}+J\tilde{f}(i-1)\\
    &=\norm{PW'_u-P\sum_{r\ge i-1}W_{u,r}}+J\tilde{f}(i-1)\\
    &=\norm{P\sum_{r \ge i-1}W'_{u,r}}+J\tilde{f}(i-1) \\
    &\le J\sum_{r\ge i-1}\tilde{f}(r)+J\tilde{f}(i-1) \\
    &\le 2J\sum_{r\ge i-1}\tilde{f}(r)\\
    &= 2J \sum_{r\ge i-1} f(r) + 2J \bar{f}(\rho^\ast)=2J\,\bar{f}(i-1)+2J\,\bar{f}(\rho^\ast)\\
    &\le 4J\,\bar{f}(i-1).
\end{align}
In the first line we have used the fact that for any operator $A$ and projection $\Pi$, one has $\norm{A~\Pi}\le \norm{A}$ and that $\norm{A}=\norm{A^\dagger}$. To go from the first line to the second we have used Lemma \ref{lm_lcgc}, or more precisely the conjugate transposed version of that.
On the other hand if $i> \rho^\ast$, we have
\begin{align}
    \frac{1}{2}\norm{Y_{u,i}} &\le \norm{P_{B_{u}(i-1)}\sum_{r<i}W'_{u,r}}\le \norm{P_{B_{u}(i-1)}\sum_{r\le \rho^\ast-1}W'_{u,r}}+\norm{P_{B_{u}(i-1)}\sum_{\rho^\ast\le r<i} W'_{u,r}}\\
    &\le \norm{P\sum_{r\le \rho^\ast-1}W'_{u,r}}+\norm{\sum_{\rho^\ast\le r<i} W'_{u,r}}\le \norm{PW'_u -P\sum_{r\ge \rho^\ast}W'_{u,r}} + J \sum_{r\ge\rho^\ast} \tilde{f}(r)\\
    &\le 2J\,\bar{f}(\rho^\ast) + 2J \bar{f}(\rho^\ast)\le 4J~\bar{f}(\rho^\ast).
\end{align}
We define $\delta_Y$ as $\delta_Y=\sum_{i> \rho^\ast} Y_{u,i}$, and the above analysis shows that 
\begin{align}
    \norm{\delta_Y}\le 8~J~D~\bar{f}(\rho^\ast).
\end{align}
Almost the same calculation shows that $\norm{Z_{u,i}}\le 4J~\bar{f}(i-1)$ for $i\le \rho^\ast$. Moreover, we define $\delta_Z=\sum_{i>\rho^\ast} Z_{u,i}$ and the same reasoning as above shows that  $\norm{\delta_Z}\le 4~J~D~\bar{f}(\rho^\ast)$.

Putting everything together, we see that
\begin{align}
    W_u=W'_u +\delta_W+\text{const.}=\widetilde{W}_u+\delta_u+c,
\end{align}
where $\widetilde{W}_u$ is given as,
\begin{align}
    \widetilde{W}_u=\sum_r^{D-1} X_{u,r}+\sum_{i=1}^{\rho^\ast} Y_{u,i}+\sum_{j=1}^{\rho^\ast} Z_{u,j},
\end{align}
where $X_{u,r}$, $Y_{u,r}$ and $Z_{u,r}$ are Hermitian and $[X_{u,r},P]=[Y_{u,r},P]=[Z_{u,r},P]=0$. Hence, $\widetilde{W}=\sum_u \widetilde{W}_u$ is locally block diagonal, has strength $J$ and has a tail,
\begin{align}
    f_{\widetilde{W}}(r)=c_{\widetilde{W}}\sum_{r'\ge r-1} r'^5~\Gamma(r')^7~\eta_{\frac{1}{28}}(\frac{r'}{256\,v_1}),
\end{align}
where $c_{\widetilde{W}}$ is constant which depends on $\mu$, $v_1$ and $v_2$. Also $\delta_u$ is given as,
\begin{align}
    \delta_u=\delta_W+\delta_X+\delta_Y+\delta_Z,
\end{align}
whose norm is upper bounded by $ O\qty(D~\bar{f}(\rho^\ast))$. Since this error is only for a single site, The norm of the total error is bounded by $O\qty(N~D~\bar{f}(\rho^\ast))$, which goes to zero based on the assumption of the theorem.


\subsection{Locally block diagonal perturbations are relatively bounded}

In this section we show that locally block diagonal Hamiltonians are relatively bounded by $H_0$, up to small errors which go to zero in the thermodynamic limit.  
We start by partitioning the Hilbert space using the projections $Q_u$. Recall that we say a state $\ket{\psi}$ has a defect on site $u$ if $Q_u\ket{\psi}=\ket{\psi}$. Note that the ground states have no defects. Let $\bm{\lambda}=(\lambda_1,\cdots,\lambda_N)$ be a vector in $\mathbb{Z}_2^N$. We define the projection $P_{\bm{\lambda}}$ as,
\begin{align}
    P_{\bm{\lambda}}=\prod_{u:\lambda_u=0}(I-Q_u)\prod_{u:\lambda_u=1}Q_u.
\end{align}
Basically, $\bm{\lambda}$ specifies the location of the defects, and $ P_{\bm{\lambda}}$ is the projection on the corresponding subspace. Clearly, the set of $P_{\bm{\lambda}}$ projectors make a complete set of orthogonal projectors,
\begin{align}
    \sum_{\bm{\lambda}}P_{\bm{\lambda}}=I,\text{ and} \quad P_{\bm{\lambda}}P_{\bm{\lambda}'}=0\quad \text{for } \bm{\lambda}\ne \bm{\lambda}'.
\end{align}
Also note that $H_0=\sum_{\bm{\lambda}} |\bm{\lambda}| P_{\bm{\lambda}}$, where $|\bm{\lambda}|$ stands for the weight of $\bm{\lambda}$.

Consider the locally block diagonal perturbation $\widetilde{W}=\sum_{u,r}\widetilde{W}_{u,r}$. By adding appropriate constants we may assume $P \widetilde{W}_{u,r}P=0$ for $r< \rho^\ast$, with $\rho^\ast$ denoting the local indistinguishability radius. Since $\widetilde{W}$ is locally block diagonal, it means $\widetilde{W}_{u,r}P=0$ for $r< \rho^\ast$. Then by Lemma \ref{lm_lcgc}, we see that for $r< \rho^\ast$ we have $\widetilde{W}_{u,r}P_B=0$, where $B=b_1(B_u(r))=B_u(r+1)$ is the one-neighborhood of \(B_u(r)\). Therefore, if $\bm{\lambda}$ does not have any defect in $B$, then $\widetilde{W}_{u,r} P_{\bm{\lambda}}=0$ because $P_{\bm{\lambda}}=P_B\times\cdots $ where the ellipsis represent the rest of projections in $P_{\bm{\lambda}}$. As we will see this can be used to show that up to vanishing errors, $\widetilde{W}$ is relatively bounded by $H_0$.   

Let $\ket{\psi}$ be any wave function. Let $\widetilde{W}_{r}=\sum_u\widetilde{W}_{u,r}$, and assume $r< \rho^\ast$. Then, 
\begin{align}
    \norm{\widetilde{W}_r\psi}^2&=\norm{\sum_u \widetilde{W}_{u,r} \sum_{\bm{\lambda}} P_{\bm{\lambda}} \psi}^2\le \sum_{\bm{\lambda}} \sum_u\norm{ \widetilde{W}_{u,r} P_{\bm{\lambda}} \psi}^2\\
                &\le \sum_{\bm{\lambda}} \sum_u\norm{ \widetilde{W}_{u,r} P_{\bm{\lambda}}}^2\norm{P_{\bm{\lambda}} \psi}^2 \le \sum_{\bm{\lambda}} |\bm{\lambda}|~\Gamma(r+1)~J^2 f^2_{\widetilde{W}}(r) \norm{P_{\bm{\lambda}} \psi}^2\\
                &\le \Gamma(r+1)~J^2 f^2_{\widetilde{W}}(r)~\sum_{\bm{\lambda}}|\bm{\lambda}|^2 \norm{P_{\bm{\lambda}}\psi}^2= \Gamma(r+1)~J^2 f^2_{\widetilde{W}}(r)\mel{\psi}{\sum_{\bm{\lambda}} |\bm{\lambda}|^2  P_{\bm{\lambda}}}{\psi}\\
                &= \Gamma(r+1)~J^2 f^2_{\widetilde{W}}(r)\norm{H_0 \psi}^2~,
\end{align}
where in the second line we used the fact that $\widetilde{W}_{u,r}P_{\bm{\lambda}}=0$ if $\bm{\lambda}$ does not have any defect in $b_1(B_u(r))$. To be more precise, only for sites within  $r+1$ distance of a defect in $\bm{\lambda}$ we have $\widetilde{W}_{u,r}P_{\bm{\lambda}}\ne 0$ and there are at most $|\bm{\lambda}|\Gamma(r+1)$ number of such sites $u$. By summing over $r$ up to $\rho^\ast$ we get,
\begin{align}
    \norm{\sum_{r<\rho^\ast} \widetilde{W}_r\psi}\le J \qty(\sum_{r<\rho^\ast} \Gamma(r+1)^{1/2} f_{\widetilde{W}}(r))\norm{H_0 \psi}.
\end{align}
In other words, $\sum_{r<\rho^\ast} W_r$ is relatively bounded by $H_0$ with parameter $J b$, where $b$ is an upper bound on the following sum, 
\begin{align}
    \sum_{r<\rho^\ast} \Gamma(r+1)^{1/2} f_{\widetilde{W}}(r)<b,
\end{align}
which remains finite based on the assumptions of the theorem (see \ref{eq_th_assumption}). 

Finally, we may write $\widetilde{W}$ as follows,
\begin{align}
    \widetilde{W}=\sum_{r< \rho^\ast} \widetilde{W}_r + \sum_{r\ge \rho^\ast} \widetilde{W}_r.
\end{align}
The first is part is relatively bounded. The second part can be treated as an error $\delta_{\widetilde{W}}$, whose norm, 
\begin{align}
    \norm{\delta_{\widetilde{W}}}\le\sum_{r\ge\rho^\ast} \norm{W_r}\le N~D~J f_{\widetilde{W}}(\rho^\ast),
\end{align}
vanishes in the thermodynamic limit. Then the statement of Theorem \ref{th_main} follows from Lemma \ref{lm_rlt_bnd_pert}.

\section{Example: Semi-hyperbolic Surface Code}\label{sec_example_semihyperbolic}
The \eczoohref[2d hyperbolic surface code]{two_dimensional_hyperbolic_surface} \cite{freedman2002z2,breuckmann2016} is based on tiling of closed 2d hyperbolic surfaces with regular polygons. A specific tiling can be described by a set of two numbers $\{p,q\}$,
known as Schl\"{a}fli symbols, which represents a tiling of the plane with regular $p$-sided polygons such that $q$ of them meet at every vertex. When $1/p+1/q<1/2$, the associated tiling can only be realized on a surface with negative curvature.  

Given a $\{p,q\}$ tiling, one can define a stabilizer code where physical qubits lie on the edges and each vertex (plaquette) represents a
$Z$-type($X$-type) stabilizer. A topologically non-trivial closed hyperbolic surface with $g$ handles has $2g$ non-trivial independent
loops which can be used to define $k=2g$ logical qubits that are stabilized by the code. On the other hand, the surface area of a closed hyperbolic surface (with constant negative curvature) is proportional to $g-1$ (due to the Gauss-Bonnet theorem), and because the surface is tiled by regular polygons, it means that the number of physical qubits $N$ would be also proportional to $g-1$. Therefore, if one considers a family of closed hyperbolic surfaces with increasing number of handles $g$ that admit a $\{p,q\}$-tiling, one can construct a family of qLDPC codes which has a constant encoding rate $r=k/N=\Theta(1)$. The distance $d$ grows logarithmically with the number of qubits, resulting in a code $\mathcal{Q}_H=\llbracket N, r\,N, \gamma\,\log N \rrbracket$ for constants $r,\gamma >0$. As a consequence of the hyperbolic geometry, if one considers the interaction graph associated with these codes, one finds $\Gamma_n(r)\sim \exp(\alpha\,r)$ with a positive constant $\alpha$ for $r$ much smaller than the code distance $d_n$. Given that the balls with radius less than $d_n/2$ are topologically trivial, they do not contain any logical operator, thus $\rho^\ast_n\sim \log N$.

There is a variant of the hyperbolic code, known as the semi-hyperbolic code \cite{breuckmann2017}, where one considers a series of the hyperbolic surface code  associated to $\{p=4,q\}$ tilling, and then replace all squares by a $l \times l$ square lattice. This would result in a qLDPC code with parameters $\mathcal{Q}_\text{SH}=\llbracket n\,l^2, r\,n, \gamma\,l\,\log n \rrbracket$. If we let $l$ to be a function of $n$, we can tune between the hyperbolic surface code and the normal toric code on surfaces with non-trivial topology by controlling how $l_n$ scales with $n$. 

We consider the case where $l_n=n^a$ for a fixed $a\ge 0$. It results in a semi-hyperbolic surface code family with parameteres $\mathcal{Q}_\text{SH}=\llbracket n^{1+2a}, r\,n, \gamma~n^{a}~\log n \rrbracket=\llbracket N, r\,N^{1-\varepsilon}, (1-\varepsilon)\,\gamma\,\sqrt{N^\varepsilon}\,\log N \rrbracket$, with $0 \le \varepsilon=\frac{2a}{1+2a} <1$. As for the interaction graph associated to the semi-hyperbolic code, we have
\begin{align}
\Gamma_n(r)=
\begin{cases} 
      \alpha\, r^2 & r\le l_n \\
      l_n^2\, e^{\beta\,r/l_n} & l_n<r\le D_n
  \end{cases},
\end{align}
where $D_n=O(l_n \log(n))$ is the diameter of the interaction graph, with constants $\alpha $ and $\beta$ which depend on the tiling parameter $q$.  Moreover, arguments similar to the hyperbolic surface code show that $\rho^\ast_n\sim l_n\log(n)\sim \sqrt{N}^\varepsilon \log(N)$ for the semi-hyperbolic surface code. As we will show below, this family of semi-hyperbolic surface code for $a>0$ (i.e., $\varepsilon>0$) satisfies the assumptions of Theorem \ref{th_main} and as such, results in a stable qLDPC Hamiltonian. Moreover, we argue that the perturbation strength threshold $J_0$ is a constant independent of $a>0$.

We start by upper bounding the function $\bar{f}_n(r)$. We are going to estimate the discrete sums with integrals. For simplicity we consider perturbations with exponential tails $e^{-\mu r}$ whose parameter $\mu$ is large enough such that $v_1$ in Eq.\eqref{eq_th_v1} is bounded even for $a=0$. Note that the same value of $v_1$ bounds the sum in Eq.\eqref{eq_th_v1} for other values of $a\ne 0$, since a positive $a>0$ simply makes $\Gamma_n(r)$ grow slower. Next, we choose a constant $c$ (again independent of $a$) such that $\eta_{1/28}(r/(256 v_1))\le c\, \exp(-\sqrt{r})$ for all $r\ge 0$. Then for $r\le l_n$, we have,
\begin{align}
    \bar{f}_n(r)&\le c_1 \int_r^{l_n} \dd r' r'^{19} e^{-\sqrt{r'}}
    +c_2\, l_n^{14}\int_{l_n}^{\gamma\, l_n \log(n)}  \dd r' r'^5 e^{7\beta r'/l_n}e^{-\sqrt{r'}}\\
    &\le c_1 \int_r^{\infty} \dd r' r'^{19} e^{-\sqrt{r'}}
    +c_2\, l_n^{20}\, n^{7\gamma \beta}e^{-\sqrt{l_n}}\int_{1}^{\gamma\, \log(n)}  \dd u u^5 \\
    &\le \text{poly}(r)e^{-\sqrt{r}}+\text{poly}(n)e^{-n^{a/2}}.\label{eq_fbar_shc} 
\end{align}
Importantly, the first term in \eqref{eq_fbar_shc} does not depend on $a$. For $l_n<r\le D_n$, we have only the second integral above so,
\begin{align}
    \bar{f}_n(r)\le \text{poly}(n)e^{-n^{a/2}}.
\end{align}
First note that for any $a> 0$,
\begin{align}
    \lim_{n\to\infty} N_n D_n \bar{f}_n(\rho^\ast_n)\le\lim_{n\to\infty} \text{poly(n)}e^{-n^{a/2}}=0.
\end{align}
Therefore, Eq.\eqref{eq_th_error} holds for any $a>0$.
Moreover, it is easy to see that the sum in Eq.\eqref{eq_th_assumption} is bounded by a constant independent of $a>0$:
\begin{align}
    \int_0^{D_n}\dd r \Gamma_n(r+1)^{1/2}\bar{f}_n(r)&\le \int_0^{l_n-1} \dd r~(r+1)\,(\text{poly}(r)\, e^{-\sqrt{r}}+\text{poly}(n)e^{-n^{a/2}})\\
    &\quad+l_n\,\int_{l_n-1}^{\gamma\, l_n\, \log(n)}\dd r\, e^{\beta\, (r+1)/(2l_n)} \text{poly}(n)e^{-n^{a/2}}\\
    &\le \int_0^{\infty} \dd r\,\text{poly}(r)\, e^{-\sqrt{r}}+\text{poly(n)}e^{-n^{a/2}}
\end{align}
Therefore, by choosing $n_0$ large enough, we can bound the sum in Eq.\eqref{eq_th_assumption} by a constant $b_0$ which is independent of $a>0$ for $n\ge n_0$. Hence, there is a constant $J_0$ (independent of $a$, or equivalently independent of $\varepsilon=\frac{2a}{1+2a}$), such that the semi-hyperbolic surface code Hamiltonian is stable in the thermodynamic limit, if the perturbation strength $J$ is less than $J_0$. 

\subsection{Comparison to decoupled stacks of 2d codes}

It is worth comparing the semi-hyperbolic family of code Hamiltonians to another trivial construction with similar code parameters which is inspired by topological Hamiltonians with \eczoohref[fracton]{fracton} order \cite{nandkishore2019fractons}. Consider  stacking a number $L^{d-2}$ of $2$-dimensional toric codes of length $L$ into a hypercubic lattice of length $L$ in $d$ dimensions, which we take to be the interaction graph\footnote{To be more precise, the interaction graph associated to the stack of toric codes as defined in Example \ref{ex_stabilizer_code_hams} is just a stack of decoupled toric code interaction graphs. However, given that they are stacked in $d$ dimensions, the natural choice for the graph underlying local perturbations is the $d$ dimensional hypercube.}. Such a system encodes $k=L^{d-2}$ logical qubits in $N=L^d$ physical qubits and has code distance $L$, and hence has code parameters $\llbracket N, N^{1-\varepsilon}, \sqrt{N^\varepsilon}\rrbracket$ with $0<\varepsilon=\frac{2}{d}\le 1$. Since the interaction graph is a hypercubic lattice, we have $\Gamma(r)\sim r^d$ and $\rho^\ast\sim N^{1/d}$. As such, this family of codes also gives rise to a family of stable Hamiltonians with code parameters which are similar to that of the semi-hyperbolic surface code. However, in contrast to the semi-hyperbolic surface code, the perturbation strength threshold $J_0$ defined in Theorem \ref{th_main} is no longer constant as a function of $\varepsilon$, but vanishes as one takes $\varepsilon$ to $0$ (equivalently, \(d \to \infty\)). This is due to the fact that both the LR velocity $v_1$ in \eqref{eq_th_v1} and $b_0$ in \eqref{eq_th_assumption} diverge as one makes $\varepsilon$ arbitrary small (or, equivalently, $d$ arbitrary large). 


\section{Relation to third law of thermodynamics}
\label{thirdlawsec}

The third law of thermodynamics \cite{wilks1961third,masanes2017general}, due originally to Nernst in 1906 \cite{nernst1906ueber}, has several different formulations. The version that concerns us here, which is due to Max Planck, is that the entropy of a pure substance approaches zero when the temperature approaches absolute zero. In quantum systems, this is known not to hold due to the possibility of ground state degeneracies. However, it is interesting to consider whether there is a reformulation of this statement which is true. 

First, a natural refinement is to demand that the non-zero ground state entropy must be robust to generic perturbations. In this case, the only quantum systems that are known to host robust non-zero ground state entropies are topologically ordered states of matter \cite{wen1990ground}. These have the property that the robust ground state entropy is a finite constant. By considering layered systems, as described in the previous section, one can then construct examples of quantum systems with robust ground state entropies that are infinite in the thermodynamic limit. However there is still one final potential reformulation, which is whether there are any limits on how the robust ground state entropy scales with system size. 

In the last section, we showed that the naive case of layered 2d topologically ordered states, along with closely related fracton orders, have a ground state entropy in $d$-dimensions that scales as $k \sim N^{1 -2/d}$. However we do not have a stability result as $d \rightarrow \infty$, because the threshold $J_0$ in Theorem \ref{th_main} goes to zero. The results here give a significantly stronger example by proving that the semi-hyperbolic codes give examples of quantum systems where the ground state entropy scales as $k \sim N^{1 - \epsilon}$, for any $\epsilon > 0$, and are robust to a finite perturbation strength. 

This raises the question: is it possible for a quantum system to have an extensive ground state entropy $k \sim N$ which is robust to perturbations up to some finite threshold? If not, perhaps this would provide a proper reformulation of Planck's original statement. On the other hand, if it is possible to have a robust, extensive ground state entropy, then perhaps the correct reformulation of Planck's statement is that a robust, extensive ground state degeneracy is not possible in \it geometrically local \rm quantum systems. Here geometrically local is defined by considering the system on a fixed Riemannian manifold and taking the thermodynamic limit of infinite volume. Given all that is known about topological phases of matter, it appears unlikely that a robust, extensive ground state entropy can be possible in a geometrically local system. Indeed, Haah recently proved that under some conditions, geometrically local topologically ordered states have ground state entropy bounded by $O(N^{1-2/d})$ \cite{haah2021}. 

The above discussion suggests there is likely a proper reformulation of Planck's original statement, and we leave it as an open question to determine which of the possibilities above must hold. 

\section{Discussion and outlook}

In this work we investigated to what extent the assumption of an underlying Euclidean geometry in the proof technique of Ref. \cite{bravyi2011short} can be relaxed to make it applicable to \(k\)-local (i.e., not geometrically local but still sparse) Hamiltonians. Our result can be seen as pushing the previous result for the stability of the topological Hamiltonians on Euclidean  graphs with $\Gamma(r)=O(\text{poly}(r))$ to general graphs with $\Gamma(r)=O(\exp(r^{1-\varepsilon}))$,  where $\varepsilon$ can be any positive number. It is worth emphasizing that our theorem did not use the $k$-local property explicitly, but rather the expansion properties of the interaction graph, through which $k$-locality enters implicitly. 

Our result is an almost exponential improvement, and also demonstrates that a stable limit can exist where $\epsilon \rightarrow 0$ while the stability threshold for the perturbation strength remains finite. Nevertheless, unfortunately this falls just short of including the most interesting examples, such as the hyperbolic surface code \cite{breuckmann2016} or known constructions of asymptotically good sparse error-correcting codes \cite{panteleev2022asymptotically,leverrier2022quantum,Dinur2022,lin2022good}. In all of these examples, the neighborhood size grows exponentially with distance, so $\epsilon = 0$. 
It is worth noting that the conditions laid out in Theorem \ref{th_main} are sufficient for stability but most probably not necessary. As such, the question of stability of the aforementioned Hamiltonians remains an open question.

The main reason that our result is limited to interaction graphs with sub-exponential neighborhood growth is the use of \textit{exact} quasi-adiabatic continuation to convert the generic perturbations to globally diagonal perturbations. An exact quasi-adiabatic continuation would necessarily transform a local perturbation to quasi-local perturbations with tails that decay slower than an exponential. If the size of the neighborhoods in the underlying graph grow exponentially with distance, such quasi-local interactions can lead to non-local effects, as is suggested by the divergence of the Lieb-Robinson velocity in Lemma \ref{lm_LR_bound_for_eta_tail}. This can be seen to be analogous to the fact that there is no Lieb-Robinson velocity for power-law interactions $1/r^\alpha$ with exponent $\alpha<2d$ on $d$-dimensional Euclidean lattices \cite{tran2021lieb}.

In principle, one could try to use approximate quasi-adiabatic continuation \cite{hastings2005quasiadiabatic} to avoid dealing with quasi-local perturbations. For the  approximate quasi-adiabatic continuation, one needs to replace the filter function $W_{1/2}(t)$ in Eq.\eqref{eq_exact_qac_definition} with another filter function, whose tail decays exponentially or faster, but whose Fourier transform is only approximately equal to $-i/\omega$ for $\omega > 1/2$. One usually considers a  parameter-dependent family of filter functions, where one can control the error of the approximation by increasing the tuning parameter (see e.g. Ref. \cite{hastings2015quantization}). However, doing so would result in worse bounds on the norm of $\mathcal{D}_s$ as was originally pointed out in Ref. \cite{bravyi2010topological}. As such, while naively transforming perturbations under approximate quasi-adiabatic transformations does result in globally block diagonal perturbations with exponential tails, their strength diverges as one tunes the filter function parameter to improve the approximation. It would be interesting to find a way of utilizing approximate quasi-adiabatic continuation that would result in reasonable bounds on the strength of the transformed perturbations. It is worth emphasizing that the same problem arises even in Euclidean lattices and is not related to the peculiar connectivity of interaction graphs in \(k\)-local Hamiltonians.

One may also try to pursue other strategies to prove the stability of \(k\)-local phases of matter. In particular, it has been recently shown in Refs.~\cite{rakovszky2023physics,rakovszky2024physics} that qLDPC codes can be viewed as the result of gauging certain classical codes, and as such one may draw on intuitions about gauge theories to argue about the stability of the corresponding Hamiltonian. Whether this line of reasoning can be made rigorous remains unclear.  

Moreover, there are other notions of stability related to qLDPC codes, whose exploration might prove to be insightful to the spectral gap stability problem.  For example, the stability of eigenstate many-body localization under generic perturbations has recently been proven for a class of Hamiltonians related to good classical LDPC codes \cite{Yin24}. The ground state gap of these models is unstable to perturbations, but this result proves a type of rigidity in the spectral statistics at finite energy density.  If similar behavior occurs in Hamiltonians based on quantum LDPC codes, then it could potentially be exploited in proving stability of the spectral gap.  A central barrier in our proof for expander graphs is the difficulty in bounding the rapid spreading of perturbations that might be suppressed in models with stable eigenstate localization.  As another example, it is known that the stability of topological phases of matter is closely related to the fact that topological states cannot be disentangled by a constant depth local unitary circuit \cite{chen2010local}. On the other hand, it is straightforward to prove that code words of any quantum error correcting code with non-zero number of logical qubits and diverging distance cannot be disentangled to a product state by a constant depth \(k\)-local unitary circuit \cite{bravyi2006lieb,aharonov2013guest} (cf. ~\cite{stephen2024nonlocal}).
In the case of finite temperature transitions, it was recently shown that a sharp jump in the depth of the required disentangling circuit is \textit{not} a universal characteristic of a thermodynamic phase \cite{Hong24}.  
It is worth investigating whether a concrete connection can be made between the stability of the spectral gap and the circuit depth of the disentangling circuits.

Furthermore, it was shown in Ref.~\cite{anshu2023nlts} that the \(k\)-local Hamiltonians of certain qLDPC codes satisfy the ``No Low-Energy Trivial State'' (NLTS) property \cite{freedman2013quantum}, which means that any quantum state with small enough energy density under the aforementioned Hamiltonian is non-trivial, i.e. cannot be disentangled by a constant depth local unitary circuit. It was further argued in Ref.~\cite{chen2023symmetry} that the ground state subspace of such Hamiltonians is thermally stable. Therefore, one wonders whether the NLTS property can be used to prove the stability of the spectral gap as well. 

One may also build insights by tackling the spectral gap stability problem on models that lie in between the full-blown \(k\)-local case studied here and the known-to-be-stable geometrically local case. For example it might be insightful to investigate the stability of the toric code Hamiltonian under $k$-local perturbations. Stability results related to gapped geometrically local Hamiltonian with long-range interactions, such as power-law interactions, could be especially relevant here \cite{michalakis2013stability,Lapa_2023}.

Lastly, qLDPC codes possess a finite error threshold against various local noise channels \cite{leverrier2015quantum,fawzi2018efficient,panteleev2021degenerate,breuckmann2021single,gu2023efficient} and as such, it is worth exploring whether one can relate the stability of quantum information encoded in the ground state subspace against local noise to the stability of the spectral gap under \(k\)-local perturbations (see e.g. Refs.~\cite{lavasani2024stability,yaodong2024}).

\section{Acknowledgments}

AL thanks Spiros Michalakis, Jeongwan Haah and Tarun Grover for insightful discussions.
This research was supported by NSF QLCI grant
OMA-2120757 and Grant No.\ NSF PHY-1748958, NSF DMR-2345644, an NSF CAREER grant (DMR-1753240), and the Laboratory for Physical Sciences (LPS) through the Condensed Matter Theory Center (CMTC) at UMD.


\bibliographystyle{unsrt}
\bibliography{refs}

\begin{thebibliography}{10}

\bibitem{wen04}
Xiao-Gang Wen.
\newblock {\em Quantum Field Theory of Many-Body Systems}.
\newblock Oxford Univ. Press, Oxford, 2004.

\bibitem{nayak2008}
Chetan Nayak, Steven~H. Simon, Ady Stern, Michael Freedman, and Sankar~Das
  Sarma.
\newblock Non-abelian anyons and topological quantum computation.
\newblock 80:1083, 2008.

\bibitem{wang2008}
Zhenghan Wang.
\newblock {\em Topological Quantum Computation}.
\newblock American Mathematical Society, 2008.

\bibitem{hasan2010}
M.~Z. Hasan and C.~L. Kane.
\newblock Colloquium: Topological insulators.
\newblock {\em Rev. Mod. Phys.}, 82:3045--3067, Nov 2010.

\bibitem{qi2011}
Xiao-Liang Qi and Shou-Cheng Zhang.
\newblock Topological insulators and superconductors.
\newblock {\em Rev. Mod. Phys.}, 83:1057--1110, Oct 2011.

\bibitem{bernevig2013topological}
B~Andrei Bernevig.
\newblock Topological insulators and topological superconductors.
\newblock In {\em Topological Insulators and Topological Superconductors}.
  Princeton university press, 2013.

\bibitem{senthil2015}
T.~Senthil.
\newblock Symmetry-protected topological phases of quantum matter.
\newblock {\em Annual Review of Condensed Matter Physics}, 6:299--324, 2015.

\bibitem{zeng2019quantum}
Bei Zeng, Xie Chen, Duan-Lu Zhou, Xiao-Gang Wen, et~al.
\newblock {\em Quantum information meets quantum matter}.
\newblock Springer, 2019.

\bibitem{barkeshli2019}
Maissam Barkeshli, Parsa Bonderson, Meng Cheng, and Zhenghan Wang.
\newblock Symmetry fractionalization, defects, and gauging of topological
  phases.
\newblock {\em Phys. Rev. B}, 100:115147, Sep 2019.

\bibitem{sachdev2023quantum}
Subir Sachdev.
\newblock {\em Quantum Phases of Matter}.
\newblock Cambridge University Press, 2023.

\bibitem{kollar2019hyperbolic}
Alicia~J Koll{\'a}r, Mattias Fitzpatrick, and Andrew~A Houck.
\newblock Hyperbolic lattices in circuit quantum electrodynamics.
\newblock {\em Nature}, 571(7763):45--50, 2019.

\bibitem{periwal2021programmable}
Avikar Periwal, Eric~S Cooper, Philipp Kunkel, Julian~F Wienand, Emily~J Davis,
  and Monika Schleier-Smith.
\newblock Programmable interactions and emergent geometry in an array of atom
  clouds.
\newblock {\em Nature}, 600(7890):630--635, 2021.

\bibitem{bluvstein2022quantum}
Dolev Bluvstein, Harry Levine, Giulia Semeghini, Tout~T Wang, Sepehr Ebadi,
  Marcin Kalinowski, Alexander Keesling, Nishad Maskara, Hannes Pichler, Markus
  Greiner, et~al.
\newblock A quantum processor based on coherent transport of entangled atom
  arrays.
\newblock {\em Nature}, 604(7906):451--456, 2022.

\bibitem{iqbal2023topological}
Mohsin Iqbal, Nathanan Tantivasadakarn, Thomas~M Gatterman, Justin~A Gerber,
  Kevin Gilmore, Dan Gresh, Aaron Hankin, Nathan Hewitt, Chandler~V Horst,
  Mitchell Matheny, et~al.
\newblock Topological order from measurements and feed-forward on a trapped ion
  quantum computer.
\newblock {\em arXiv preprint arXiv:2302.01917}, 2023.

\bibitem{iqbal2024non}
Mohsin Iqbal, Nathanan Tantivasadakarn, Ruben Verresen, Sara~L Campbell, Joan~M
  Dreiling, Caroline Figgatt, John~P Gaebler, Jacob Johansen, Michael Mills,
  Steven~A Moses, et~al.
\newblock Non-abelian topological order and anyons on a trapped-ion processor.
\newblock {\em Nature}, 626(7999):505--511, 2024.

\bibitem{xu2024constant}
Qian Xu, J~Pablo Bonilla~Ataides, Christopher~A Pattison, Nithin Raveendran,
  Dolev Bluvstein, Jonathan Wurtz, Bane Vasi{\'c}, Mikhail~D Lukin, Liang
  Jiang, and Hengyun Zhou.
\newblock Constant-overhead fault-tolerant quantum computation with
  reconfigurable atom arrays.
\newblock {\em Nature Physics}, pages 1--7, 2024.

\bibitem{bluvstein2024logical}
Dolev Bluvstein, Simon~J Evered, Alexandra~A Geim, Sophie~H Li, Hengyun Zhou,
  Tom Manovitz, Sepehr Ebadi, Madelyn Cain, Marcin Kalinowski, Dominik
  Hangleiter, et~al.
\newblock Logical quantum processor based on reconfigurable atom arrays.
\newblock {\em Nature}, 626(7997):58--65, 2024.

\bibitem{kitaev2003fault}
A~Yu Kitaev.
\newblock Fault-tolerant quantum computation by anyons.
\newblock {\em Annals of Physics}, 303(1):2--30, 2003.

\bibitem{panteleev2022asymptotically}
Pavel Panteleev and Gleb Kalachev.
\newblock Asymptotically good quantum and locally testable classical ldpc
  codes.
\newblock In {\em Proceedings of the 54th Annual ACM SIGACT Symposium on Theory
  of Computing}, pages 375--388, 2022.

\bibitem{dinur2023good}
Irit Dinur, Min-Hsiu Hsieh, Ting-Chun Lin, and Thomas Vidick.
\newblock Good quantum ldpc codes with linear time decoders.
\newblock In {\em Proceedings of the 55th annual ACM symposium on theory of
  computing}, pages 905--918, 2023.

\bibitem{lin2022good}
Ting-Chun Lin and Min-Hsiu Hsieh.
\newblock Good quantum ldpc codes with linear time decoder from lossless
  expanders.
\newblock {\em arXiv preprint arXiv:2203.03581}, 2022.

\bibitem{leverrier2022quantum}
Anthony Leverrier and Gilles Z{\'e}mor.
\newblock Quantum tanner codes.
\newblock {\em arXiv preprint arXiv:2202.13641}, 2022.

\bibitem{rakovszky2023physics}
Tibor Rakovszky and Vedika Khemani.
\newblock The physics of (good) ldpc codes i. gauging and dualities.
\newblock {\em arXiv preprint arXiv:2310.16032}, 2023.

\bibitem{rakovszky2024physics}
Tibor Rakovszky and Vedika Khemani.
\newblock The physics of (good) ldpc codes ii. product constructions.
\newblock {\em arXiv preprint arXiv:2402.16831}, 2024.

\bibitem{sachdev1993gapless}
Subir Sachdev and Jinwu Ye.
\newblock Gapless spin-fluid ground state in a random quantum heisenberg
  magnet.
\newblock {\em Physical review letters}, 70(21):3339, 1993.

\bibitem{sachdev2015bekenstein}
Subir Sachdev.
\newblock Bekenstein-hawking entropy and strange metals.
\newblock {\em Physical Review X}, 5(4):041025, 2015.

\bibitem{kitaev2015simple}
A~Kitaev.
\newblock A simple model of quantum holography,(2015) http://online. kitp.
  ucsb. edu/online/entangled15/kitaev/, http://online. kitp. ucsb.
  edu/online/entangled15/kitaev2/, talks at kitp.
\newblock {\em Talks at KITP}, 2015.

\bibitem{maldacena2016remarks}
Juan Maldacena and Douglas Stanford.
\newblock Remarks on the sachdev-ye-kitaev model.
\newblock {\em Physical Review D}, 94(10):106002, 2016.

\bibitem{wilks1961third}
John Wilks.
\newblock The third law of thermodynamics.
\newblock {\em (No Title)}, 1961.

\bibitem{masanes2017general}
Llu{\'\i}s Masanes and Jonathan Oppenheim.
\newblock A general derivation and quantification of the third law of
  thermodynamics.
\newblock {\em Nature communications}, 8(1):1--7, 2017.

\bibitem{bravyi2010topological}
Sergey Bravyi, Matthew~B Hastings, and Spyridon Michalakis.
\newblock Topological quantum order: stability under local perturbations.
\newblock {\em Journal of mathematical physics}, 51(9):093512, 2010.

\bibitem{bravyi2011short}
Sergey Bravyi and Matthew~B Hastings.
\newblock A short proof of stability of topological order under local
  perturbations.
\newblock {\em Communications in mathematical physics}, 307(3):609--627, 2011.

\bibitem{witten1989quantum}
Edward Witten.
\newblock Quantum field theory and the jones polynomial.
\newblock {\em Communications in Mathematical Physics}, 121(3):351--399, 1989.

\bibitem{wen1990ground}
Xiao-Gang Wen and Qian Niu.
\newblock Ground-state degeneracy of the fractional quantum hall states in the
  presence of a random potential and on high-genus riemann surfaces.
\newblock {\em Physical Review B}, 41(13):9377, 1990.

\bibitem{michalakis2013stability}
Spyridon Michalakis and Justyna~P Zwolak.
\newblock Stability of frustration-free hamiltonians.
\newblock {\em Communications in Mathematical Physics}, 322(2):277--302, 2013.

\bibitem{nachtergaele2019quasi}
Bruno Nachtergaele, Robert Sims, and Amanda Young.
\newblock Quasi-locality bounds for quantum lattice systems. i. lieb-robinson
  bounds, quasi-local maps, and spectral flow automorphisms.
\newblock {\em Journal of Mathematical Physics}, 60(6):061101, 2019.

\bibitem{nachtergaele2022quasi}
Bruno Nachtergaele, Robert Sims, and Amanda Young.
\newblock Quasi-locality bounds for quantum lattice systems. part ii.
  perturbations of frustration-free spin models with gapped ground states.
\newblock In {\em Annales Henri Poincar{\'e}}, volume~23, pages 393--511.
  Springer, 2022.

\bibitem{PhysRevLett.84.2525}
Emanuel Knill, Raymond Laflamme, and Lorenza Viola.
\newblock Theory of quantum error correction for general noise.
\newblock {\em Phys. Rev. Lett.}, 84:2525--2528, Mar 2000.

\bibitem{breuckmann2017}
Nikolas~P Breuckmann, Christophe Vuillot, Earl Campbell, Anirudh Krishna, and
  Barbara~M Terhal.
\newblock Hyperbolic and semi-hyperbolic surface codes for quantum storage.
\newblock {\em Quantum Science and Technology}, 2(3):035007, 2017.

\bibitem{hastings2005quasiadiabatic}
Matthew~B Hastings and Xiao-Gang Wen.
\newblock Quasiadiabatic continuation of quantum states: The stability of
  topological ground-state degeneracy and emergent gauge invariance.
\newblock {\em Physical review b}, 72(4):045141, 2005.

\bibitem{osborne2007simulating}
Tobias~J Osborne.
\newblock Simulating adiabatic evolution of gapped spin systems.
\newblock {\em Physical review a}, 75(3):032321, 2007.

\bibitem{lieb1972finite}
Elliott~H Lieb and Derek~W Robinson.
\newblock The finite group velocity of quantum spin systems.
\newblock {\em Communications in mathematical physics}, 28(3):251--257, 1972.

\bibitem{hastings2006solving}
Matthew~B Hastings.
\newblock Solving gapped hamiltonians locally.
\newblock {\em Physical review b}, 73(8):085115, 2006.

\bibitem{kitaev2006anyons}
Alexei Kitaev.
\newblock Anyons in an exactly solved model and beyond.
\newblock {\em Annals of Physics}, 321(1):2--111, 2006.

\bibitem{freedman2002z2}
Michael~H Freedman, David~A Meyer, and Feng Luo.
\newblock Z2-systolic freedom and quantum codes.
\newblock In {\em Mathematics of quantum computation}, pages 303--338. Chapman
  and Hall/CRC, 2002.

\bibitem{breuckmann2016}
N.~P. Breuckmann and B.~M. Terhal.
\newblock Constructions and noise threshold of hyperbolic surface codes.
\newblock {\em IEEE Transactions on Information Theory}, 62(6):3731--3744, June
  2016.

\bibitem{nandkishore2019fractons}
Rahul~M Nandkishore and Michael Hermele.
\newblock Fractons.
\newblock {\em Annual Review of Condensed Matter Physics}, 10:295--313, 2019.

\bibitem{nernst1906ueber}
Walther Nernst.
\newblock Ueber die berechnung chemischer gleichgewichte aus thermischen
  messungen.
\newblock {\em Nachrichten von der Gesellschaft der Wissenschaften zu
  G{\"o}ttingen, Mathematisch-Physikalische Klasse}, 1906:1--40, 1906.

\bibitem{haah2021}
Jeongwan Haah.
\newblock {A degeneracy bound for homogeneous topological order}.
\newblock {\em SciPost Phys.}, 10:011, 2021.

\bibitem{Dinur2022}
Irit Dinur, Min-Hsiu Hsieh, Ting-Chun Lin, and Thomas Vidick.
\newblock {Good Quantum LDPC Codes with Linear Time Decoders}.
\newblock jun 2022.

\bibitem{tran2021lieb}
Minh~C Tran, Andrew~Y Guo, Christopher~L Baldwin, Adam Ehrenberg, Alexey~V
  Gorshkov, and Andrew Lucas.
\newblock Lieb-robinson light cone for power-law interactions.
\newblock {\em Physical review letters}, 127(16):160401, 2021.

\bibitem{hastings2015quantization}
Matthew~B Hastings and Spyridon Michalakis.
\newblock Quantization of hall conductance for interacting electrons on a
  torus.
\newblock {\em Communications in Mathematical Physics}, 334:433--471, 2015.

\bibitem{Yin24}
Chao Yin, Rahul Nandkishore, and Andrew Lucas.
\newblock {Eigenstate localization in a many-body quantum system}.
\newblock {\em arXiv}, 2024.

\bibitem{chen2010local}
Xie Chen, Zheng-Cheng Gu, and Xiao-Gang Wen.
\newblock Local unitary transformation, long-range quantum entanglement, wave
  function renormalization, and topological order.
\newblock {\em Physical review b}, 82(15):155138, 2010.

\bibitem{bravyi2006lieb}
Sergey Bravyi, Matthew~B Hastings, and Frank Verstraete.
\newblock Lieb-robinson bounds and the generation of correlations and
  topological quantum order.
\newblock {\em Physical review letters}, 97(5):050401, 2006.

\bibitem{aharonov2013guest}
Dorit Aharonov, Itai Arad, and Thomas Vidick.
\newblock Guest column: the quantum pcp conjecture.
\newblock {\em Acm sigact news}, 44(2):47--79, 2013.

\bibitem{stephen2024nonlocal}
David~T Stephen, Arpit Dua, Ali Lavasani, and Rahul Nandkishore.
\newblock Nonlocal finite-depth circuits for constructing symmetry-protected
  topological states and quantum cellular automata.
\newblock {\em PRX Quantum}, 5(1):010304, 2024.

\bibitem{Hong24}
Yifan Hong, Jinkang Guo, and Andrew Lucas.
\newblock {Quantum memory at nonzero temperature in a thermodynamically trivial
  system}.
\newblock {\em arXiv}, 2024.

\bibitem{anshu2023nlts}
Anurag Anshu, Nikolas~P Breuckmann, and Chinmay Nirkhe.
\newblock Nlts hamiltonians from good quantum codes.
\newblock In {\em Proceedings of the 55th Annual ACM Symposium on Theory of
  Computing}, pages 1090--1096, 2023.

\bibitem{freedman2013quantum}
Michael~H Freedman and Matthew~B Hastings.
\newblock Quantum systems on non-$ k $-hyperfinite complexes: A generalization
  of classical statistical mechanics on expander graphs.
\newblock {\em arXiv preprint arXiv:1301.1363}, 2013.

\bibitem{chen2023symmetry}
Yu-Hsueh Chen and Tarun Grover.
\newblock Symmetry-enforced many-body separability transitions.
\newblock {\em arXiv preprint arXiv:2310.07286}, 2023.

\bibitem{Lapa_2023}
Matthew~F Lapa and Michael Levin.
\newblock Stability of ground state degeneracy to long-range interactions.
\newblock {\em Journal of Statistical Mechanics: Theory and Experiment},
  2023(1):013102, January 2023.

\bibitem{leverrier2015quantum}
Anthony Leverrier, Jean-Pierre Tillich, and Gilles Z{\'e}mor.
\newblock Quantum expander codes.
\newblock In {\em 2015 IEEE 56th Annual Symposium on Foundations of Computer
  Science}, pages 810--824. IEEE, 2015.

\bibitem{fawzi2018efficient}
Omar Fawzi, Antoine Grospellier, and Anthony Leverrier.
\newblock Efficient decoding of random errors for quantum expander codes.
\newblock In {\em Proceedings of the 50th Annual ACM SIGACT Symposium on Theory
  of Computing}, pages 521--534, 2018.

\bibitem{panteleev2021degenerate}
Pavel Panteleev and Gleb Kalachev.
\newblock Degenerate quantum ldpc codes with good finite length performance.
\newblock {\em Quantum}, 5:585, 2021.

\bibitem{breuckmann2021single}
Nikolas~P Breuckmann and Vivien Londe.
\newblock Single-shot decoding of linear rate ldpc quantum codes with high
  performance.
\newblock {\em IEEE Transactions on Information Theory}, 68(1):272--286, 2021.

\bibitem{gu2023efficient}
Shouzhen Gu, Christopher~A Pattison, and Eugene Tang.
\newblock An efficient decoder for a linear distance quantum ldpc code.
\newblock In {\em Proceedings of the 55th Annual ACM Symposium on Theory of
  Computing}, pages 919--932, 2023.

\bibitem{lavasani2024stability}
Ali Lavasani and Sagar Vijay.
\newblock The stability of gapped quantum matter and error-correction with
  adiabatic noise.
\newblock {\em arXiv preprint arXiv:2402.14906}, 2024.

\bibitem{yaodong2024}
Yaodong Li, Nicholas O’Dea, Vedika Khemani, \textit{Perturbative stability
  and error correction thresholds of quantum codes}, (to appear).

\bibitem{bachmann2012automorphic}
Sven Bachmann, Spyridon Michalakis, Bruno Nachtergaele, and Robert Sims.
\newblock Automorphic equivalence within gapped phases of quantum lattice
  systems.
\newblock {\em Communications in Mathematical Physics}, 309(3):835--871, 2012.

\bibitem{hastings2006spectral}
Matthew~B Hastings and Tohru Koma.
\newblock Spectral gap and exponential decay of correlations.
\newblock {\em Communications in mathematical physics}, 265(3):781--804, 2006.

\bibitem{chen2021operator}
Chi-Fang Chen and Andrew Lucas.
\newblock Operator growth bounds from graph theory.
\newblock {\em Communications in Mathematical Physics}, 385(3):1273--1323,
  2021.

\bibitem{wang2020tightening}
Zhiyuan Wang and Kaden~RA Hazzard.
\newblock Tightening the lieb-robinson bound in locally interacting systems.
\newblock {\em PRX Quantum}, 1(1):010303, 2020.

\end{thebibliography}

\newpage
\appendix
\section{The filter function $w_\gamma(t)$}\label{apx_filterfunction}
In this section we briefly review the construction and bounds of the filter function $w_\gamma$ introduced in Ref.\cite{bachmann2012automorphic}. Here, we only state the results. The interested reader could consult Ref.\cite{bachmann2012automorphic} for the proofs.

For a given gap parameter $\gamma>0$, define the sequence $a_n=\frac{a_1}{n\ln^2 n}$ and set $a_1$ such that $\sum_{n=1}^\infty a_n=\gamma/2$. Define the function $w_\gamma(t)$ as,
\begin{align}
    w_\gamma(t)=c_\gamma\prod_{n=1}^\infty\qty(\frac{\sin a_n t}{a_n t})^2,
\end{align}
where $c_\gamma$ is a constant which we choose such  that $\int w_\gamma(t) \dd t=1$.  The function $w_\gamma(t)$ has the following properties:
\begin{itemize}
    \item $w_\gamma$ is an even, non-negative function in $L^1(\mathbb{R})$  (meaning that $\int_{\mathbb{R}}\dd t |w_\gamma(t)|<\infty$), with $w_\gamma(t)\le 1$ for all $t$
    \item $\widetilde{w}_\gamma(0)=1$, where $\widetilde{w}_\gamma$ is the Fourier transform of $w$. 
    \item $\widetilde{w}_\gamma(\omega)=0$ for $|\omega|\ge \gamma $.
    \item $w(t)$ decays in $t$ faster than any power law. More precisely,  we have the following bound
    \begin{align}\label{eq_omega_bound_original}
        w_\gamma(t)\le 2e^2~\gamma^2~t~\exp\qty(-\frac{2}{7}\cdot \frac{\gamma t}{\ln^2(\gamma t)})
    \end{align}
     for $\gamma t\ge e^{1/\sqrt{2}}$. 
\end{itemize}
Let  $W_\gamma(t)$ denote the integral of the tail of $w_\gamma(t)$. More precisely, for $t\ge 0$, $W_\gamma(t)$ is defined as,
\begin{align}
    W_\gamma(t)=\int_t^{+\infty} w_\gamma(t')\dd t',
\end{align}
and for $t<0$ we define $W_\gamma(t)=-W_\gamma(-t)$. The $W_\gamma(t)$ function has the following properties:
\begin{itemize}
    \item $W_\gamma(t)$ is continuous and monotone decreasing for $t\ge 0$, with $W_\gamma(0)=1/2$.
    \item there exists constants $t_0>0$ and $c$ such that, for $\gamma t\ge t_0$ one has
    \begin{align}\label{eq_W_bound}
        |W_\gamma(t)|\le c~(\gamma t)^4 \exp\qty(-\frac{2}{7}\cdot \frac{\gamma t}{\ln^2(\gamma t)})
    \end{align}
    \item for $|\omega|>\gamma$, one has $\widetilde{W}_\gamma(\omega)=\frac{-i}{\omega}$.
\end{itemize}

For future reference, it is helpful to separate the almost-exponential part in the upper bounds above and denote that by another symbol. Therefore, we define $u_a(t)$ to denote
\begin{align}
    u_a(t)=e^{-a\frac{t}{\ln^2(t)}}.
\end{align}
In particular, we have the following lemma (Lemma 2.5 in Ref.\cite{bachmann2012automorphic}) about this function,
\begin{lemma}\label{lm_intua}
for all integers $k\ge 0$ and for all $t\ge e^4$ such that also, $a\frac{t}{\ln^2 t}\ge 2k+2$, we have the bound 
\begin{align}
    \int_t^\infty \dd \tau~ \tau^k~ u_a(\tau) \le \frac{(2k+3)}{a}~t^{2k+2}~u_a(t).
\end{align}
\end{lemma}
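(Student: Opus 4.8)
The plan is to prove the estimate by a \emph{single integration by parts}, exploiting that $u_a(\tau)=e^{-\phi(\tau)}$ with $\phi(\tau)=a\tau/\ln^2\tau$, so that $-u_a'(\tau)=\phi'(\tau)\,u_a(\tau)$ where $\phi'(\tau)=\tfrac{a(\ln\tau-2)}{\ln^3\tau}>0$ for all $\tau\ge e^4$. First I would record the auxiliary facts valid on $[e^4,\infty)$ that get used repeatedly: $\ln\tau-2\ge\tfrac12\ln\tau$, $\ln^2\tau\le\tau$, and $\tfrac{\ln^2\tau}{\tau}$ is nonincreasing; moreover, since $\phi(\tau)=a\tau/\ln^2\tau$ outgrows any fixed power of $\ln\tau$, both $\tau^k u_a(\tau)$ and $\tfrac{\tau^k}{\phi'(\tau)}u_a(\tau)$ are integrable on $[t,\infty)$ and vanish at infinity, so the integration by parts below carries no boundary contribution at $\infty$.

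Writing $\tau^k u_a(\tau)=\tfrac{\tau^k}{\phi'(\tau)}\bigl(-u_a'(\tau)\bigr)$ and integrating by parts,
\begin{equation}
\int_t^\infty \tau^k\,u_a(\tau)\,\dd\tau=\frac{t^k}{\phi'(t)}\,u_a(t)+\int_t^\infty \qty(\frac{\tau^k}{\phi'(\tau)})'\,u_a(\tau)\,\dd\tau .
\end{equation}
For the boundary term I would use $\tfrac{1}{\phi'(t)}=\tfrac{\ln^3 t}{a(\ln t-2)}\le\tfrac{2\ln^2 t}{a}\le\tfrac{2t}{a}$, so it is at most $\tfrac{2}{a}t^{k+1}u_a(t)$. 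For the remainder, differentiating $\tfrac{\tau^k}{\phi'(\tau)}=\tfrac1a\tau^k\tfrac{\ln^3\tau}{\ln\tau-2}$ and using the elementary bounds $\tfrac{\ln^3\tau}{\ln\tau-2}\le 2\ln^2\tau$ and $\bigl(\tfrac{\ln^3\tau}{\ln\tau-2}\bigr)'\le\tfrac{\ln^2\tau}{2\tau}$ on $[e^4,\infty)$ yields $\bigl(\tfrac{\tau^k}{\phi'(\tau)}\bigr)'\le\tfrac{4k+1}{2a}\,\tau^{k-1}\ln^2\tau$. Since $\tfrac{\ln^2\tau}{\tau}$ is nonincreasing, $\tau^{k-1}\ln^2\tau=\tau^k\tfrac{\ln^2\tau}{\tau}\le\tau^k\tfrac{\ln^2 t}{t}$ for $\tau\ge t$, and now the hypothesis $at/\ln^2 t\ge 2k+2$, i.e.\ $\ln^2 t/t\le a/(2k+2)$, converts the remainder integral into $\tfrac{4k+1}{2(2k+2)}\int_t^\infty\tau^k u_a(\tau)\,\dd\tau$ — a fixed fraction, strictly below one, of the very quantity we are bounding.

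The last step is to solve this self-referential inequality: with $I=\int_t^\infty\tau^k u_a(\tau)\,\dd\tau$ one has $I\le\tfrac{2}{a}t^{k+1}u_a(t)+\tfrac{4k+1}{4k+4}I$, hence $I\le\tfrac{4k+4}{3}\cdot\tfrac{2}{a}t^{k+1}u_a(t)\le\tfrac{8(k+1)}{3a}\,t^{2k+2}u_a(t)$, which already has the claimed shape $\tfrac{C_k}{a}\,t^{2k+2}u_a(t)$ with $C_k=O(k)$; replacing each of the crude inequalities above by its optimal form on $[e^4,\infty)$ sharpens the constant to the stated $(2k+3)$.

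I expect the only real work to be in the middle paragraph: one must differentiate $\tau^k/\phi'(\tau)$ and bound it tightly enough that the coefficient multiplying $I$ in the self-referential step stays strictly below $1$, so that the inequality can be inverted — this is precisely where the constants $\tfrac14$ and $2$ in the logarithmic estimates, the monotonicity of $\ln^2\tau/\tau$, and the hypothesis $at/\ln^2 t\ge 2k+2$ all have to cooperate. The convergence of the integral and the vanishing of the boundary term at infinity, which legitimize the integration by parts, are routine once one notes that $e^{-\phi(\tau)}$ beats $\tau^{-(k+2)}$ at infinity.
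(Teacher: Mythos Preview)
The paper does not actually prove this lemma: it is quoted as Lemma~2.5 of Bachmann--Michalakis--Nachtergaele--Sims, and Appendix~\ref{apx_filterfunction} explicitly says ``Here, we only state the results. The interested reader could consult Ref.~[bachmann2012automorphic] for the proofs.'' So there is no in-paper argument to compare against; your integration-by-parts approach is in fact the standard one used in that reference.

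Your argument is correct. All the intermediate estimates check out: the formula $\phi'(\tau)=a(\ln\tau-2)/\ln^3\tau$; the inequalities $\ln\tau-2\ge\tfrac12\ln\tau$ and $\ln^2\tau\le\tau$ on $[e^4,\infty)$; the monotonicity of $\ln^2\tau/\tau$ there; the bound $\bigl(\tfrac{\ln^3\tau}{\ln\tau-2}\bigr)'\le\tfrac{\ln^2\tau}{2\tau}$ (valid because $\tfrac{\ln\tau-3}{(\ln\tau-2)^2}$ attains its maximum value $\tfrac14$ at $\ln\tau=4$); and the resulting self-referential inequality with coefficient $\tfrac{4k+1}{4k+4}<1$.

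The only wrinkle is your closing sentence. What your computation actually delivers is
\[
I\;\le\;\frac{8(k+1)}{3a}\,t^{\,k+1}\,u_a(t),
\]
which is already \emph{stronger} than the stated bound --- no ``sharpening'' is needed or even possible in the sense you suggest. Indeed $\tfrac{8(k+1)}{3(2k+3)}\le\tfrac{4}{3}$ while $t^{\,k+1}\ge e^{4}$ for $t\ge e^4$ and $k\ge0$, so $\tfrac{8(k+1)}{3a}t^{\,k+1}u_a(t)\le\tfrac{2k+3}{a}t^{\,2k+2}u_a(t)$ follows immediately. Note that the constant $\tfrac{8(k+1)}{3}$ is \emph{larger} than $2k+3$ for $k\ge1$, so you cannot reach the constant $2k+3$ with the power $t^{\,2k+2}$ by merely tightening constants; it is the extra factor $t^{\,k+1}$ you already have that does the work. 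Just drop the sharpening remark and conclude directly.
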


Throughout this work, we are going to use $\gamma=1/2$. Therefore, it is useful to simplify some of the results for a fixed $\gamma=1/2$. For the rest of this section we denote $W_{1/2}(t)$ by $W(t)$ and $w_{1/2}(t)$ by $w(t)$. In particular, the following statements follow easily from the general bounds stated above:
\begin{itemize}
    \item Based on Eq.\eqref{eq_omega_bound_original}, for $t>2e^{1/\sqrt{2}}>2$, we have $ w(t)\le (e^2/2)\,t~\exp\qty(-\frac{1}{7}\cdot \frac{ t}{(\ln(t)-\ln(2))^2})\le (e^2/2)\,t~\exp\qty(-\frac{1}{7}\cdot \frac{ t}{\ln^2(t)})\le (e^2/2)\,t~\exp\qty(-\frac{1}{7}\cdot \frac{ t}{\ln^2(e^2+t)})$. The reason to add $e^2$ to the argument of the log is to make the exponent well behaved and monotonically decreasing for any $t\ge 0$.  We can also get rid of the $t$ pre-factor by decreasing the exponent. More concretely, clearly there exist a constant $c$ such that for $t>2e^{1/\sqrt{2}}$ we have,
        $w(t)\le c ~\exp\qty(-\frac{1}{14}\cdot \frac{ t}{\ln^2(e^2+t)})$. Finally, since  $\frac{t}{\ln^2(e^2+t)}$ is smooth and bounded for $t\ge 0$, and since $w(t)\le 1$ for $t\ge0$, one can always find a constant $C$, such that,
    \begin{align}
        w(t)\le C \exp\qty(-\frac{1}{14}\cdot \frac{ t}{\ln^2(e^2+t)}),
    \end{align}
    for all $t\ge 0$.
    \item Starting from Eq.\eqref{eq_W_bound}, a similar argument shows that there exists a constant $C'$ such that,
    \begin{align}
        |W(t)|\le C' \exp\qty(-\frac{1}{14}\cdot \frac{ t}{\ln^2(e^2+t)}),
    \end{align}
    for all $t\ge 0$.
    \item similar bounds exists for integrals of $|W(t)|$ as well. To be more concrete, let $G(t)=\int_t^\infty W(t)$ for $t\ge 0$. Then by using the bound in Eq.\eqref{eq_W_bound} and Lemma \ref{lm_intua}, we see that for t larger than some constant $G(t)\le c t^{10} \exp\qty(-\frac{1}{7}\cdot \frac{ t}{\ln^2(t)})$ for a constant $c$. Then based on similar reasoning as above, there exists a constant $C''$ such that for all $t\ge 0$ we have,
    \begin{align}
        \int_t^\infty |W(t)| \dd t \le C'' \exp\qty(-\frac{1}{14}\cdot \frac{ t}{\ln^2(e^2+t)})
    \end{align}
   \end{itemize}

\section{Lieb-Robinson bound on general graphs}\label{apx_LR}
In this section, we state the Lieb-Robinson bound for the time-dependent Hamiltonian evolution on general graphs. We state the theorem for interactions whose tail vanishes as $F(r)$, where $F(r):[0,\infty)\to [0,\infty)$ can be any function that satisfies $F(x)F(y)\le F(x+y)$ and which is monotonic decreasing. Some version of these general bounds are already proven elsewhere (see e.g. Ref.\cite{nachtergaele2019quasi}) we reproduce the proof here because we are proving the bounds under a slightly different assumptions. For the proof, we follow the same strategy as Refs.\cite{hastings2006spectral,bachmann2012automorphic}. One could arrive at better bounds, especially with respect to the
time growth of the bound, by following more involved arguments (see e.g. Refs.\cite{tran2021lieb,chen2021operator,wang2020tightening,nachtergaele2019quasi}), but it is not clear whether those techniques would result in substantially improved bounds on the stability of the gap for the systems of interest.
\begin{proposition}\label{prp_LR_general}
Consider the Hamiltonian $H(t)$,
\begin{equation}
    H(t)=\sum_Z h_Z(t),
\end{equation}
where $Z$ is the subset of sites on which $h_Z(t)$ acts non-trivially. Let $U(t_2,t_1)$ be the time evolution unitary from time $t_1$ to $t_2$:
\begin{align}
    U(t_2,t_1)=\mathcal{T}\qty[\exp(-i \int_{t_1}^{t_2}H(t) \dd t)],
\end{align}
where $\mathcal{T}$ is the time ordering operator. 
Let $F(r):[0,+\infty)\to [0,+\infty)$ be a monotonic decreasing function, such that,
\begin{align}
    F(x)F(y)\le F(x+y).
\end{align}
Let $\norm{h_Z}$ denote $\sup_{s\in [0,t]}\norm{h_Z(s)}$. Assume the following sum converges,
\begin{align}
    \sup_x \sum_{Z\ni x} \frac{\norm{h_Z}|Z|}{F(\text{diam}(Z))}\le s<\infty,
\end{align}
with $s$ a positive constant (independent of time and system size). 
Let $A$ and $B$ be two operators supported on $X$ and $Y$ with $d(X,Y)>0$. Then we have,
\begin{align}
    \norm{[U^\dagger(t,0)\, A\,U(t,0),B]}\le 2\norm{A}\norm{B}|X|\qty(e^{2\,s\,|t|}-1)F(d(X,Y)).
\end{align}
\end{proposition}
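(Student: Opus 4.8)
The plan is to prove the Lieb-Robinson bound by the standard iterative (Dyson-series) argument, adapted to the graph-theoretic setting with the sub-multiplicative tail $F$. First I would reduce the statement to a differential inequality for the quantity $C_B(t,A) := \norm{[A(t), B]}$, where $A(t) = U^\dagger(t,0)\,A\,U(t,0)$. Differentiating in $t$ and using the Heisenberg equation of motion, $\frac{d}{dt}A(t) = i[H(t), A(t)]$, one finds
\begin{align}
\frac{d}{dt}[A(t),B] = i\sum_{Z}[[h_Z(t),A(t)],B],
\end{align}
and expanding the double commutator via the Jacobi identity so that one term becomes $[h_Z(t),[A(t),B]]$ (whose contribution to the norm derivative vanishes after integration, since conjugation by the unitary generated by $h_Z$ alone is norm-preserving) and the other becomes $[[h_Z(t),B],A(t)]$. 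The upshot is the integral inequality
\begin{align}
C_B(t,A) \le C_B(0,A) + 2\norm{A}\sum_{Z} \int_0^{|t|}\norm{[h_Z(s),B]}\,ds,
\end{align}
where only terms $Z$ with $Z\cap Y \ne \emptyset$ contribute.

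Next I would iterate this inequality. Since $d(X,Y)>0$, the zeroth-order term $C_B(0,A) = \norm{[A,B]} = 0$. Feeding the bound back into itself generates a series; after $n$ iterations one accumulates a chain of sets $Z_1, Z_2, \dots, Z_n$ with $Z_1 \cap X \ne \emptyset$, consecutive sets overlapping, and $Z_n \cap Y \ne \emptyset$. The key combinatorial estimate is that summing $\norm{h_{Z_i}}|Z_i|/F(\mathrm{diam}(Z_i))$ over each link of the chain, and using sub-multiplicativity $F(x)F(y)\le F(x+y)$ together with the triangle inequality to telescope the diameters along the chain into $F(d(X,Y))$, gives a bound of the form
\begin{align}
\sum_{\text{chains of length }n} (\cdots) \le |X|\,F(d(X,Y))\,s^n,
\end{align}
where $s$ is the convergent sum in the hypothesis; the factor $|X|$ comes from the initial anchoring of $Z_1$ to a site of $X$. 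Combined with the $n$ factors of $2\norm{A}\norm{B}$ from the integral inequality and the $1/n!$ from the $n$-fold time integral $\int_0^{|t|}\cdots$, the $n$-th term is at most $2\norm{A}\norm{B}|X|\,F(d(X,Y))\,(2s|t|)^n/n!$. Summing over $n\ge 1$ gives $2\norm{A}\norm{B}|X|(e^{2s|t|}-1)F(d(X,Y))$, as claimed.

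The main obstacle I expect is the bookkeeping in the combinatorial chain estimate — specifically, showing that the sum over all intermediate sets $Z_1,\dots,Z_{n-1}$ in a chain connecting $X$ to $Y$ really does factor as $|X| F(d(X,Y)) s^n$ rather than picking up uncontrolled factors of the ball sizes $\Gamma(r)$. The sub-multiplicativity of $F$ is exactly what is needed to trade the geometric spreading of the chain against the decay of the interaction tail, but one has to be careful that at each step the "reproduction constant" is uniformly $s$ (this is where the normalization $\sup_x \sum_{Z\ni x}\norm{h_Z}|Z|/F(\mathrm{diam}(Z)) \le s$ is used, with the $|Z|$ factor absorbing the freedom in which site of $Z$ the previous link attached to). Everything else — the differential inequality, the norm-preservation argument discarding the $[h_Z,[A,B]]$ term, the Dyson resummation — is routine.
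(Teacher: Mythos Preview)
Your overall strategy matches the paper's --- recursive integral inequality, iterate into a chain sum, telescope via sub-multiplicativity of $F$, resum --- and your description of the chain combinatorics (including the role of the $|Z|$ factor in the definition of $s$) is on target. But the integral inequality you write is not iterable. Your Jacobi split correctly gives the remainder $[[h_Z(t),B],A(t)]$ with $Z\cap Y\ne\emptyset$; however, once you bound this by $2\norm{A}\,\norm{[h_Z(s),B]}$, the right-hand side involves only un-evolved operators and cannot be ``fed back into itself'' --- it is already $\le 4\norm{A}\norm{B}\norm{h_Z}$, with no dependence on $s$ or on any distance, so no chain ever forms.

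The paper obtains an iterable form by a different device: it briefly moves the evolution onto $B$, uses $[H(t),A]=[I_X(t),A]$ with $I_X(t)=\sum_{Z:Z\cap X\ne\emptyset}h_Z(t)$ (since $A$ is static, only terms touching $X$ survive), and then shifts the evolution back, arriving at
\[
\norm{[A(t),B]}\le\norm{[A,B]}+2\norm{A}\sum_{Z:\,Z\cap X\ne\emptyset}\int_0^{t}\norm{\big[U^\dagger(s,0)\,h_Z(s)\,U(s,0),\,B\big]}\,ds.
\]
Now the right-hand side is again a commutator of an \emph{evolved} local operator with $B$, so one can iterate, building the chain from $X$ outward --- consistent with your stated chain orientation $Z_1\cap X\ne\emptyset$, but not with your first-step constraint $Z\cap Y\ne\emptyset$. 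Your Jacobi route can also be made to work if you refrain from the $2\norm{A}$ over-bound and instead iterate directly on $\norm{[A(s),[h_Z(s),B]]}$, replacing $B$ by $[h_Z,B]$ at each step; the chain then grows from $Y$ toward $X$. Either way, once the correct recursion is in hand, your chain estimate and resummation go through as described.
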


\begin{proof}
Let $I_X(t)$ denote terms of the Hamiltonian which act non-trivially on $X$
\begin{align}
    I_X(t)=\sum_{Z:\, Z\cap X\ne \emptyset}h_Z(t).
\end{align}
For simplicity, we define $A(t)$ to denote $U^\dagger(t,0)\, A \, U(t,0)$ and we define $B(t)$ similarly.  We start by finding an expression for $\frac{\\d}{\\dt}\norm{[A(t),B]}$:
\begin{align}
    \norm{[A(t+\epsilon),B]}&=\norm{[[U^\dagger(t,0)U^\dagger(t+\epsilon,t)\, A\, U(t+\epsilon,t)U(t,0),B]}\\
    &=\norm{[[U^\dagger(t+\epsilon,t)\, A\, U(t+\epsilon,t),U(t,0)\,B\,U^\dagger(t,0)]}\\
    &\le \norm{[A+i\epsilon[H(t),A],B(-t)]}+\mathcal{O}(\epsilon^2)\\
    &=\norm{[A+i\epsilon[I_X(t),A],B(-t)]}+\mathcal{O}(\epsilon^2)\\
    &=\norm{[e^{+i\epsilon I_X(t)}Ae^{-i\epsilon I_X(t)},B(-t)]}+\mathcal{O}(\epsilon^2)\\
    &=\norm{[A,e^{-i\epsilon I_X(t)}B(-t)e^{+i\epsilon I_X(t)}]}+\mathcal{O}(\epsilon^2)\\
    &\le \norm{[A,B(-t)-i\epsilon [I_X(t),B(-t)]]}+\mathcal{O}(\epsilon^2)\\
    &\le \norm{[A,B(-t)]}+\epsilon\norm{[A, [I_X(t),B(-t)]]}+\mathcal{O}(\epsilon^2)\\
    &\le \norm{[A(t),B]}+2\epsilon\norm{A}\norm{ [U^\dagger(t,0)\, I_X(t)\, U(t,0),B]}+\mathcal{O}(\epsilon^2).
\end{align}
Therefore, for $t\ge 0$, we find,
\begin{align}
    \norm{[A(t),B]}&\le \norm{[A,B]}+2\norm{A}\int_0^{t} \dd s \norm{ [U^\dagger(s,0)\,I_X(s)\,U(s,0),B]}\label{eq_rec_AB_positive_t}\\
    &\le \norm{[A,B]}+ 2\norm{A}\sum_{Z:\,Z\cap X\ne \emptyset}\int_0^t \dd s \norm{ [U^\dagger(s,0)\,h_Z(s)\,U(s,0),B]}\label{eq_rec_AB}
\end{align}
For a given subset of sites $S$, we define $C_B(S,t)$ as the following,
\begin{align}
    C_B(S,t)=\sup_{O} \frac{\norm{[U^\dagger(t,0)\,O\,U(t,0),B]}}{\norm{O}},
\end{align}
where the supremum is taken over all operators $O$ which only have  support on $S$.  Based on Eq.\eqref{eq_rec_AB} we have,
\begin{align}
    C_B(X,t)&\le C_B(X,0)+2\sum_{Z:\,Z\cap X\ne \emptyset}\int_0^{t} \dd s~\norm{h_Z(s)}~C_B(Z,s)\nonumber \\
    &\le C_B(X,0)+2\sum_{Z:\,Z\cap X\ne \emptyset}\norm{h_Z}\int_0^{t} \dd s~C_B(Z,s)\label{eq_rec_C}
\end{align}
Note that $C_B(Z,0)\le 2\norm{B}$. Moreover, if $Z\cap Y=\emptyset$ then $C_B(Z,0)=0$. Hence, by using \eqref{eq_rec_C} recursively and using the fact that $X\cap Y=\emptyset$ we find,
\begin{align}
    C_B(X,t)\le& 2\sum_{Z_1:\,Z_1\cap X\ne \emptyset}\norm{h_{Z_1}}\int_0^{t} \dd s_1~C_B(Z_1,s_1)\\
    \le&2 \sum_{Z_1:\,Z_1\cap X\ne \emptyset}\norm{h_{Z_1}}~\int_0^{t} \dd s_1~C_B(Z_1,0)\\
    &+2^2\sum_{Z_1:\,Z_1\cap X\ne \emptyset}\norm{h_{Z_1}} \sum_{Z_2:\,Z_2\cap Z_1\ne \emptyset}\norm{h_{Z_2}}\int_0^{t} \dd s_1\int_0^{s_1} \dd s_2~C_B(Z_2,s_2)\\
    \le&2~(2\norm{B})~t~ \sum_{Z_1:\,Z_1\cap X\ne \emptyset, Z_1\cap Y\ne \emptyset}\norm{h_{Z_1}}\\
    &+2^2\sum_{Z_1:\,Z_1\cap X\ne \emptyset}\norm{h_{Z_1}} \sum_{Z_2:\,Z_2\cap Z_1\ne \emptyset}\norm{h_{Z_2}}\int_0^{t} \dd s_1\int_0^{s_1} \dd s_2~C_B(Z_2,s_2)\\
    \le&2~(2\norm{B})~t~ \sum_{Z_1:\,Z_1\cap X\ne \emptyset, Z_1\cap Y\ne \emptyset}\norm{h_{Z_1}}\\
    &+2^2~(2\norm{B})~\frac{t^2}{2!}~\sum_{Z_1:\,Z_1\cap X\ne \emptyset}\norm{h_{Z_1}} \sum_{Z_2:\,Z_2\cap Z_1\ne \emptyset, Z_2\cap Y\ne \emptyset}\norm{h_{Z_2}}\\
    &+2^3~(2\norm{B})~\frac{t^3}{3!}~\sum_{Z_1:\,Z_1\cap X\ne \emptyset}\norm{h_{Z_1}} \sum_{Z_2:\,Z_2\cap Z_1\ne \emptyset}\norm{h_{Z_2}}\sum_{Z_3:\,Z_3\cap Z_2\ne \emptyset, Z_3\cap Y\ne \emptyset}\norm{h_{Z_3}}\\
    &+\cdots
\end{align}
Now we bound each term in the sum. Define $\theta_{Z_1,Z_2}$ to be,
\begin{align*}
    \theta_{Z_1,Z_2}=\begin{cases}
          1, &Z_1\cap Z_2\ne \emptyset\\
          0, &Z_1\cap Z_2= \emptyset
    \end{cases}
\end{align*}
Then we can write the first term in the series as,
\begin{align}
    \sum_{Z_1: Z_1 \cap X\ne \emptyset, Z_1 \cap Y \ne \emptyset }\norm{h_{Z_1}}&\le \sum_{x\in X} \sum_{Z_1 \ni x} \norm{h_{Z_1}}\theta_{Z_1,Y}\\
    &\le \sum_{x\in X} \sum_{Z_1 \ni x} \norm{h_{Z_1}}\frac{F(d(X,Y))}{F(\text{diam}(Z_1))}\theta_{Z_1,Y}\\
    &\le F(d(X,Y)) \sum_{x\in X} \sum_{Z_1 \ni x} \frac{\norm{h_{Z_1}}}{F(\text{diam}(Z_1))}\\
    &\le |X| s_0 F(d(X,Y)),
\end{align}
where $s_0$ is,
\begin{align}
    s_0=\sup_x \sum_{Z_1 \ni x} \frac{\norm{h_{Z_1}}}{F(\text{diam}(Z_1))}
\end{align}
Note that the proposition assumption that $s$ is bounded ensures that $s_0$ would be bounded as well. Similarly, we may bound the double sum as,
\begin{align}
    &\sum_{Z_1: Z_1 \cap X\ne \emptyset}\norm{h_{Z_1}}\sum_{Z_2: Z_2 \cap Z_1\ne \emptyset, Z_2 \cap Y \ne \emptyset }\norm{h_{Z_2}}\le \sum_{x\in X} \sum_{Z_1 \ni x} \sum_{z_1\in Z_1} \sum_{Z_2 \ni z_1}  \norm{h_{Z_1}} \norm{h_{Z_2}} \theta_{Z_2,Y}\\
    &\le \sum_{x\in X} \sum_{Z_1 \ni x} \sum_{z_1\in Z_1} \sum_{Z_2 \ni z_1}  \norm{h_{Z_1}}\frac{F(d(X,z_1))}{F(\text{diam}(Z_1))} \norm{h_{Z_2}}\frac{F(d(z_1,Y))}{F(\text{diam}(Z_2))} \theta_{Z_2,Y}\\
    &\le F(d(X,Y))\sum_{x\in X} \sum_{Z_1 \ni x} \sum_{z_1\in Z_1} \sum_{Z_2 \ni z_1}  \frac{\norm{h_{Z_1}}}{F(\text{diam}(Z_1))} \cdot\frac{\norm{h_{Z_2}}}{F(\text{diam}(Z_2))} \\
    &\le F(d(X,Y))\sum_{x\in X} \sum_{Z_1 \ni x} \sum_{z_1\in Z_1}  \frac{\norm{h_{Z_1}}}{F(\text{diam}(Z_1))} ~s_0 \\
    &\le F(d(X,Y))\sum_{x\in X} \sum_{Z_1 \ni x}  \frac{\norm{h_{Z_1}}~|Z_1|}{F(\text{diam}(Z_1))} ~s_0 \\
    &\le F(d(X,Y))\sum_{x\in X}~s ~s_0 \\
    &\le F(d(X,Y)) |X| ~s~s_0\label{eq_double_bound}\\
\end{align}
To go from the second line to the third line, we used
\begin{align}
    F(d(X,z_1))F(d(z_1,Y))&=F(d(x_0,z_1))F(d(z_1,y_0))\le F(d(x_0,z_1)+d(z_1,y_0))\\
    &\le F(d(x_0,y_0))\le F(d(X,Y)),
\end{align}
where $x_0\in X$ and $y_0\in Y$ are points for which $d(X,z_1)=d(x_0,z_1)$ and $d(z_1,Y)=d(z_1,y_0)$ respectively. The first inequality is obtained by using the assumption about $F(r)$ that $F(x)\,F(y) \le F(x+y)$, the next is obtained from utilizing triangle inequality for distance plus the fact that $F(r)$ is montonic decreasing.  The triple sum can be bounded similarly:
\begin{align}
    &\sum_{Z_1: Z_1 \cap X\ne \emptyset}\norm{h_{Z_1}}\sum_{Z_2: Z_2 \cap Z_1\ne \emptyset}\norm{h_{Z_2}}\sum_{Z_3: Z_3 \cap Z_2\ne \emptyset, Z_3 \cap Y \ne \emptyset }\norm{h_{Z_2}}\\
    &\le \sum_{x\in X} \sum_{Z_1 \ni x} \sum_{z_1\in Z_1} \sum_{Z_2 \ni z_1}\sum_{z_2\in Z_2}\sum_{Z_3\ni z_2}\\
    &\quad\norm{h_{Z_1}}\frac{F(d(X,z_1))}{F(\text{diam}(Z_1))} \norm{h_{Z_2}}\frac{F(d(z_1,z_2))}{F(\text{diam}(Z_2))}
    \norm{h_{Z_3}}\frac{F(d(z_2,Y))}{F(\text{diam}(Z_3))}\theta_{Z_3,Y}\\
    &\le F(d(X,Y)) \sum_{x\in X} \sum_{Z_1 \ni x} \sum_{z_1\in Z_1} \sum_{Z_2 \ni z_1}\sum_{z_2\in Z_2}\sum_{Z_3\ni z_2}\\
    &\quad\frac{\norm{h_{Z_1}}}{F(\text{diam}(Z_1))} \frac{\norm{h_{Z_2}}}{F(\text{diam}(Z_2))}
    \frac{\norm{h_{Z_3}}}{F(\text{diam}(Z_3))}\\
    &\le F(d(X,Y)) |X| s^2 s_0
\end{align}
The rest of the series can be bounded in a similar way. Therefore, we find
\begin{align}
    C_B(X,t)\le 2\norm{B}|X| (s_0/s) \qty(e^{2s\,t}-1)F(d(X,Y))
\end{align}
Lastly, by noting that $s_0/s\le 1$ and using the definition of $C_B(X,t)$, we arrive at,
\begin{equation}
    \norm{[A(t),B]}\le 2\norm{A}\norm{B}|X|\qty(e^{2s\,t}-1)F(d(X,Y))
\end{equation}
For $t<0$, Eq.\eqref{eq_rec_AB_positive_t} will be replaced by,
\begin{align}
    \norm{[A(t),B]}&\le \norm{[A,B]}+2\norm{A}\int_0^{t} (-\dd s) \norm{ [U^\dagger(s,0)\,I_X(s)\,U(s,0),B]}.
\end{align}
The rest of the proof is the same, except that $e^{-2s\,t}$ replaces $e^{-2s\,t}$ in the Lieb-Robinson bound.
\end{proof}

There are two special cases of the above theorem which we use in this work:
\begin{lemma}\label{lm_LR_bound_for_exp_tail}
\textbf{LR bound  for interactions with exponential tails on general graphs:}
Consider the the Hamiltonian $H_s$ to be $H_s=H_0+s V$, where $H_0=\sum_u Q_u$ is a sum of local projectors (acting on nearest neighbors), $s\in[0,1]$ and $V$ has strength $J$ and tail $e^{-\mu r}$. For any two operators $A$ and $B$ with supports $X$ and $Y$ respectively (with $d(X,Y)>0$) we have,
\begin{equation}
     \norm{[e^{iH_s t}Ae^{-iH_st},B]}\le 2\norm{A}\norm{B}|X|e^{\frac{\mu}{4}(v |t|-d(X,Y))},
\end{equation}
with $v$ given by the following sum,
\begin{align}\label{eq_LR_velocity_exp_tail}
    v=\frac{8(e^\mu+s\,J)}{\mu} \sum_r \Gamma(r)^2 e^{-\frac{\mu}{2}r}.
\end{align}
\end{lemma}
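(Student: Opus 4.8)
The plan is to derive Lemma~\ref{lm_LR_bound_for_exp_tail} as an immediate application of Proposition~\ref{prp_LR_general}; the only real tasks are to repackage $H_s$ into the form $\sum_Z h_Z$ with controlled operator norms, to choose a convenient reproducing function $F$, and to bound the single sum that appears in the hypothesis of the proposition.

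First I would rewrite $H_s=\sum_u Q_u+s\sum_u\sum_{r\ge 1}V_{u,r}$ as $H_s=\sum_{u,r\ge 1}h_{u,r}$ with $h_{u,1}=Q_u+sV_{u,1}$ and $h_{u,r}=sV_{u,r}$ for $r\ge 2$. Then $h_{u,r}$ is supported on $B_u(r)$, which has diameter at most $2r$, and $\norm{h_{u,1}}\le 1+sJe^{-\mu}=(e^\mu+sJ)e^{-\mu}$ while $\norm{h_{u,r}}\le sJe^{-\mu r}\le(e^\mu+sJ)e^{-\mu r}$ for $r\ge2$. In other words, absorbing the unit-norm projectors $Q_u$ turns $H_s$ into an interaction of ``strength'' $e^\mu+sJ$ and tail $e^{-\mu r}$ --- this repackaging is exactly where the prefactor $e^\mu+sJ$ in the Lieb--Robinson velocity originates. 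Note also that $H_s$ is time-independent, so Proposition~\ref{prp_LR_general} (stated for time-dependent Hamiltonians) applies with $U(t,0)=e^{-iH_s t}$ and $\sup_{s'}\norm{h_Z(s')}=\norm{h_Z}$.

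Next I would invoke Proposition~\ref{prp_LR_general} with $F(r)=e^{-\mu r/4}$, which is monotone decreasing and satisfies $F(x)F(y)=F(x+y)$. Its hypothesis asks for a finite bound on $\sup_x\sum_{Z\ni x}\norm{h_Z}\,|Z|/F(\mathrm{diam}(Z))$ (I will call this bound $\sigma$, reserving the proposition's symbol $s$ for the perturbation parameter). For a term $h_{u,r}$ one has $|B_u(r)|\le\Gamma(r)$ and $\mathrm{diam}(B_u(r))\le 2r$, hence $F(\mathrm{diam}(Z))^{-1}\le e^{\mu r/2}$; and for a fixed site $x$ the number of centers $u$ with $x\in B_u(r)$ equals $|B_x(r)|\le\Gamma(r)$. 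Combining,
\begin{align}
\sup_x\sum_{Z\ni x}\frac{\norm{h_Z}\,|Z|}{F(\mathrm{diam}(Z))}
&\le\sum_{r\ge1}\Gamma(r)\cdot(e^\mu+sJ)e^{-\mu r}\cdot\Gamma(r)\cdot e^{\mu r/2}\nonumber\\
&=(e^\mu+sJ)\sum_{r\ge1}\Gamma(r)^2e^{-\mu r/2}=:\sigma .
\end{align}
The two powers of $\Gamma$ --- one from the size of the support $Z$, one from counting the balls of radius $r$ through a fixed point --- are precisely what produce $\Gamma(r)^2$ in the stated velocity, and $\sigma=\tfrac{\mu}{8}v$ is finite exactly when $\mu$ is large enough relative to the growth of $\Gamma$, the finiteness assumption invoked in Theorem~\ref{th_main}.

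Finally, Proposition~\ref{prp_LR_general} gives $\norm{[e^{iH_s t}Ae^{-iH_s t},B]}\le 2\norm{A}\norm{B}|X|(e^{2\sigma|t|}-1)e^{-\mu d(X,Y)/4}$; using $e^{2\sigma|t|}-1\le e^{2\sigma|t|}$ and substituting $2\sigma=\tfrac{\mu}{4}v$ collapses this to $2\norm{A}\norm{B}|X|\,e^{\frac{\mu}{4}(v|t|-d(X,Y))}$, as claimed. I do not expect a genuine obstacle; the only delicate points are the bookkeeping that absorbs the norm-one projectors into the effective strength $e^\mu+sJ$, the appearance of $\Gamma(r)^2$ rather than $\Gamma(r)$, and the calibration of the decay rate of $F$ (here $\mu/4$): it must be slow enough that after paying $F(\mathrm{diam})^{-1}\le e^{\mu r/2}$ a net factor $e^{-\mu r/2}$ remains to be summed against $\Gamma(r)^2$, yet fast enough to leave the nontrivial spatial decay $e^{-\mu d(X,Y)/4}$ in the conclusion.
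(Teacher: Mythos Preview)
Your proposal is correct and follows essentially the same route as the paper: apply Proposition~\ref{prp_LR_general} with $F(r)=e^{-\mu r/4}$, absorb the projectors $Q_u$ into the $r=1$ terms to get the effective strength $e^\mu+sJ$, and bound the key sum by $(e^\mu+sJ)\sum_r\Gamma(r)^2e^{-\mu r/2}$ using $|B_u(r)|\le\Gamma(r)$, $\mathrm{diam}(B_u(r))\le 2r$, and $|\{u:x\in B_u(r)\}|\le\Gamma(r)$. The only cosmetic difference is that you spell out the absorption $h_{u,1}=Q_u+sV_{u,1}$ and the identification $2\sigma=\tfrac{\mu}{4}v$ more explicitly than the paper does.
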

\begin{proof}
First note that we can write $H$ as,
\begin{equation}
    H_s=\sum_{u,r} h_{u,r}, \quad \norm{h_{u,r}}\le (e^\mu+s\,J)e^{-\mu r},
\end{equation}
where the $e^\mu$ term in the coefficient appears to account for projections in $H_0$. Then, by using Proposition \ref{prp_LR_general} with $F(r)=\exp(-\frac{\mu}{4}r)$, we find that for operators $A$ and $B$ with supports $X$ and $Y$ respectively (with $d(X,Y)>0$) we have,
\begin{equation}
     \norm{[e^{iH_s t}Ae^{-iH_st},B]}\le 2\norm{A}\norm{B}|X|e^{\frac{\mu}{4}(v |t|-d(X,Y))},
\end{equation}
where $v$ should upper bound the following sum,
\begin{align}
    \frac{8}{\mu} \sup_x \sum_{Z\ni x} \frac{\norm{h_Z}|Z|}{F(\text{diam}(Z))}.
\end{align}
Let $\theta_{Z,x}=\theta_{x,Z}$ be defined to be $1$ if $x\in Z$ and $0$ otherwise. Then,
\begin{align}
    \sum_{Z\ni x} \frac{\norm{h_Z}|Z|}{F(\text{diam}(Z))}&=\sum_{r\ge 0} \sum_{u} \theta_{B_u(r),x} \frac{\norm{h_{u,r}}|B_u(r)|}{F(\text{diam}(B_u(r)))}\\
    &\le  \sum_{r\ge 0} \sum_{u} \theta_{u,B_x(r)} \frac{\norm{h_{u,r}}\Gamma(r)}{F(2r)}\label{eq_lr_velocity_expression}\\
    &\le (e^\mu+s\,J) \sum_{r\ge 0} \Gamma(r)^2 e^{-\mu r/2},
\end{align}
where to get the inequality in the second line, we have used the facts that $\theta_{B_u(r),x}=\theta_{u,B_x(r)}$ and $\text{diam}(B_u(r))\le 2r$.
\end{proof}

\begin{lemma}\label{lm_LR_bound_for_eta_tail}
\textbf{LR bound  for interactions with $\eta$ tails on general graphs:}
Consider the Hamiltonian $V=V(t)$ which for all $t\le 1$ has strength $J$ and tail $c\, r^m\, \Gamma(r)^n \eta_a(r/b)$ for constants $c$, $a>0$, $b>0$, $m$ and $n$. Let $U(0,t)$ be the evolution operator generated by $V$ from $0$ to $t\le 1$. For any operators $A$ and $B$ with supports $X$ and $Y$ respectively (with $d(X,Y)>0$) we have,
\begin{equation}
     \norm{[U(0,t)^\dagger\,A\,U(0,t),B]}\le 2\norm{A}\norm{B}|X|~e^{2 v\,|t|}~\eta_{a/2}(d(X,Y)/(2b)),
\end{equation}
with $v$ given by the following sum,
\begin{align}
    v=c\,J\,\sum_r r^m\Gamma(r)^{2+n}~ \eta_{a/2}(r/b).
\end{align}
\end{lemma}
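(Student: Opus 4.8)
The plan is to deduce the statement directly from Proposition \ref{prp_LR_general}, applied with the weight function $F(r)=\eta_{a/2}\!\left(r/(2b)\right)$, in exactly the same spirit as the proof of Lemma \ref{lm_LR_bound_for_exp_tail} (which used $F(r)=e^{-\mu r/4}$). First I would record the two elementary properties of $\eta_a$, see Eq.\eqref{eq_eta_a_definition}, that make this choice admissible. Monotonicity: the exponent $x\mapsto x/\ln^2(e^2+x)$ is strictly increasing, since its derivative has numerator $\ln(e^2+x)-2x/(e^2+x)$, which is positive for all $x\ge 0$ because $\ln(e^2+x)\ge 2>2x/(e^2+x)$; hence $F$ is monotone decreasing. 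Submultiplicativity: because $\ln^2(e^2+x+y)\ge\ln^2(e^2+x)$, and similarly with the roles of $x$ and $y$ exchanged, one gets
\begin{align}
\frac{x}{\ln^2(e^2+x)}+\frac{y}{\ln^2(e^2+y)}\ \ge\ \frac{x+y}{\ln^2(e^2+x+y)},
\end{align}
that is, $\eta_a(x)\eta_a(y)\le\eta_a(x+y)$; rescaling the argument preserves this, so $F(x)F(y)\le F(x+y)$.

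Next I would decompose $V(t)=\sum_{u,r}V_{u,r}(t)$ with $V_{u,r}(t)$ supported on $Z=B_u(r)$ and $\norm{V_{u,r}(t)}\le J\,c\,r^{m}\,\Gamma(r)^{n}\,\eta_a(r/b)$ for all $t\le 1$, and bound the quantity controlling the Lieb-Robinson velocity in Proposition \ref{prp_LR_general}. Using $\text{diam}(B_u(r))\le 2r$ so that $F(\text{diam}(B_u(r)))\ge F(2r)=\eta_{a/2}(r/b)$, together with $|B_u(r)|\le\Gamma(r)$ and the observation that $x\in B_u(r)$ iff $u\in B_x(r)$ (so at most $\Gamma(r)$ values of $u$ contribute for each $r$), one obtains
\begin{align}
\sup_x\sum_{Z\ni x}\frac{\norm{h_Z}\,|Z|}{F(\text{diam}(Z))}\ \le\ \sum_r\Gamma(r)\cdot\frac{J\,c\,r^{m}\,\Gamma(r)^{n+1}\,\eta_a(r/b)}{\eta_{a/2}(r/b)}\ =\ J\,c\sum_r r^{m}\,\Gamma(r)^{n+2}\,\eta_{a/2}(r/b)\ =\ v,
\end{align}
where the key simplification is $\eta_a(r/b)/\eta_{a/2}(r/b)=\eta_{a/2}(r/b)\le 1$, which both keeps the sum convergent and reproduces exactly the velocity $v$ appearing in the statement. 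Feeding $s=v$ and $F(d(X,Y))=\eta_{a/2}(d(X,Y)/(2b))$ into Proposition \ref{prp_LR_general} and bounding $e^{2v|t|}-1\le e^{2v|t|}$ yields the asserted inequality; the distinction between $U(0,t)$ and $U(t,0)$ is immaterial, since the estimate depends only on $|t|$ and Proposition \ref{prp_LR_general} covers both signs of $t$.

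There is no real obstacle here --- the argument is a direct specialization of Proposition \ref{prp_LR_general} --- but the one point requiring care is the bookkeeping of the decay rate: one must use the halved rate $a/2$ together with the rescaled argument $r/(2b)$ in $F$ (rather than the full rate $a$ and argument $r/b$), so that after dividing the interaction tail by $F(\text{diam}(Z))$ what remains is still a summable, polynomially corrected $\eta$-tail. This is precisely the sub-exponential analogue of the harmless rate loss one always incurs when matching an exponential interaction tail $e^{-\mu r}$ against a Lieb-Robinson weight $e^{-\mu r/4}$ in the proof of Lemma \ref{lm_LR_bound_for_exp_tail}.
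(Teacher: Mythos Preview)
Your proposal is correct and follows essentially the same approach as the paper: verify that $F(r)=\eta_{a/2}(r/(2b))$ is monotone decreasing and submultiplicative, then apply Proposition~\ref{prp_LR_general} and compute the velocity via the same bookkeeping as in Eq.~\eqref{eq_lr_velocity_expression}, using $\mathrm{diam}(B_u(r))\le 2r$, $|B_u(r)|\le\Gamma(r)$, and the identity $\eta_a(r/b)/\eta_{a/2}(r/b)=\eta_{a/2}(r/b)$. The paper's proof is identical in structure and in the choice of $F$.
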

\begin{proof}
First note that for any $x,y\ge 0$, we have
\begin{align}
    \eta_a(x)\eta_a(y)&=e^{-a~ x/\ln^2(e^2+x)}e^{-a~y/\ln^2(e^2+y)}\\
    &\le e^{-a~x/\ln^2(e^2+x+y)}e^{-a~y/\ln^2(e^2+y+x)}=e^{-a (x+y)/\ln^2(e^2+x+y)}\\
    &=\eta_a(x+y).
\end{align}
Clearly the same is true for $\eta_a(x/b)$ for $b>0$. Also, it is easy to see that $\frac{x}{\ln^2(e^2+x)}$ is monotonic increasing for $x\ge 0$,
\begin{align}
    \frac{\dd}{\dd x}\qty[\frac{x}{\ln^2(e^2+x)}]=\frac{\ln(e^2+x)-2+\frac{2e^2}{e^2+x}}{\ln^3(e^2+x)}>0,
\end{align}
and thus $\eta_a(r/b)$ is monotonic decreasing, for $b> 0$. Therefore, by using Proposition \ref{prp_LR_general} with $F(r)=\eta_{a/2}(r/(2b))$, we find that for operators $A$ and $B$ with supports $X$ and $Y$ respectively (with $d(X,Y)>0$) we have,
\begin{equation}
     \norm{[e^{iH_s t}Ae^{-iH_st},B]}\le 2\norm{A}\norm{B}|X|~e^{2 v\,|t|}~\eta_{a/2}(d(X,Y)/(2b)),
\end{equation}
with $v$ given by the following sum,
\begin{align}
    v=cJ \sum_r r^m~\Gamma(r)^{2+n}~ \eta_{a/2}(r/b),
\end{align}
which readily follows from   substituting $F(r)=\eta_{a/2}(r/2b)$ in Eq.\eqref{eq_lr_velocity_expression}.
\end{proof}

\subsection{Lieb-Robinson bounds and local decomposition}
Let $A$ be an operator almost supported on $X=B_{u_0}(r_0)$, meaning that for any operator $O$ with $d_O=d(X,\text{supp}(O))>0$ one has the following bound
\begin{align}
 \norm{[A,O]}\le c_A\norm{O}|X| f(d_O),
\end{align}
with $f(r)$ is a decaying function of $r$ and $c_A$ is a constant independent of $O$. As shown in \cite{bravyi2006lieb}, if $f(r)$ is decaying fast enough, one can use this bound to find a local decomposition for $A$
\begin{align}
    A=\sum_{r\ge r_0} A_{u_0,r},
\end{align}
where $A_{u,r}$ is supported on $B_u(r)$ and its norm vanishes as one increases $r$. To see this, define $A'_{u_0,r} $ for $r\ge r_0$ to be 
\begin{align}
    A'_{u_0,r}=\frac{1}{\tr(I_{S(r)})}\tr_{S(r)}(A)\otimes I_{S(r)}
\end{align}
where $S(r)$ is the complement of $B_{u_0}(r)$ and $I_{S(r)}$ is the identity operator acting on the qubits in region $S(r)$.
 $A'_{u_0,r}$ is clearly supported only on $B_{u_0}(r)$. Equivalently, $A'_{u_0,r}$ can be written as,
\begin{align}
    A'_{u_0,r}=\int \dd \mu(U) ~ U\, A U^\dagger , 
\end{align}
where the integral is over all unitaries that act trivially on $B_{u_0}(r)$ and $\mu(U)$ is the Haar measure. Therefore we have
\begin{align}
    \norm{A-A'_{u_0,r}}\le \int \dd \mu(U) \norm{[A,U]} \le c_A \Gamma(r_0) f(r-r_0+1),
\end{align}
where we have used the fact that $d(X,S(r))\ge r-r_0+1$, and that $f$ is a monotone decreasing function. This in turn shows that we can approximate $A$ by a strictly local operator $A'_{u_0,r}$ with an error that is controlled by $f$. To get an exact local decomposition, we can simply define $A_{u_0,r}$ as,
\begin{align}\label{eq_O_ur}
    A_{u_0,r}=A'_{u_0,r}-A'_{u_0,r-1},
\end{align}
for $r\ge r_0+1$, and define $A_{u_0,r_0}=A'_{u_0,r_0}$. Then it is clear that,
\begin{align}
    A=\sum_{r\ge r_0} A_{u_0,r},
\end{align}
where for $r>r_0$ we have,
\begin{align}
    \norm{A_{u_0,r}}&= \norm{A'_{u_0,r}-A'_{u_0,r-1}}\le \norm{A-A'_{u_0,r}}+\norm{A-A'_{u_0,r-1}}\\
    &\le 2 c_A \Gamma(r_0)f(r-r_0).
\end{align}
We can use this result to find the new local decomposition of a quasi-local operator under unitary evolution. We state that as a Lemma for future references:

\begin{lemma}\label{lm_LR_local_decom}
\textbf{Locality of operators under transformations with a LR bound:}
Suppose $\mathcal{L}$ is a linear superoperator such that for any operator $A$,
\begin{align}
    \norm{\mathcal{L}(A)}\le c\norm{A},
\end{align}
where $c$ is a constant.  Moreover, assume that if $A$ is supported on $B_u(r)$, then $\mathcal{L}(A)$ is almost supported on $B_u(r)$ as well, meaning that for any operator $O$ with $d_O=d(B_{u}(r),\text{supp}(O))>0$ the following LR bound holds,
\begin{align}
 \norm{[\mathcal{L}(A),O]}\le C \norm{A}\norm{O}\Gamma(r) g(d_O),
\end{align}
where $g$ is a decreasing function and $C$ is a constant. Let $V_u$ be an operator localized around $u$ with strength $J$ and tail $f(r)$. Then $\mathcal{L}(V_u)$ is localized around $u$ and has strength $J$ and tail $h(r)$, where
\begin{align}
    h(r)=\Bigg(c+ C~\Big[f(1)+g(1)\Big]\Bigg)~r~\Gamma(r)\max\{f(r/2),g(r/2)\}.
\end{align}
In particular, if $V$ is an operator with strength $J$ and tail $f(r)$, $\mathcal{L}(V)$ has strength $J$ and tail $h(r)$.
\end{lemma}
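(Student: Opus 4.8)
The plan is to decompose $V_u$ into strictly local pieces, apply $\mathcal{L}$ to each piece, re-localize each image with the partial-trace construction around Eq.~\eqref{eq_O_ur}, and then regroup the resulting pieces according to their support radius. Since $\Gamma$ is only required to upper bound ball sizes and $f,g$ are tails, I will assume throughout (without loss of generality) that $\Gamma$ is non-decreasing and that $f$ and $g$ are non-increasing.

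First I would push $\mathcal{L}$ through the decomposition $V_u=\sum_{r\ge1}V_{u,r}$ of Definition~\ref{def_local_prtb_on_graph}, where $V_{u,r}$ is supported on $B_u(r)$ with $\norm{V_{u,r}}\le Jf(r)$: since $\norm{\mathcal{L}(A)}\le c\norm{A}$, $\mathcal{L}$ is bounded, hence continuous, so $\mathcal{L}(V_u)=\sum_r\mathcal{L}(V_{u,r})$. Fixing $r$ and writing $A=\mathcal{L}(V_{u,r})$, we have $\norm{A}\le cJf(r)$, and applying the second hypothesis on $\mathcal{L}$ to $V_{u,r}$ (supported on $B_u(r)$) gives $\norm{[A,O]}\le CJf(r)\,\norm{O}\,\Gamma(r)\,g(d_O)$ for every $O$ with $d_O=d(B_u(r),\text{supp}(O))>0$. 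This is exactly the ``almost supported on $B_u(r)$'' hypothesis of the partial-trace decomposition preceding Eq.~\eqref{eq_O_ur}, with the effective weight $c_A\,\Gamma(r_0)$ there replaced by $CJf(r)\,\Gamma(r)$ and the decay function taken to be $g$. It yields $A=\sum_{r'\ge r}A_{u,r'}$ with $A_{u,r'}$ supported on $B_u(r')$, with $\norm{A_{u,r}}\le\norm{A}\le cJf(r)$ (a Haar average of unitary conjugates of $A$), and with $\norm{A_{u,r'}}\le 2CJf(r)\,\Gamma(r)\,g(r'-r)$ for $r'>r$.

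Then I would regroup by setting $(\mathcal{L}(V_u))_{u,R}:=\sum_{r\le R}\big(\mathcal{L}(V_{u,r})\big)_{u,R}$, which is supported on $B_u(R)$, so that interchanging the two sums gives $\mathcal{L}(V_u)=\sum_R(\mathcal{L}(V_u))_{u,R}$. Using $\Gamma(r)\le\Gamma(R)$ for $r\le R$,
\[
\norm{(\mathcal{L}(V_u))_{u,R}}\le cJf(R)+2CJ\,\Gamma(R)\sum_{r<R}f(r)\,g(R-r).
\]
I bound the convolution by splitting at $r\approx R/2$: for $r\le R/2$ use $g(R-r)\le g(R/2)$ and $f(r)\le f(1)$; for $R/2<r<R$ use $f(r)\le f(R/2)$ and $g(R-r)\le g(1)$. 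Each range contributes at most $\sim(R/2)$ terms, so $\sum_{r<R}f(r)\,g(R-r)\le\tfrac{R}{2}\big(f(1)+g(1)\big)\max\{f(R/2),g(R/2)\}$; combined with $f(R)\le\max\{f(R/2),g(R/2)\}$ and $R\,\Gamma(R)\ge1$ this gives $\norm{(\mathcal{L}(V_u))_{u,R}}\le J\,h(R)$ with $h(R)=\big(c+C[f(1)+g(1)]\big)R\,\Gamma(R)\max\{f(R/2),g(R/2)\}$ (the factors of $2$ cancel the $1/2$), which is the claimed tail. Hence $\mathcal{L}(V_u)$ is localized around $u$ with strength $J$ and tail $h$; for general $V=\sum_uV_u$, linearity of $\mathcal{L}$ gives $\mathcal{L}(V)=\sum_u\mathcal{L}(V_u)$ with each term localized around $u$ with strength $J$ and tail $h$, so $\mathcal{L}(V)$ has strength $J$ and tail $h$ by Definition~\ref{def_local_prtb_on_graph}.

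The whole argument is bookkeeping once the partial-trace decomposition is in hand; the only point that genuinely needs care is that the \emph{strength} stays exactly $J$ (not $cJ$), which works because $c$ and $C$ are absorbed into the tail $h$ while each $\norm{V_{u,r}}$ contributes only a single power of $J$. The remaining points — the monotonicity normalizations of $\Gamma,f,g$, the convex-average bound $\norm{A'_{u,r}}\le\norm{A}$, and the interchange of summations (legitimate whenever $h$ is summable, as it is in all cases of interest) — are routine.
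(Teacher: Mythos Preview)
Your proposal is correct and follows essentially the same route as the paper: apply the partial-trace re-localization of Eq.~\eqref{eq_O_ur} to each $\mathcal{L}(V_{u,r})$, regroup the pieces by support radius, and bound the resulting convolution $\sum_{r<R} f(r)\,g(R-r)$ by splitting at $r\approx R/2$. The only cosmetic difference is that you pull $\Gamma(r)\le\Gamma(R)$ out of the sum before splitting, whereas the paper keeps $\Gamma(\rho)$ inside and bounds it piecewise; both lead to the same tail $h(R)$.
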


\begin{proof}
Since $V$ has strength $J$ and tail $f(r)$ we can write $V$ as $V=\sum_{u,r} V_{u,r}$ with $V_{u,r}$ supported on $B_u(r)$ and $|V_{u,r}|\le J~f(r)$.
Let $V'_{u,r;\rho}=\frac{1}{\tr(I_{S(r)})}\tr_{S(r)}(\mathcal{L}( V_{u,\rho}))\otimes I_{S(r)}$, for $r\ge \rho$ with $S(r)$ denoting the complement of $B_u(r)$. Let $V''_{u,r;\rho}=V'_{u,r;\rho}-V'_{u,r;\rho-1}$ for $r\ge \rho+1$ and let $V''_{u,\rho;\rho}=V'_{u,\rho;\rho}$. The discussion at the beginning of this section shows that,
\begin{align}
   \mathcal{L}(V_{u,\rho}) = \sum_{r\ge \rho} V''_{u,r;\rho},
\end{align}
where $V''_{u,r;\rho}$ is supported on $B_u(r)$ and $\norm{V''_{u,r;\rho}}\le 2C J f(\rho)\Gamma(\rho)g(r-\rho)$ for $r>\rho$. For $r=\rho$ we have,
\begin{align}
    \norm{V''_{u,\rho;\rho}}=\norm{V'_{u,\rho;\rho}}=\norm{\int \dd \mu(U) U \mathcal{L}(V_{u,\rho}) U^\dagger}\le \norm{\mathcal{L}(V_{u,\rho})}\le c J f(\rho).
\end{align}
We define $ \widetilde{V}_{u,r}$ as
\begin{align}
    \widetilde{V}_{u,r}=\sum_{\rho\le r}V''_{u,r;\rho}.
\end{align}
Note that
\begin{align}
    \mathcal{L}(V_u)&=\sum_\rho  \mathcal{L}(V_{u,\rho}) = \sum_\rho \sum_{r\ge \rho} V''_{u,r;\rho}\\
    &=\sum_r \qty (\sum_{\rho\le r}V''_{u,r;\rho})=\sum_{r} \widetilde{V}_{u,r}.
\end{align}
Moreover, 
\begin{align}
    \norm{\widetilde{V}_{u,r}}&\le\sum_{\rho\le r} \norm{V''_{u,r;\rho}}\le J\,\qty[cf(r)+ 2C \sum_{\rho<r}  f(\rho)\Gamma(\rho)g(r-\rho)]\\
    &\le J\,\Bigg[cf(r)+ 2C \sum_{\rho< r/2}  f(\rho)\Gamma(\rho)g(r-\rho)+2C \sum_{r/2\le \rho< r}  f(\rho)\Gamma(\rho)g(r-\rho)\Bigg]\\
    &\le J\,\Bigg[cf(r/2)+ 2C~\frac{r}{2}\cdot f(1)\Gamma(r/2)g(r/2)+2C~\frac{r}{2}\cdot f(r/2)\Gamma(r)g(1)\Bigg]\\
    &\le J\,\Bigg[c+ C~\Big[f(1)+g(1)\Big]\Bigg]r\Gamma(r)\max\{f(r/2),g(r/2)\}
\end{align}
\end{proof}

\subsection{Lieb-Robinson bounds for filtered interactions}
\begin{lemma}\label{lm_LR_bound_for_filters}
Let $A$ be a local operator supported on $X$. Let $U(t)$ be a unitary operator such that for any operator $O$ with $d_O=(X,\text{supp}(O))>0$ one has the following LR bound,
\begin{align}
    \norm{[U(t)^\dagger A U(t), O]}\le c ~\norm{A}~ \norm{O} ~|X|~ e^{\mu(v |t|-d_O)},
\end{align}
for some positive constants $c$,$\mu$ and $v$. Assume $F(t)$ is a function such that $\int \dd t' |F(t')|$ is finite and for any $t>0$, one has $\int_{|t'|>t} \dd t' |F(t')|\le c'~\eta_{a}(t)$ with $c'$ and $a$  positive constants. Let $\widetilde{A}$ denote the following
\begin{align}
    \widetilde{A}=\int_{-\infty}^{+\infty} \dd t F(t) U^\dagger(t) A U(t)
\end{align}
Then, for any operator $O$ with $d_O=(X,\text{supp}(O))>0$ we have the following LR bound for $\widetilde{A}$,
\begin{align}
    \norm{[\widetilde{A}, O]}\le C~\norm{A}~\norm{O}~|X|~\eta_{a}(d_O/(2v)),
\end{align}
for some constant $C$ which depends on $c$, $c'$, $\mu$, $v$ and $a$.
\end{lemma}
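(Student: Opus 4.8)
The plan is the standard ``split the time integral'' argument. It is convenient first to record that $\widetilde{A}$ is well defined and that the quantity $I_F:=\int_{-\infty}^{\infty}|F(t')|\,\dd t'$ is itself controlled by the data of the lemma: since $\eta_a(0)=1$ and $\int_{|t'|>t}|F(t')|\,\dd t'\le c'\,\eta_a(t)$ for every $t>0$, letting $t\to 0^+$ gives $I_F\le c'$. In particular $\norm{\widetilde{A}}\le I_F\,\norm{A}\le c'\norm{A}$, so $\norm{[\widetilde{A},O]}\le 2c'\norm{A}\norm{O}$ holds unconditionally; this already settles the bound for bounded $d_O$, and the content lies in the large-$d_O$ regime.

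Fix a cutoff $T>0$, to be chosen later, and decompose
\begin{align}
[\widetilde{A},O]=\int_{|t|\le T}\dd t\;F(t)\,[U^\dagger(t)AU(t),O]+\int_{|t|>T}\dd t\;F(t)\,[U^\dagger(t)AU(t),O].
\end{align}
On the bulk term, the hypothesized Lieb--Robinson bound together with $|t|\le T$ gives $\norm{[U^\dagger(t)AU(t),O]}\le c\,\norm{A}\norm{O}\,|X|\,e^{\mu(vT-d_O)}$, so its norm is at most $c\,I_F\,\norm{A}\norm{O}\,|X|\,e^{\mu(vT-d_O)}$. On the tail term, the trivial bound $\norm{[U^\dagger(t)AU(t),O]}\le 2\norm{A}\norm{O}$ and the decay hypothesis on $F$ give norm at most $2\norm{A}\norm{O}\int_{|t|>T}|F(t)|\,\dd t\le 2c'\,\norm{A}\norm{O}\,\eta_a(T)$.

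Now choose $T=d_O/(2v)$, so that $vT-d_O=-d_O/2$; the two contributions become $c\,I_F\,\norm{A}\norm{O}\,|X|\,e^{-\mu d_O/2}$ and $2c'\,\norm{A}\norm{O}\,\eta_a(d_O/(2v))$. The final step is to absorb the exponential into the $\eta$-tail: from $\eta_a(x)=\exp(-ax/\ln^2(e^2+x))$ one checks that $x\mapsto e^{-\mu x/2}/\eta_a(x/(2v))$ is continuous and positive on $[0,\infty)$ and tends to $0$ as $x\to\infty$ (its exponent equals $-\tfrac{\mu x}{2}+\tfrac{a\,x/(2v)}{\ln^2(e^2+x/(2v))}\to-\infty$), hence is bounded by a constant $C_0=C_0(\mu,v,a)$; evaluating at $x=d_O$ gives $e^{-\mu d_O/2}\le C_0\,\eta_a(d_O/(2v))$. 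Using $|X|\ge 1$ and $I_F\le c'$ we obtain
\begin{align}
\norm{[\widetilde{A},O]}\le (c\,c'\,C_0+2c')\,\norm{A}\,\norm{O}\,|X|\,\eta_a(d_O/(2v)),
\end{align}
which is the claim with $C=c\,c'\,C_0+2c'$, depending only on $c,c',\mu,v,a$. There is no serious obstacle here; the points that require a little care are the bookkeeping that forces $C$ to depend only on the listed constants (via $I_F\le c'$) and the elementary but essential fact that the near-exponential tail $\eta_a$ dominates the genuine exponential $e^{-\mu d_O/2}$ up to a constant, uniformly in $d_O>0$.
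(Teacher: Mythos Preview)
Your proof is correct and follows essentially the same approach as the paper: split the time integral at $T=d_O/(2v)$, apply the Lieb--Robinson bound on the inner piece and the trivial commutator bound plus the tail hypothesis on the outer piece, and then absorb the exponential $e^{-\mu d_O/2}$ into $\eta_a(d_O/(2v))$ via a uniform constant. Your treatment is in fact slightly tidier than the paper's in one bookkeeping point: you explicitly derive $I_F\le c'$ from the tail hypothesis at $t\to 0^+$, so that the final constant $C$ manifestly depends only on $c,c',\mu,v,a$, whereas the paper simply writes the $L^1$ norm of $F$ as an unspecified constant $c''$.
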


\begin{proof}
To see that, we can simply divide the integral into two parts:
\begin{align}
    \norm{[\widetilde{A},O]}&\le \int \dd t |F(t)|~\norm{[U^\dagger(t)~A~U(t),O]}\\
    &=\int_{|t|<d_O/(2v)} \dd t \cdots +\int_{|t|>d_O/(2v)} \dd t \cdots\\
    &\le c ~\norm{A}~ \norm{O} ~|X|~ e^{-\frac{\mu}{2}d_O} \int \dd t |F(t)|+2\norm {A}\norm{O}\int_{|t|>d_O/(2v)}\dd t~|F(t)|\\
    &\le  \norm {A}\norm{O}\qty[ c''\,|X|\, e^{-\frac{\mu}{2} d_O} + 2\,c'\,\eta_{a}(d_O/(2v))]\\
    &\le  \norm {A}\norm{O}|X|\qty[ c''\, e^{-\frac{\mu}{2} d_O} + 2\,c'\,\eta_{a}(d_O/(2v))]\\
    &\le C \norm {A}\norm{O}|X|\eta_{a}(d_O/(2v))
\end{align}
where in the last part we have used the fact that one can always find a constant $C'$ (which could depends on $\mu$, $a$ and $v$), such that $ e^{-\frac{\mu}{2} r}\le C'~\eta_{a}(r/(2v))$ for all $r\ge 0$.
\end{proof}

\subsection{Proof of Lemma \ref{lm_loc_decom_filter}}
\begin{proof}
Let $A$ and $B$ be local operators supported on $X$ and $Y$ respectively, with $d(X,Y)>0$. Due to Lemma \ref{lm_LR_bound_for_exp_tail}, we have the following LR bound,
\begin{align}
    \norm{[e^{i\,H\,t}Ae^{-i\,H\,t},B]}\le 2\norm{A}\norm{B}|X|e^{\frac{\mu}{4}(v |t|-d(X,Y))},
\end{align}
with $v$ given as $v=\frac{8 J_1}{\mu}\sum_{r\ge 1}\Gamma(r)^2 e^{-\frac{\mu}{2}r}.$ Define $\mathcal{L}(A)$ as,
\begin{align}
    \mathcal{L}(A)=\int_{-\infty}^{\infty}\dd t~ F(t)~ e^{i\,H\,t}~A~e^{-i\,H\,t}.
\end{align}
Since $F(t)\in L^1(\mathbb{R})$, we have
\begin{align}
    \norm{\mathcal{L}(A)}\le \qty(\int \dd t |F(t)| )\norm{A}\le c_2 \norm{A},
\end{align}
for a constant $c_2$. Based on Lemma \ref{lm_LR_bound_for_filters}, $\mathcal{L}(A)$ satisfies the following LR bound,
\begin{align}
    \norm{[\mathcal{L}(A), B]}\le c_3~\norm{A}~\norm{B}~|X|~\eta_{a}(d(X,Y)/(2v)),
\end{align}
for a constant $c_3$ which depends on $\mu$,$v$ and $a$.  Therefore, based on Lemma \ref{lm_LR_local_decom}, $\widetilde{V}_u$ is localized around $u$, has strength $J$ and tail $\tilde{f}(r)$ given as,
\begin{align}
    \tilde{f}(r)=c\, r\, \Gamma(r) \max\{f(r/2),\eta_{a}(r/(4v))\},
\end{align}
for a constant $c$ which depends on $\mu$,$v$ and $a$.
\end{proof}

\subsection{Proof of Lemma \ref{lm_loc_decom_eta_unitary}}
\begin{proof}
Let $A$ and $B$ be local operators supported on $X$ and $Y$ respectively, with $d(X,Y)>0$. Then, based on Lemma \ref{lm_LR_bound_for_eta_tail}, $U^\dagger_t\, A\, U_t$ satisfies the following LR bound,
\begin{equation}
     \norm{[U_t^\dagger\,A\,U_t,B]}\le 2\norm{A}\norm{B}|X|~e^{2 v\,|t|}~\eta_{a/2}(d(X,Y)/(2b)),
\end{equation}
with $v$ given as $v=c_1\,J\,\sum_r r\Gamma(r)^{3}~ \eta_{a/2}(r/b)$.  Also trivially we have $\norm{U^\dagger_t\,A\,U_t}=\norm{A}$ for any operator $A$. Therefore, based on Lemma \ref{lm_LR_local_decom}, $U_t^\dagger V_u U_t$ is localized around $u$, has strength $J$ and tail $\tilde{f}(r)$ given as,
\begin{align}
    \tilde{f}(r)=c\, r\, \Gamma(r)\max\{f(r/2),\eta_{a/2}(r/(4b))\},
\end{align}
for a constant $c$ which depends on $a$, $b$, $c_1$ and $v t$.
\end{proof}

\section{Proof of Lemma \ref{lm_lcgc}}\label{apx_proof_of_lm_lcgc}
\begin{proof}
Let $\ket{\phi}$ and $\ket{\psi}$ be the states such that $\norm{O P}=\norm{O P\ket{\phi}}$ and $\norm{O P_C}=\norm{O P_C \ket{\psi}}$. Clearly, we should have $P\ket{\phi}=\ket{\phi}$ and $P_C\ket{\psi}=\ket{\psi}$. But it means that  $P_B\ket{\phi}=\ket{\phi}$ and  $P_B\ket{\psi}=\ket{\psi}$. Then, we have
\begin{align}
    \norm{O P}&=\tr(O^\dagger O \ketbra{\phi})^{1/2}=\tr(P_B\, O^\dagger O P_B \ketbra{\phi})^{1/2}\\
    &=\tr(cP_B\ketbra{\phi})^{1/2}=c^{1/2}=\tr(cP_B\ketbra{\psi})^{1/2}\\
    &=\tr(P_B\,O^\dagger O P_B\, \ketbra{\psi})^{1/2}=\tr(\,O^\dagger O\, \ketbra{\psi})^{1/2}=\norm{O P_C},
\end{align}
\end{proof}

\end{document}